\newtheorem{lemma}{Lemma}
\newtheorem{theorem}{Theorem}
\newtheorem{proposition}{Proposition}
\newtheorem{definition}{Definition}
\long\def\symbolfootnote[#1]#2{\begingroup%
\def\thefootnote{\fnsymbol{footnote}}\footnote[#1]{#2}\endgroup}
\begin{document}


\title{Capacity of a Class of State-Dependent Orthogonal Relay Channels}


\author{
  \IEEEauthorblockN{I\~naki Estella Aguerri and Deniz G\"{u}nd\"{u}z
  \\}
\IEEEauthorblockA{
\thanks{ 
This paper was presented in part at the IEEE Information Theory Workshop (ITW), Laussane, Switzerland, Sept. 2012  \cite{Estella2012CapacityClassState}.
I. Estella Aguerri is with the Mathematical and Algorithmic Sciences
Laboratory, France Research Center, Huawei Technologies Company, Ltd.,
Boulogne-Billancourt 92100, France E-mail: {\tt inaki.estella@huawei.com\tt}. 
Deniz G\"{u}nd\"{u}z is with the Department of Electrical and Electronic
Engineer, Imperial College London, London SW7 2BT, U.K. E-mail: {\tt  d.gunduz}{\tt @imperial.ac.uk}.
The research presented here was done during the Ph.D. studies of I\~naki Estella Aguerri at Imperial College London. }
}
}

\maketitle

\IEEEpeerreviewmaketitle

\begin{abstract}
The class of orthogonal relay channels in which the orthogonal channels connecting the source terminal to the relay and the destination, and the relay to the destination, depend on a state sequence, is considered. It is assumed that the state sequence is fully known at the destination while it is not known at the source or the relay. The capacity of this class of relay channels is characterized, and  shown to be achieved by the partial decode-compress-and-forward (pDCF) scheme. Then the capacity of certain binary and Gaussian state-dependent orthogonal relay channels are studied in detail, and it is shown that the compress-and-forward (CF) and partial-decode-and-forward (pDF) schemes are suboptimal in general. To the best
of our knowledge, this is the first single relay channel model for
which the capacity is achieved by pDCF, while
pDF and CF schemes are both suboptimal. Furthermore, it is shown that the capacity of the considered  class of state-dependent orthogonal relay channels is in general below the cut-set bound. The conditions under which pDF or CF suffices to meet the cut-set bound, and hence, achieve the capacity, are also derived. 
\end{abstract}

\begin{IEEEkeywords} Capacity, channels with state, relay channel, decode-and-forward, compress-and-forward, partial decode-compress-and forward.
\end{IEEEkeywords}

\section{Introduction}

We consider a state-dependent orthogonal relay channel, in which the channels connecting the source to the relay, and the source and the relay to the destination are orthogonal, and are governed by a state sequence, which is assumed to be known only at the destination. We call this model the \emph{state-dependent orthogonal relay channel with state information available at the destination}, and refer to it as the ORC-D model. See Figure \ref{fig:RelayChannelR0R1} for an illustration of the ORC-D channel model.

Many practical communication scenarios can be modelled by the ORC-D model. For example, consider a cognitive network with a relay, in which the transmit signal of the secondary user interferes simultaneously with the received primary user signals at both the relay and the destination. After decoding the secondary user's message, the destination obtains information about the interference affecting the source-relay channel, which can be exploited to decode the primary transmitter's message. Note that the relay may be oblivious to the presence of the secondary user, and hence, may not have access to the side information. Similarly, consider a mobile network with a relay (e.g., a femtostation), in which the base station (BS) operates in the full-duplex mode, and transmits on the downlink channel to a user, in parallel to the uplink transmission of a femtocell user, causing interference for the uplink transmission at the femtostation. While the femtostation, i.e., the relay, has no prior information about this interfering signal, the BS knows it perfectly and can exploit this knowledge to decode the uplink user's message forwarded by the femtostation.

The best known transmission strategies for the three terminal relay channel are the decode-and-forward (DF), compress-and-forward (CF) and partial decode-compress-and-forward (pDCF) schemes, which were all introduced by Cover and El Gamal in \cite{Cover:1979}. In DF, the relay decodes the source message and forwards it to the destination together with the source terminal. DF is generalized by the partial decode-and-forward (pDF) scheme in which the relay decodes and forwards only a part of the message. In the ORC-D model, pDF would be optimal when the channel state information is not available at the destination
\cite{Gamal2005OrthRelay}; however, when the state information is known at the destination, fully decoding and re-encoding the message transmitted on the source-relay link renders the channel state information at the destination useless. Hence, we expect that pDF is suboptimal for ORC-D in general.

In CF, the relay does not decode any part of the message, and
simply compresses the received signal and forwards the compressed bits to the destination using Wyner-Ziv coding followed by separate channel coding. Using CF in the ORC-D model allows the destination to exploit its knowledge of the state sequence; and hence, it can decode messages that may not be decodable by the relay. However, CF also forwards some noise to the destination, and therefore, may be suboptimal in certain scenarios. For example, as the dependence of the source-relay channel on the state sequence weakens, i.e., when the state information becomes less informative, CF performance is expected to degrade.

pDCF combines both schemes: part of the source message is
decoded by the relay, and forwarded, while the remaining signal is compressed and forwarded to the destination. Hence, pDCF can optimally adapt its transmission to the dependence of the orthogonal channels on the state sequence. Indeed, we show that  pDCF achieves the capacity in the ORC-D channel model, while pure DF and CF are in general suboptimal. The main results of the paper are summarized as follows:

\begin{itemize}
\item We derive an upper bound on the capacity of the ORC-D model, and show that it is achievable by the pDCF scheme. This characterizes the capacity of this class of relay channels.
\item  Focusing on the multi-hop binary and Gaussian models, we show that applying either only the CF or only the DF scheme is in general suboptimal.
\item We show that the capacity of the ORC-D model is in general below the cut-set bound. We identify the conditions under which pure DF or pure CF meet the cut-set bound. Under these conditions the cut-set bounds is tight, and either DF or CF scheme is sufficient to achieve the capacity.
\end{itemize}

While the capacity of the general relay channel is still an open
problem, there have been significant achievements within the last decade in understanding the capabilities of various transmission schemes, and the capacity of some classes of relay channels has been characterized. For example, DF is shown to be optimal for physically
degraded relay channels and inversely degraded relay channels in
\cite{Cover:1979}. In \cite{Gamal2005OrthRelay}, the capacity of the
orthogonal relay channel is characterized, and shown to be achieved by the
pDF scheme. It is shown in \cite{ElGamal1982SemideterministiRelay}
that pDF achieves the capacity of semi-deterministic relay channels
as well. CF is shown to achieve the capacity in deterministic
primitive relay channels in \cite{Kim2007CapacityClassRelay}.  While
all of these capacity results are obtained by using the cut-set
bound for the converse proof \cite{Cover:book}, the capacity of a
class of modulo-sum relay channels is characterized in
\cite{Aleksic2009CapRelayModSum}, and it is shown that the capacity,
achievable by the CF scheme, can be below the cut-set bound.
The pDCF scheme is shown to achieve the capacity of a class of diamond relay channels in \cite{Kang2011DiamChan}.

The state-dependent relay channel has also attracted considerable attention in the literature. Key to the investigation of the state-dependent relay channel model is whether the state sequence controlling the channel is known at the nodes of the network, the source, relay or the destination in a causal or non-causal manner. The relay channel in which the state information is  non-causally available only at the source is considered in \cite{
Khormuji2009CoopDownlinFreqReuse, Zaidi2010:RelayNoncausalStateSource}, and both causally and non-causally available state information is considered in \cite{Akhbari2010AchRateRegionsRelayPrivateMEssState}. The model in which the state is non-causally known only at the relay  is studied in \cite{Zaidi2010:RelayNoncausalStateSource_Journ} while causal and non-causal knowledge is considered in  \cite{Akhbari2010CFstrategyCausalState}. Similarly, the relay channel with state causally known at the source and relay is considered in  \cite{Mirmohseni2009CFstrategyforRelaywithState}, and  state non-causally known at the source, relay and destination in \cite{Khormuji2013StateDependentRelay}.  Recently a generalization of pDF, called the cooperative-bin-forward scheme, has been shown to achieve the capacity of state-dependent semi-deterministic relay channels with causal state information at the source and destination \cite{Kolte205arxiv}. The compound relay channel with informed relay and destination  are discussed in \cite{Simeone2009:CompoundInformedRelay} and \cite{Behboodi2009:RelInfRec}. The state-dependent relay channel with structured state has been considered in \cite{Bakanoglu2012StructuredInterference} and \cite{Bakanoglu2011:HalfDupRelayInterf}. 
 To the best of our knowledge, this is the first work that focuses on the state-dependent relay channel in which the state information is available only at the destination.

The rest of the paper is organized as follows. In Section II we
provide the system model and our main result. Section III is devoted to the proofs of the achievability and converse for the main result. In Section IV, we provide two examples demonstrating the suboptimality of pDF and CF schemes on their own, and in Section V
we show that the capacity is in general below the cut-set bound, and we provide conditions under which pure DF and CF schemes
meet the cut-set bound. Finally, Section VII concludes the paper.

We use the following notation in the rest of the paper:
$X_i^j\triangleq(X_i, X_{i+1}, ..., X_j)$ for $i < j$,
$X^n\triangleq(X_1,  ..., X_n)$ for the complete sequence,
$X_{n+1}^n\triangleq \emptyset$, and $Z^{n\setminus
i}\triangleq(Z_1,...,Z_{i-1},Z_{i+1},...,Z_n)$.

\section{System Model and Main Result}\label{sec:Model}
We consider the class of orthogonal relay channels depicted in Figure \ref{fig:RelayChannelR0R1}. The
source and the relay are connected through a memoryless channel characterized by
$p(y_R|x_1,z)$, the source and the destination are connected through an orthogonal memoryless channel characterized by  $p(y_2|x_2,z)$, while the relay and the destination are
connected by a memoryless channel $p(y_1|x_R,z)$. The three memoryless channels depend on an independent and identically distributed (i.i.d.) state sequence $\{Z\}_{i=1}^n$, which is available at the destination. The input and output alphabets are denoted by $\mathcal{X}_1$,
$\mathcal{X}_2$, $\mathcal{X}_R$, $\mathcal{Y}_1$, $\mathcal{Y}_2$
and $\mathcal{Y}_R$, and the state alphabet is denoted by $\mathcal{Z}$.

\begin{figure}
\centering
\includegraphics[width=0.48\textwidth]{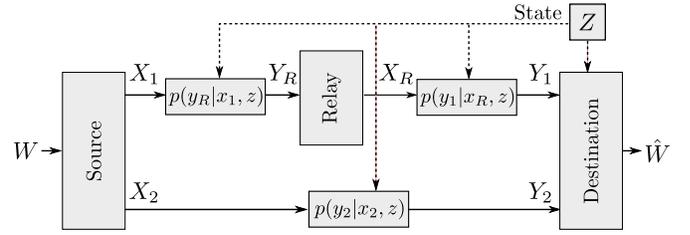}
\vspace{-3 mm} \caption{Orthogonal state-dependent relay channel with channel state
information available at the destination, called the ORC-D model.}
\label{fig:RelayChannelR0R1}
\end{figure}

Let $W$ be the message to be transmitted to the destination with the
assistance of the relay. The message $W$ is assumed to be uniformly
distributed over the set $\mathcal{W}=\{1,...,M\}$. An $(M,n,\nu_n)$
code for this channel consists of an encoding function at the
source:
\begin{IEEEeqnarray}{rCl}
f:\{1,...,M\}\rightarrow\mathcal{X}_1^n\times \mathcal{X}_2^n,
\end{IEEEeqnarray}
a set of encoding functions $\{f_{r,i}\}_{i=1}^n$ at the relay, whose output at time $i$ depends on the symbols it has received up to time $i-1$:
\begin{IEEEeqnarray}{rCl}
X_{Ri}=f_{r,i}(Y_{R1},...,Y_{R(i-1)}),\quad i=1,...,n,
\end{IEEEeqnarray}
and a decoding function at the destination
\begin{IEEEeqnarray}{rCl}
g:\mathcal{Y}_1^n\times\mathcal{Y}_2^n\times
\mathcal{Z}^n\rightarrow\{1,...,M\}.
\end{IEEEeqnarray}
The probability of error, $\nu_n$, is defined as
\begin{IEEEeqnarray}{rCl}
\nu_n\triangleq\frac{1}{M}\sum_{w=1}^M\text{Pr}\{g(Y_1^n,Y_2^n,Z^n)\neq
w|W=w\}.
\end{IEEEeqnarray}

The joint probability mass function (pmf) of the involved random variables
over the set
$\mathcal{W}\times\mathcal{Z}^n\times\mathcal{X}_1^n\times\mathcal{X}_2^n\times\mathcal{X}_R^n\times\mathcal{Y}_R^n\times\mathcal{Y}_1^n\times\mathcal{Y}_2^n$
is given by
\begin{IEEEeqnarray}{rCl}
p(w,z^n,x_1^n,x_2^n,x_R^n,y_R^n,y_1^n,y_2^n)\!=\!p(w)\prod_{i=1}^np(z_i)p(x_{1i},x_{2i}|w)\cdot\nonumber\\
 p(y_{Ri}|z_i,x_{1i})
 p(x_{Ri}|y_R^{i-1})p(y_{1i}|x_{Ri},z_i)p(y_{2i}|x_{2i},z_i)\nonumber.
\end{IEEEeqnarray}

A rate $R$ is said to be \textit{achievable} if there exists a sequence of
$(2^{nR},n,\nu_n)$ codes such that
$\lim_{n\rightarrow\infty}\nu_n=0$. The \textit{capacity}, $\mathcal{C}$, of
this class of state-dependent orthogonal relay channels, denoted as ORC-D, is
defined as the supremum of the set of all achievable rates.

We define $R_1$ and $R_1$ as follows, which can be thought as the capacities of the individual links from the relay to
the destination, and from the source to the destination, respectively, when the channel state
sequence is available at the destination:
\begin{IEEEeqnarray}{rCl}\label{eq:R0R1rates}
R_1\triangleq\max_{p(x_R)}I(X_R;Y_1|Z),\quad
R_2\triangleq\max_{p(x_2)}I(X_2;Y_2|Z).
\end{IEEEeqnarray}
Let $p^*(x_R)$ and $p^*(x_2)$ be the channel input distributions
achieving $R_1$ and $R_2$, respectively.

Let us define $\mathcal{P}$
as the set of all joint pmf's given by
\begin{IEEEeqnarray}{lll}\label{eq:pmf}
\mathcal{P}\triangleq\{p(u,x_1,z,y_R,\hat{y}_R):\\p(u,x_1,z,y_R,\hat{y}_R)= p(u,x_1)p(z)p(y_R|x_1,z)p(\hat{y}_R|y_R,u)\},\nonumber
\end{IEEEeqnarray}
where $U$ and $\hat{Y}_R$ are auxiliary random variables defined over the alphabets $\mathcal{U}$ and $\mathcal{\hat{Y}}_R$, respectively.

 The main result of this work, provided in the next
theorem, is the capacity of the class of  relay channels described
above.

\begin{theorem}\label{th:capreg}
The capacity of the ORC-D relay channel  is given by
\begin{IEEEeqnarray}{rCl}\label{eq:capreg}
\mathcal{C}=&\sup_{\mathcal{P}}\,&R_2+I(U;Y_R)+I(X_1;\hat{Y}_R|UZ),\nonumber\\
&\text{s.t. }&R_1\geq I(U;Y_R)+I(Y_R;\hat{Y}_R|UZ),
\end{IEEEeqnarray}
where $|\mathcal{U}|\leq |\mathcal{X}_1|+3$ and
$|\mathcal{\hat{Y}}_R|\leq |\mathcal{U}||\mathcal{Y}_R|+1$.
\end{theorem}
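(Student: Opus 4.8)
The plan is to establish the two directions of the capacity theorem separately: achievability by specializing the partial decode-compress-and-forward (pDCF) scheme of Cover and El Gamal \cite{Cover:1979} to the orthogonal model, and a matching converse built on Fano's inequality together with a careful single-letterization of the relay term.

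For achievability I would split $W$ into three independent parts $(W_0,W_d,W_c)$. The part $W_0$ is sent over the direct link using $X_2\sim p^*(x_2)$; since the source--destination channel $p(y_2|x_2,z)$ has its state known at the decoder, this link reliably carries rate $R_2=\max_{p(x_2)}I(X_2;Y_2|Z)$. Over the source--relay link the source superimposes a fine codeword $X_1^n(w_d,w_c)$ on a cloud center $U^n(w_d)$ drawn according to $p(u,x_1)$. The relay first decodes $w_d$, which succeeds when $R_d\le I(U;Y_R)$, then Wyner--Ziv compresses $Y_R^n$ into $\hat Y_R^n$ conditioned on the known $U^n$ and with the destination's state $Z^n$ as decoder side information; the structure $p(\hat y_R|y_R,u)$ built into $\mathcal{P}$ gives the Markov relation $\hat Y_R-(Y_R,U)-Z$, which makes the binning rate exactly $I(Y_R;\hat Y_R|UZ)$. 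The relay forwards $w_d$ together with the compression index over the rate-$R_1$ relay link, which is feasible precisely under the constraint $R_1\ge I(U;Y_R)+I(Y_R;\hat Y_R|UZ)$. Finally the destination recovers $U^n$, decodes $\hat Y_R^n$ using $Z^n$, and extracts $w_c$ from $(U^n,\hat Y_R^n,Z^n)$, which succeeds when $R_c\le I(X_1;\hat Y_R|UZ)$. Summing the three parts yields the claimed rate, and the error analysis is routine via the covering and packing lemmas.

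For the converse I would start from Fano's inequality, $nR\le I(W;Y_1^n,Y_2^n|Z^n)+n\epsilon_n$, using $W\perp Z^n$, and split the right-hand side by the chain rule into a direct term $I(W;Y_2^n|Z^n)$ and a relay term $\Sigma\triangleq I(W;Y_1^n|Y_2^n,Z^n)$. Because $Y_{2i}$ depends only on $(X_{2i},Z_i)$ and the state is i.i.d., the direct term single-letterizes to at most $nR_2$. Separately, since $X_{Ri}$ is a function of $Y_R^{i-1}$ and $Y_{1i}$ depends only on $(X_{Ri},Z_i)$, the relay--destination link obeys $I(X_R^n;Y_1^n|Z^n)\le\sum_i I(X_{Ri};Y_{1i}|Z_i)\le nR_1$, which is the seed of the rate constraint.

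The crux, and the step I expect to be the main obstacle, is the single-letterization of $\Sigma$ into $\sum_i[\,I(U_i;Y_{Ri})+I(X_{1i};\hat Y_{Ri}|U_iZ_i)\,]$ while simultaneously forcing the relay-link inequality to read $nR_1\ge\sum_i[\,I(U_i;Y_{Ri})+I(Y_{Ri};\hat Y_{Ri}|U_iZ_i)\,]$. The difficulty is that one may \emph{not} simply hand $Y_R^n$ to the destination, since that collapses $\Sigma$ to the loose cut-set bound $R_2+\max I(X_1;Y_R|Z)$; instead the relay-link rate limit and the source--relay channel quality must bind together through the \emph{same} pair of auxiliary variables. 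I would therefore identify $U_i$ with the decode-forward content available at the relay and $\hat Y_{Ri}$ with the destination's residual relay-branch observation, augmented by past and future outputs and $Z^{n\setminus i}$, and verify by a Csisz\'ar-sum-identity bookkeeping that the compression term retains the Wyner--Ziv conditioning on $Z_i$. Checking that these identified variables obey the Markov relation $\hat Y_{Ri}-(Y_{Ri},U_i)-(X_{1i},Z_i)$ demanded by $\mathcal{P}$ is exactly where the assumptions that $Z$ is i.i.d., independent of $X_1$, and available only at the destination become essential. Once $\Sigma$ is bounded in this form, I would introduce a uniform time-sharing variable $Q$, set $U=(U_Q,Q)$ and $\hat Y_R=\hat Y_{R,Q}$, invoke convexity, and finally apply the Carath\'eodory-based cardinality bounds to obtain $|\mathcal{U}|\le|\mathcal{X}_1|+3$ and $|\hat{\mathcal{Y}}_R|\le|\mathcal{U}||\mathcal{Y}_R|+1$, completing the matching converse.
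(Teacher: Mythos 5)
Your achievability sketch is essentially the paper's scheme: the paper obtains exactly this code by specializing Theorem 7 of Cover and El Gamal (setting $V=\emptyset$, $X^n=(X_1^n,X_2^n)$, $Y^n=(Y_1^n,Y_2^n,Z^n)$, and drawing $X_R^n$, $X_2^n$ independently from $p^*(x_R)$, $p^*(x_2)$), whereas you construct the superposition/Wyner--Ziv code directly; both routes give the claimed rate under the claimed constraint.

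The converse, however, has a genuine gap, and it sits precisely at the step you yourself flag as the crux. With the code-induced identification $U_i=(Y_{R1}^{i-1},X_{1,i+1}^n,Z^{n\setminus i})$ and $\hat Y_{Ri}=Y_{1,i+1}^n$ (which is what the paper uses), what the $n$-letter argument actually yields is the pair of bounds
\begin{IEEEeqnarray}{rCl}
nR&\leq& nR_2+\sum_{i=1}^n\bigl[I(U_i;Y_{Ri})+I(X_{1i};\hat Y_{Ri}|U_iZ_i)\bigr]+n\epsilon_n,\nonumber\\
nR_1+nR_2&\geq& nR+\sum_{i=1}^n I(\hat Y_{Ri};Y_{Ri}|X_{1i}U_iZ_i)-n\epsilon_n,\nonumber
\end{IEEEeqnarray}
i.e., a \emph{min-form} upper bound $\sup_{\mathcal P}\min\{R_2+I(U;Y_R)+I(X_1;\hat Y_R|UZ),\;R_1+R_2-I(\hat Y_R;Y_R|X_1UZ)\}$, in which the compression penalty is conditioned on $X_1$ and no per-letter link constraint appears. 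Note that the second bound inherently involves the rate $R$ itself (it comes from applying Fano to the combined cut), not a budget on the relay link alone. There is no reason the code-induced auxiliaries satisfy $nR_1\geq\sum_i[I(U_i;Y_{Ri})+I(Y_{Ri};\hat Y_{Ri}|U_iZ_i)]$: a given code simply need not apportion the relay link that way, and no Csisz\'ar-sum bookkeeping will manufacture this inequality.

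The missing ingredient is a separate, purely single-letter equivalence argument (the paper's Lemma \ref{lem:Conv2}, Appendix \ref{app:Converse2}) showing that the sup-min above equals the constrained supremum in the theorem. The nontrivial direction takes any pmf in $\mathcal P$ for which the min is attained by the link term, and builds new auxiliaries by mixing $(U,X_1,\hat Y_R)$ with the degenerate triples $(X_1,X_1,\emptyset)$ or $(\emptyset,\emptyset,\emptyset)$ through a Bernoulli$(\lambda)$ switch $B$, absorbing $B$ into $U^*$ and $\hat Y_R^*$; the intermediate value theorem then tunes $\lambda$ so that $I(U^*;Y_R)+I(X_1^*;\hat Y_R^*|U^*Z)$ equals the min-form value, while $I(\hat Y_R^*;Y_R|X_1^*U^*Z)$ can only decrease, after which one verifies the new pmf satisfies $R_1\geq I(U^*;Y_R)+I(Y_R;\hat Y_R^*|U^*Z)$. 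Without this lemma (or an equivalent device), your derivation establishes only the min-form bound, not the constrained capacity expression stated in the theorem.
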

\begin{proof}
The achievability part of the theorem is proven in Section \ref{sec:Achiev}, while the converse proof can be found in Section \ref{sec:Converse}.
\end{proof}

In the next section, we show that the capacity of this class of state-dependent relay channels is achieved by the pDCF scheme. To the best
of our knowledge, this is the first single-relay channel model for
which the capacity is achieved by pDCF, while the pDF and CF schemes are
both suboptimal in general. In addition, the capacity of this relay channel is in general below the cut-set bound \cite{Cover:book}. These issues are discussed in more detail in Sections \ref{sec:Examples} and \ref{seq:cutset}.

It follows from Theorem \ref{th:capreg} that the transmission over the relay-destination and  source-destination links can be independently optimized to operate at the corresponding capacities, and these links in principle act as error-free channels of capacity $R_1$ and $R_2$, respectively. 
We also note that the relay can acquire some knowledge about the channel state sequence $Z^n$  from its channel output $Y^n_R$, and could use it in the transmission over the relay-destination link, which depends on the same state information sequence. In general, non-causal state information available at the relay can be exploited to increase the achievable throughput in multi-user setups \cite{MinSimeone2011StateRelay,Zaidi2010Cooperative:IT}. However, it follows from Theorem \ref{th:capreg} that this knowledge is useless. This is because the channel state information acquired from $Y^n_R$ can be seen as delayed feedback to the relay, which does not increase the capacity in point-to-point channels.

\subsection{Comparison with previous relay channel models}\label{ssec:ComparisonOrthogonal}
Here, we compare ORC-D with other relay channel models in the literature, and discuss the differences and similarities. The discrete memoryless relay channel consists of four finite sets $\mathcal{X}$, $\mathcal{X}_R$, $\mathcal{Y}$ and $\mathcal{Y}_R$, and a probability distribution $p(y,y_R|x,x_R)$. In this setup, $X$ corresponds to the source input to the channel, $Y$ to the channel output available at the destination, while $Y_R$ is the channel output available at the relay, and $X_R$ is the channel input symbol chosen by the relay. We note that the three-terminal relay channel model in \cite{Cover:1979} reduces to ORC-D by setting $X^n=(X_1^n,X_2^n)$, $Y^n=(Y_{1}^n,Y_{2}^n,Z^n)$, and $p(y,y_R|x_1x_R)=p(y_1,y_2,y_R,z|x_1,x_2,x_R)=p(z)p(y_R|x_1,z)p(y_1|x_R,z)p(y_2|x_2,z)$.

\begin{figure}
\centering
\includegraphics[width=0.5\textwidth]{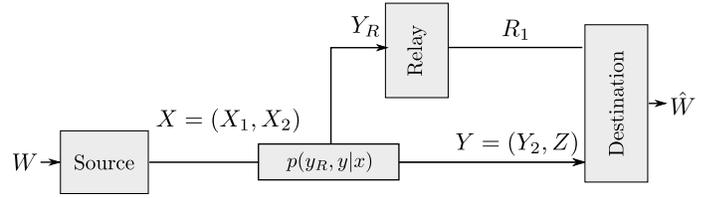}
\vspace{-3 mm} \caption{The ORC-D as a class of primitive relay channel.}
\label{fig:PrimitiveRelay}
\end{figure}

\begin{figure}
\centering
\includegraphics[width=0.5\textwidth]{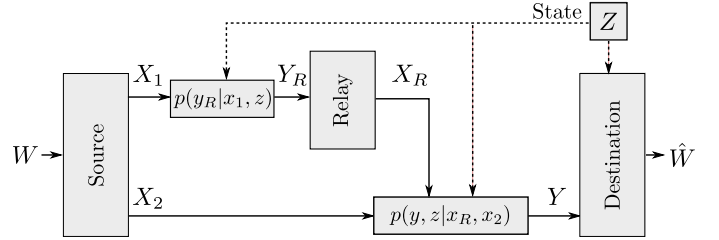}
\vspace{-3 mm} \caption{The ORC-D is a particular case of the state dependent orthogonal relay channel with orthogonal components.}
\label{fig:Orthogonal_Relay}
\end{figure}

By considering the channel from the relay to the destination as an  error-free link with finite capacity, the ORC-D is included in the class of primitive relay channels proposed in  \cite{Kim2007CapacityClassRelay} and \cite{Kim2007CodingPrimitiveRelay} as seen in Figure \ref{fig:PrimitiveRelay}, for which the channel distribution satisfies $p(y,y_R|x,x_R)=p(y,y_R|x)$. Although the capacity of this channel remains unknown in general, it has been characterized for certain special cases. CF has been shown to achieve the cut-set bound, i.e., to be optimal, in \cite{Kim2007CapacityClassRelay}, if the relay output, $Y_R$, is a deterministic function of the source input and output at the destination, i.e., $Y_R=f(X,Y)$. The capacity of a class of primitive relay channels under a particular modulo sum structure is shown to be achievable by CF in  \cite{Aleksic2009CapRelayModSum}, and to be below the cut-set bound. Theorem \ref{th:capreg} provides the optimality of pDCF for a class of primitive relay channels, not included in any of the previous relay models for which the capacity is known. 
It is discussed in \cite{Kim2007CodingPrimitiveRelay} that for the primitive relay channel, CF and DF do not outperform one another in general. It is also noted that their combination in the form of pDCF might not be sufficient to achieve the capacity  in general. We will see in Section \ref{sec:Examples} that both DF and CF are in general suboptimal, and that pDCF is necessary and sufficient to achieve the capacity for the class of primitive relay channels considered in this paper.

It is also interesting to compare the ORC-D model with the orthogonal relay channel proposed in \cite{Gamal2005OrthRelay}, in which the source-relay link is orthogonal to the multiple-access channel from the source and relay to the destination, i.e., $p(y,y_R|x,x_R)=p(y|x_2,x_R)p(y_R|x_1,x_R)$. The capacity for this model is shown to be achievable by pDF, and coincides with the cut-set bound. For the ORC-D, we have $p(y,y_R|x,x_R)=p(z)p(y_2|x_2,z)p(y_1|x_R,z)p(y_R|x_1,x_R,z)$, i.e., given the channel inputs, the orthogonal channel outputs at the relay and the destination are still dependent due to $Z$. Therefore, the ORC-D does not fall within the class of orthogonal channels considered in \cite{Gamal2005OrthRelay}. We can consider the class of state dependent relay channel with orthogonal components satisfying $p(y,z,y_R|x,x_R)=p(z)p(y|x_2,x_R,z)p(y_R|x_1,x_R,z)$  as shown in Figure \ref{fig:Orthogonal_Relay}. This class includes the orthogonal relay channel in \cite{Gamal2005OrthRelay} and the ORC-D as a particular cases. However, the capacity for this class of state dependent relay channel remains open in general.

\section{Proof of Theorem \ref{th:capreg}}\label{sec:capreg}
We first show in Section \ref{sec:Achiev} that
the capacity claimed in Theorem \ref{th:capreg} is achievable by pDCF. Then, we derive the converse result for Theorem \ref{th:capreg} in Section \ref{sec:Converse}.

\subsection{Achievability }\label{sec:Achiev}
We derive the rate achievable by the pDCF scheme for ORC-D using the achievable rate expression for the pDCF scheme proposed in \cite{Cover:1979} for the general relay channel. Note that the three-terminal relay channel in \cite{Cover:1979} reduces to  ORC-D by setting $X^n=(X_1^n,X_2^n)$ and $Y^n=(Y_{1}^n,Y_{2}^n,Z^n)$, as discussed in Section \ref{ssec:ComparisonOrthogonal}.

 In pDCF for the general relay channel, the source applies message splitting, and the
relay decodes only a part of the message. The part to be
decoded by the relay is transmitted through the auxiliary random
variable $U^n$, while the rest of the message is superposed onto
this through channel input $X^n$. Block Markov encoding is used for
transmission. The relay receives $Y_R^n$ and decodes only the part
of the message that is conveyed by $U^n$. The remaining signal
$Y_R^n$ is compressed into $\hat{Y}_R^n$. The decoded message is
forwarded through $V^n$, which is correlated with $U^n$, and the
compressed signal is superposed onto $V^n$ through the relay channel
input $X_R^n$. At the destination the received signal $Y^n$ is used
to recover the message. See \cite{Cover:1979} for details. The
achievable rate of the pDCF scheme is given below.

\begin{theorem}\label{th:pDCF}\textnormal{(Theorem 7,\cite{Cover:1979})}
The capacity of a relay channel $p(y,y_R|x,x_R)$ is lower bounded by
the following rate:
\begin{IEEEeqnarray}{rCl}\label{eq:GenPartialDFCF}
R_{\mathrm{pDCF}}=\sup&\min&\{I(X;Y,\hat{Y}_R|X_R,U)+I(U;Y_R|X_R,V),\nonumber\\
&&\;\;I(X,X_R;Y)-I(\hat{Y}_R;Y_R|X, X_R, U, Y)\},\nonumber\\
&\text{s.t. } & I(\hat{Y}_R;Y_R|Y, X_R, U)\leq I(X_R;Y|V),
\end{IEEEeqnarray}
where the supremum is taken over all joint pmf's of the form
\begin{IEEEeqnarray}{rCl}
p(v)p(u|v)p(x|u)p(x_R|v)p(y,y_R|x,x_R)p(\hat{y}_R|x_R,y_R,u).\nonumber
\end{IEEEeqnarray}
\end{theorem}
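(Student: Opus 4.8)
The statement is an achievability (lower bound) result, so the plan is to exhibit a random block-Markov code and show its error probability vanishes. The code must interleave partial decode-forward, for the message part carried by $U$ and relayed through $V$, with compress-forward, for the residual observation $Y_R$, which is described by $\hat{Y}_R$ and relayed through $X_R$ after Wyner--Ziv binning against the destination's side information $Y$. I would split the message as $w=(w',w'')$ with rates $R'$ and $R''$, set $R=R'+R''$, and transmit over $B$ blocks, forwarding each decoded part one block late.

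First, the codebook is generated to respect the target factorization $p(v)p(u|v)p(x|u)p(x_R|v)p(\hat y_R|x_R,y_R,u)$: independent $V^n$ codewords; then $U^n$ and $X_R^n$ superposed on each $V^n$; then $X^n$ superposed on each $U^n$; and a conditional compression codebook of $\hat{Y}_R^n$ sequences drawn according to the induced $p(\hat y_R|u,x_R)$, of size $2^{n\hat R}$, uniformly partitioned into $2^{nR_b}$ bins. In block $b$ the source sends $U^n(w'_b)$ superposed on $V^n(w'_{b-1})$ together with $X^n(w'_b,w''_b)$; the relay, having decoded $w'_{b-1}$ and selected a compression index for $Y_R^n(b-1)$, transmits $V^n(w'_{b-1})$ and superposes on it $X_R^n$ carrying the bin index of that compression codeword. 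The relay performs two operations in block $b$: it decodes $w'_b$ from $Y_R^n(b)$ knowing $V^n,X_R^n$, which is reliable provided $R'\le I(U;Y_R|X_R,V)$; and it covers $Y_R^n(b)$ with a jointly typical $\hat{Y}_R^n$, which succeeds by the covering lemma provided $\hat R\ge I(\hat Y_R;Y_R|U,X_R)$.

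At the destination I would use backward decoding. Recovering the bin index over the relay--destination link requires $R_b\le I(X_R;Y|V)$, while resolving the correct $\hat{Y}_R^n$ within its bin using $Y^n$ as side information requires $\hat R-R_b\le I(\hat Y_R;Y|U,X_R)$; eliminating $\hat R$ and $R_b$ between these and the covering bound yields exactly the Wyner--Ziv constraint $I(\hat Y_R;Y_R|Y,X_R,U)\le I(X_R;Y|V)$, where I invoke the Markov chain $\hat Y_R-(Y_R,X_R,U)-Y$ to rewrite $I(\hat Y_R;Y_R|U,X_R)-I(\hat Y_R;Y|U,X_R)$ as $I(\hat Y_R;Y_R|Y,X_R,U)$. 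Once $U^n$, $X_R^n$ and $\hat{Y}_R^n$ are in hand, decoding the residual part $w''$ from $(Y^n,\hat{Y}_R^n)$ imposes $R''\le I(X;Y,\hat Y_R|X_R,U)$, which combined with the relay's constraint on $R'$ gives the first minimand $R\le I(X;Y,\hat Y_R|X_R,U)+I(U;Y_R|X_R,V)$. The joint-decoding error event at the destination, in which the relay's codewords and the source message are mistaken together, contributes the multiple-access-type bound $R\le I(X,X_R;Y)-I(\hat Y_R;Y_R|X,X_R,U,Y)$, the second minimand. Collecting all error events, letting $n,B\to\infty$, and applying Fourier--Motzkin elimination to remove $R',R'',\hat R,R_b$ produces the claimed rate.

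I expect the main obstacle to be the coherent bookkeeping of the coupled compress-forward and decode-forward operations across consecutive blocks: a single relay input $X_R^n$ must simultaneously forward the decoded message part through $V$ and the compression bin index superposed on $V$, and the backward-decoding analysis must treat the message, the relay codewords $(V^n,X_R^n)$, and the in-bin compression index jointly rather than separately. The delicate point is making the binning feasibility collapse to exactly $I(\hat Y_R;Y_R|Y,X_R,U)\le I(X_R;Y|V)$ and the reliability conditions reduce to precisely the two stated minimands, rather than to a looser or an over-constrained region; this requires repeatedly invoking the Markov relations among $(V,U,X,X_R,Y_R,\hat Y_R,Y)$ to simplify the information quantities that remain after the Fourier--Motzkin step.
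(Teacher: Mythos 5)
This statement carries no proof in the paper: it is quoted verbatim as Theorem~7 of \cite{Cover:1979}, and the text merely sketches the scheme in prose (message splitting, block-Markov encoding, relay decoding of $U$, Wyner--Ziv compression of the residual $Y_R$, forwarding through $V$ with the bin index superposed via $X_R$). Your proposal reconstructs exactly that achievability argument, and correctly so --- the conditional covering bound $\hat R\geq I(\hat Y_R;Y_R|U,X_R)$, the bin-index and in-bin typicality constraints, the Markov-chain identity collapsing them to $I(\hat Y_R;Y_R|Y,X_R,U)\leq I(X_R;Y|V)$, and the two minimands arising from the relay's decoding of $U$ plus the destination's joint decoding are all the right ingredients; your only departure from the cited proof is backward decoding in place of Cover--El Gamal's original sequential decoding, which is a standard and harmless modernization.
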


Since ORC-D is a special case of  the general relay channel model, the rate $R_{\mathrm{pDCF}}$ is achievable in an ORC-D as well.  The capacity achieving pDCF scheme for ORC-D  is obtained from (\ref{eq:GenPartialDFCF}) by setting $V=\emptyset$, and generating $X_R^n$ and $X_1^n$ independent of the rest of the variables with distribution $p^*(x_R)$ and $p^*(x_1)$,
respectively, as given in the next lemma.

\begin{lemma}\label{lem:Achi}
For the class of relay channels characterized by the ORC-D model, the capacity expression $\mathcal{C}$ defined in
(\ref{eq:capreg}) is achievable by the pDCF scheme.
\end{lemma}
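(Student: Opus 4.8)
My plan is to start from the general pDCF achievable rate $R_{\mathrm{pDCF}}$ in (\ref{eq:GenPartialDFCF}) and substitute the ORC-D-specific choices, then simplify each mutual information term using the factorization of the channel. Recall that for ORC-D we have $X^n=(X_1^n,X_2^n)$ and $Y^n=(Y_1^n,Y_2^n,Z^n)$, and the channel factors as $p(z)p(y_R|x_1,z)p(y_1|x_R,z)p(y_2|x_2,z)$. Following the prescription in the lemma, I would set $V=\emptyset$, take $X_R^n$ distributed according to $p^*(x_R)$ independently of everything else, and take $X_2^n$ distributed according to $p^*(x_2)$ independently of everything else (so that $X_2$ is independent of $U,X_1,Z,Y_R,\hat Y_R$, while $U$ and $X_1$ are jointly distributed as in $\mathcal{P}$). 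The compression variable $\hat Y_R$ should be generated via $p(\hat y_R|y_R,u)$, matching the definition of $\mathcal{P}$ in (\ref{eq:pmf}). The goal is to show that with these substitutions the two terms inside the $\min$ and the constraint collapse exactly into the objective and constraint of (\ref{eq:capreg}).

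First I would handle the constraint. With $V=\emptyset$ the right-hand side $I(X_R;Y|V)=I(X_R;Y)$. Since $Y=(Y_1,Y_2,Z)$ and $X_R$ is independent of $(X_2,Y_2,Z)$ and affects only $Y_1$ through $p(y_1|x_R,z)$, this reduces to $I(X_R;Y_1|Z)$, which by the choice $p^*(x_R)$ equals $R_1$. For the left-hand side $I(\hat Y_R;Y_R|Y,X_R,U)$, I would use that $X_R$ is independent of the block $(U,X_1,Z,Y_R,\hat Y_R)$ and that, given $Z$, the pair $(Y_1,Y_2)$ carries no further information about $(\hat Y_R,Y_R)$; this should peel off $Y_1,Y_2,X_R$ and leave $I(\hat Y_R;Y_R|U,Z)$. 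Expanding $I(\hat Y_R;Y_R|U,Z)=I(U;Y_R|Z)+I(\hat Y_R;Y_R|U,Z)-\ldots$ is not quite it; rather I expect a chain-rule rearrangement to turn the constraint into the stated form $R_1\geq I(U;Y_R)+I(Y_R;\hat Y_R|U,Z)$. The key identity will be recognizing that $I(U;Y_R)$ and $I(Y_R;\hat Y_R|U,Z)$ together capture the rate needed to convey both the decoded part (via $U$) and the Wyner–Ziv-compressed part over the clean link of capacity $R_1$.

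For the objective, I would simplify both branches of the $\min$ and argue they coincide (or that the second is never binding) under the ORC-D factorization. The first term $I(X;Y,\hat Y_R|X_R,U)+I(U;Y_R|X_R,V)$ should, after dropping the independent $X_R$ and splitting $X=(X_1,X_2)$, decompose into an $R_2$ contribution from the source-destination link $I(X_2;Y_2|Z)$, a term $I(U;Y_R)$ from the decoded part, and a term $I(X_1;\hat Y_R|U,Z)$ from the compressed residual — matching (\ref{eq:capreg}) exactly. The main obstacle I anticipate is the bookkeeping in these conditional mutual information expansions: carefully justifying each Markov relation (which variables drop out of the conditioning, and why the orthogonality plus the common state $Z$ lets $Y_1,Y_2,X_R,X_2$ separate cleanly) is where errors creep in. In particular, showing that the second branch of the $\min$ equals or dominates the first requires verifying that the cut $I(X,X_R;Y)-I(\hat Y_R;Y_R|X,X_R,U,Y)$ reduces to something no smaller than the first branch given that $X_R$ achieves $R_1$; I would check this by confirming that once $R_1$ is large enough the compression is lossless and both branches agree, and otherwise the constraint already forces the first branch to be the minimum.
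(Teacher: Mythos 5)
Your specialization is exactly the one used in the paper's proof (Appendix A): set $V=\emptyset$, draw $X_R^n$ and $X_2^n$ independently of everything else from $p^*(x_R)$ and $p^*(x_2)$, generate $\hat{Y}_R$ from $(Y_R,U)$, and simplify term by term. Your simplifications of the individual quantities are also on track: the first branch of the $\min$ in (\ref{eq:GenPartialDFCF}) indeed becomes $R_2+I(U;Y_R)+I(X_1;\hat{Y}_R|UZ)$, the second becomes $R_2+R_1-I(\hat{Y}_R;Y_R|X_1UZ)$, and the compression constraint becomes $R_1\geq I(\hat{Y}_R;Y_R|UZ)$ with right-hand side $I(X_R;Y|V)=R_1$. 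The gap is in how you assemble these into (\ref{eq:capreg}). You expect a ``chain-rule rearrangement'' to turn the specialized constraint $R_1\geq I(\hat{Y}_R;Y_R|UZ)$ into the stated constraint $R_1\geq I(U;Y_R)+I(Y_R;\hat{Y}_R|UZ)$, and you later assert that the constraint ``already forces the first branch to be the minimum.'' Both claims are false: since $I(U;Y_R)\geq 0$, the specialized pDCF constraint is \emph{strictly weaker} than the constraint in (\ref{eq:capreg}) --- it is implied by it but does not imply it --- so no identity can manufacture the $I(U;Y_R)$ term out of it, and it cannot force the first branch of the $\min$ to be active.

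The missing idea is that the constraint in (\ref{eq:capreg}) originates from the \emph{second} branch of the $\min$, not from the compression constraint. The paper restricts the supremum to those pmfs in $\mathcal{P}$ for which the first branch is the smaller one, i.e.,
\[
I(U;Y_R)+I(X_1;\hat{Y}_R|UZ)\;\leq\; R_1-I(\hat{Y}_R;Y_R|X_1UZ).
\]
By the chain rule, $I(X_1;\hat{Y}_R|UZ)+I(\hat{Y}_R;Y_R|X_1UZ)=I(X_1Y_R;\hat{Y}_R|UZ)$, and by the Markov chain $\hat{Y}_R-(U,Y_R)-(X_1,Z)$ (valid because $\hat{Y}_R$ is generated from $(Y_R,U)$ alone, as in (\ref{eq:pmf})), this equals $I(Y_R;\hat{Y}_R|UZ)$. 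Hence the restriction is \emph{exactly} the constraint $R_1\geq I(U;Y_R)+I(Y_R;\hat{Y}_R|UZ)$ of (\ref{eq:capreg}), and it automatically implies $R_1\geq I(\hat{Y}_R;Y_R|UZ)$, so the original compression constraint is redundant on this restricted set. On this set the $\min$ equals the objective of (\ref{eq:capreg}), so the supremum over it --- which is precisely $\mathcal{C}$ --- is achievable. No claim that the second branch ``never binds'' is needed (and such a claim is false in general: for pmfs violating the constraint of (\ref{eq:capreg}) the second branch is the smaller one; those pmfs are simply excluded from the supremum defining $\mathcal{C}$).
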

\begin{proof}
See Appendix \ref{app:ProofLemmaAch}.
\end{proof}
The optimal pDCF scheme for ORC-D applies independent coding over the source-destination and the source-relay-destination branches. The source applies message splitting. Part of the message is transmitted over the source-destination branch and decoded at the destination using $Y_2^n$ and $Z^n$. In the relay branch, the part of the message to be decoded at the relay is transmitted through $U^n$, while the rest of the message is superposed onto this through the channel input $X_1^n$. At the relay the part conveyed by $U^n$ is decoded from $Y_R^n$, and the remaining signal $Y_R^n$  is compressed into $\hat{Y}_R^n$ using binning and assuming that $Z^n$ is available at the decoder. Both $U^n$ and the bin index corresponding to $\hat{Y}_R^n$ are transmitted over the relay-destination channel using $X_R^n$. At the destination, $X_R^n$ is decoded from $Y_1^n$, and $U^n$ and the bin index are recovered. Then, the decoder looks for the part of message transmitted over the relay branch jointly typical with $\hat{Y}_R^n$ within the corresponding bin and $Z^n$.

\subsection{Converse}\label{sec:Converse}

 The proof of the converse consists of two
parts. First we derive a single-letter upper bound on
the capacity, and then, we provide an alternative expression for this bound, which coincides with the rate achievable by pDCF.

\begin{lemma}\label{lem:Conv1}
The capacity of the class of relay channels characterized by the ORC-D model is upper bounded by
\begin{IEEEeqnarray}{rCl}\label{eq:Conv1}
R_{up}=\sup_{\mathcal{P}}\min\{&&R_2+I(U;Y_{R})+I(X_{1};\hat{Y}_{R}|UZ),\\ 
&& R_1+R_2-I(\hat{Y}_{R};Y_{R}|X_{1}UZ)\}.
\end{IEEEeqnarray}
\end{lemma}

\begin{proof}
See Appendix \ref{app:Converse1}.
\end{proof}

As stated in the next lemma, the upper bound $R_{up}$, given in Lemma \ref{lem:Conv1}, is equivalent to the capacity expression $\mathcal{C}$ given in Theorem \ref{th:capreg}. Since the achievable rate meets the upper bound, this concludes the proof of Theorem \ref{th:capreg}.

\begin{lemma}\label{lem:Conv2}
The upper bound on the achievable rate $R_{up}$ given in Lemma
\ref{lem:Conv1} is equivalent to the capacity expression $\mathcal{C}$ in Theorem \ref{th:capreg}.
\end{lemma}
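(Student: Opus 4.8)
The plan is to show that the two optimization problems---the minimization over a $\min$ of two terms in $R_{up}$ (Lemma \ref{lem:Conv1}) and the constrained supremum defining $\mathcal{C}$ (Theorem \ref{th:capreg})---have the same value, both suprema being taken over the same set $\mathcal{P}$. The key observation is that the constraint $R_1 \geq I(U;Y_R)+I(Y_R;\hat{Y}_R|UZ)$ appearing in $\mathcal{C}$ can be rewritten using the chain rule. First I would expand $I(Y_R;\hat{Y}_R|UZ)$ and relate it to the term $I(\hat{Y}_R;Y_R|X_1UZ)$ that appears inside the second branch of the $\min$ in $R_{up}$. Writing $I(Y_R;\hat{Y}_R|UZ) = I(X_1Y_R;\hat{Y}_R|UZ) - I(X_1;\hat{Y}_R|UZY_R)$ and noting that the Markov structure $\hat{Y}_R - (Y_R,U) - (X_1,Z)$ from the definition of $\mathcal{P}$ forces $I(X_1;\hat{Y}_R|UZY_R)=0$, I can further split $I(X_1Y_R;\hat{Y}_R|UZ) = I(X_1;\hat{Y}_R|UZ) + I(Y_R;\hat{Y}_R|X_1UZ)$. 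This yields the identity
\begin{IEEEeqnarray}{rCl}
I(Y_R;\hat{Y}_R|UZ) = I(X_1;\hat{Y}_R|UZ) + I(Y_R;\hat{Y}_R|X_1UZ),\nonumber
\end{IEEEeqnarray}
which is the algebraic bridge between the two formulations.

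Next I would carry out the standard equivalence argument for a constrained-versus-unconstrained rate region. In the forward direction, I take any pmf in $\mathcal{P}$ satisfying the constraint of $\mathcal{C}$ and show that both branches of the $R_{up}$ $\min$ dominate the objective $R_2+I(U;Y_R)+I(X_1;\hat{Y}_R|UZ)$: the first branch equals this objective identically, and the second branch exceeds it precisely when the constraint holds, as revealed by substituting the identity above into $R_1 \geq I(U;Y_R)+I(X_1;\hat{Y}_R|UZ)+I(Y_R;\hat{Y}_R|X_1UZ)$ and rearranging to $R_1+R_2-I(\hat{Y}_R;Y_R|X_1UZ) \geq R_2+I(U;Y_R)+I(X_1;\hat{Y}_R|UZ)$. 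Hence for constraint-satisfying pmfs the min equals the $\mathcal{C}$ objective, so $\mathcal{C}\leq R_{up}$.

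For the reverse direction I must argue that the supremum defining $R_{up}$ can be restricted, without loss, to pmfs satisfying the constraint. The main obstacle is handling a pmf where the second branch is the smaller one, i.e. where the constraint of $\mathcal{C}$ is violated; there the unconstrained $R_{up}$ could in principle be larger. The remedy is to show that any such point can be modified---by degrading the compression, i.e. replacing $\hat{Y}_R$ with a less informative description, possibly time-sharing toward the trivial $\hat{Y}_R=\emptyset$---so as to move onto the constraint boundary without decreasing the value of the $\min$. Since $I(\hat{Y}_R;Y_R|X_1UZ)$ and the other mutual-information terms vary continuously as the test channel $p(\hat{y}_R|y_R,u)$ is deformed, an intermediate-value argument produces a pmf on the boundary $R_1 = I(U;Y_R)+I(Y_R;\hat{Y}_R|UZ)$ achieving at least the same value, at which point the two branches coincide and the $R_{up}$ value equals the $\mathcal{C}$ objective. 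Combining both directions gives $R_{up}=\mathcal{C}$. The cardinality bounds on $\mathcal{U}$ and $\hat{Y}_R$ are preserved throughout since no new auxiliary variables are introduced, and they would be justified separately by the usual Carath\'eodory / support-lemma argument.
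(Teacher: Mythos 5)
Your forward direction ($\mathcal{C} \leq R_{up}$) is correct and is exactly the paper's argument: the chain-rule identity $I(Y_R;\hat{Y}_R|UZ) = I(X_1;\hat{Y}_R|UZ) + I(Y_R;\hat{Y}_R|X_1UZ)$, valid by the Markov chain $\hat{Y}_R-(U,Y_R)-(X_1,Z)$ built into $\mathcal{P}$, shows that the constraint of $\mathcal{C}$ is precisely the condition under which the first branch of the $\min$ in $R_{up}$ is the active one, so the restriction to constraint-satisfying pmfs loses nothing in that direction.

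The reverse direction, however, has a genuine gap. You propose to repair a pmf $p_e$ on which the second branch is active by deforming only the test channel $p(\hat{y}_R|y_R,u)$ toward $\hat{Y}_R=\emptyset$ and invoking the intermediate value theorem to land on the boundary $R_1 = I(U;Y_R)+I(Y_R;\hat{Y}_R|UZ)$. This path need not cross the boundary at all: if $I(U;Y_R) > R_1$ --- which is not only compatible with the second branch being active but actually implies it (take, e.g., $U=X_1$ with $\max_{p(x_1)}I(X_1;Y_R)>R_1$ and any nontrivial $\hat{Y}_R$) --- then for \emph{every} choice of the test channel the right-hand side of the constraint satisfies $I(U;Y_R)+I(Y_R;\hat{Y}_R|UZ)\geq I(U;Y_R) > R_1$, so no point along your deformation is feasible for $\mathcal{C}$, and the two branches never coincide along it. The missing ingredient is that $U$ must be modified as well, not just $\hat{Y}_R$. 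The paper does this by time-sharing the whole triple: with a Bernoulli$(\lambda)$ random variable $B$ absorbed into both $U^*$ and $\hat{Y}_R^*$, it mixes $(U,X_1,\hat{Y}_R)$ with the pure-DF configuration $(X_1,X_1,\emptyset)$ when $R(p_e)> I(X_1;Y_R)$, and mixes $(X_1,X_1,\emptyset)$ with $(\emptyset,\emptyset,\emptyset)$ when $R(p_e)\leq I(X_1;Y_R)$, where $R(p_e)\triangleq R_1-I(\hat{Y}_R;Y_R|X_1UZ)$; the second case is exactly the one your path cannot handle, since $I(U;Y_R)\leq I(X_1;Y_R)$ by data processing. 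In each case the intermediate value theorem is applied to the time-sharing parameter $\lambda$ so that the $\mathcal{C}$-objective equals $R(p_e)$ exactly, and feasibility then follows because the residual compression term can only shrink: it becomes $\lambda I(\hat{Y}_R;Y_R|X_1UZ)\leq I(\hat{Y}_R;Y_R|X_1UZ)$ in the first case and $0$ in the second. So your continuity/IVT instinct matches the paper's mechanism, but it is applied to an insufficient family of deformations, and the case split on $R(p_e)$ versus $I(X_1;Y_R)$ that makes the endpoints of the IVT argument sit on opposite sides of the target value is missing.
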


\begin{proof}
See Appendix \ref{app:Converse2}.
\end{proof}

\section{The Multihop Relay Channel with State: Suboptimality of Pure pDF and CF schemes}\label{sec:Examples}

We have seen in Section \ref{sec:capreg} that the pDCF
scheme is capacity-achieving for the class of relay channels characterized by the ORC-D model. In order to prove the suboptimality of the pure DF and CF schemes for this class of relay channels, we consider a simplified system model, called the \emph{multihop relay channel with state information available at the destination} (MRC-D), which is obtained by simply removing the direct channel from the source to the destination,
i.e., $R_2=0$.

The capacity of this multihop relay channel model and the optimality of pDCF follows directly from Theorem \ref{th:capreg}. However, the  single-letter capacity expression depends on the joint pmf of $X_1$, $Y_R$, $X_R$ and $Y_1$ together with the auxiliary random variables $U$ and $\hat{Y}_R$. Unfortunately, the numerical characterization of the optimal joint pmf of these random variables is very complicated for most channels. A simple and computable upper bound on the capacity can be obtained from the
cut-set bound \cite{elGamal:book}. For MRC-D, the cut-set bound is given by
\begin{IEEEeqnarray}{rCl}\label{eq:cutset}
R_{\mathrm{CS}}=\min\{R_1,\max_{p(x_1)}I(X_1;Y_R|Z)\}.
\end{IEEEeqnarray}

Next, we characterize the rates achievable by the DF and CF schemes for MRC-D. Since they are special cases of the pDCF scheme, their achievable rates can be obtained by particularizing the achievable rate of pDCF for this setup.

\subsubsection{DF Scheme}
If we consider a pDCF scheme that does not perform any compression at the relay, i.e.,
$\hat{Y_R}=\emptyset$, we obtain the rate achievable by the pDF
scheme. Note that the optimal distribution of $X_R$ is given by  $p^*(x_r)$. Then, we have
\begin{IEEEeqnarray}{rCl}
R_{\mathrm{pDF}}&=& \min\{R_1,\sup_{p(x_1,u)}I(U;Y_R)\}.
\end{IEEEeqnarray}
From the Markov chain $U-X_1-Y_R$, we have  $I(U;Y_R)\leq
I(X_1;Y_R)$, where the equality is achieved by $U=X_1$. That is, the performance of  pDF is maximized by letting the relay decode the whole message. Therefore, the maximum rate achievable by pDF and DF for MRC-D coincide, and is given by
\begin{IEEEeqnarray}{rCl}
R_{\mathrm{DF}}=R_{\mathrm{pDF}}&=&\min\{R_1,\max_{p(x_1)} I(X_1;Y_R)\}.
\end{IEEEeqnarray}

We note that considering more advanced DF strategies based on list decoding as in \cite{Kim2007CodingPrimitiveRelay} does not increase the achievable rate in the MRC-D, since there is no direct link.

\subsubsection{CF Scheme} If the pDCF scheme does not perform any decoding at the relay, i.e., $U=V=\emptyset$, pDCF reduces to CF. Then, the achievable rate for the CF scheme in MRC-D is  given by
\begin{IEEEeqnarray}{rCl}\label{eq:CFRate}
R_{\mathrm{CF}}&=&\sup I(X_1;\hat{Y}_R|Z)\nonumber\\
&&\text{s.t. }\;R_1\geq I(\hat{Y}_R;Y_R|Z),\nonumber\\
&&\text{over }p(x_1)p(z)p(y_R|x_1,z)p(\hat{y}_R|y_R).
\end{IEEEeqnarray}

\subsection{Multihop Parallel Binary Symmetric Channel}\label{sec:Binary}

\begin{figure}[t!]
\centering
\includegraphics[width=0.5\textwidth]{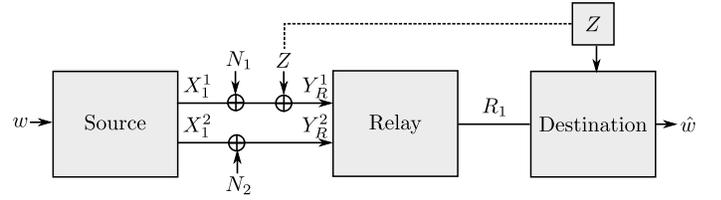}
\vspace{-3 mm} \caption{The parallel binary symmetric MRC-D with parallel source-relay links. The destination has side information about only one of the source-relay links.}
\label{fig:ParallMultiHop}
\end{figure}
 In this section we consider a special MRC-D as shown in Figure \ref{fig:ParallMultiHop}, which we call the \emph{parallel binary symmetric MRC-D}. For this setup, we characterize the optimal performance of the DF and CF schemes, and show that in general pDCF outperforms both, and that in some cases the cut-set bound is tight and coincides with the channel capacity. This example proves the suboptimality of both DF and CF  on their own for  ORC-D.

 In this scenario, the source-relay channel consists of two parallel binary symmetric channels. We have $X_{1}=(X_1^1,X_1^2)$, $Y_R=(Y_R^1,Y_R^2)$ and $p(y_R|x_R,z)=p(y_R^1|x^1_1,z)p(y_R^2|x_1^2)$ characterized by
\begin{IEEEeqnarray}{rCl}
Y_R^1 &=& X_1^1 \oplus N_1 \oplus Z,\quad\text{and } \quad
Y_R^2 = X_1^2 \oplus N_2,\nonumber
\end{IEEEeqnarray}
where $N_1$ and $N_2$  are i.i.d. Bernoulli random variables with $\text{Pr}\{N_1=1\}=\text{Pr}\{N_2=1\} = \delta$, i.e., $N_1\sim \text{Ber}(\delta)$ and $N_2\sim \text{Ber}(\delta)$. We consider a Bernoulli distributed state $Z$, $Z\sim \text{Ber}(p_z)$, which affects one of the two parallel channels, and is available at the destination. We have
$\mathcal{X}_1^1 = \mathcal{X}_1^2 = \mathcal{Y}_R^1 =  \mathcal{Y}_R^1 = \mathcal{N}_1 = \mathcal{N}_2 = \mathcal{Z}= \{0,1\}$.

From (\ref{eq:cutset}), the cut-set bound is given by
\begin{IEEEeqnarray}{rCl}\label{eq:BinCS}
R_{\mathrm{CS}} &=& \min\{R_1, \max_{p(x_1^1x_1^2)}I(X_1^1X_1^2;Y_R^1Y_R^2|Z)\}\nonumber\\
&=& \min\{R_1, 2(1- h_2(\delta))\},
\end{IEEEeqnarray}
where $h_2(\cdot)$ is the binary entropy function defined as $h_2(p) \triangleq
-p\log p -(1-p) \log(1-p)$.

The maximum DF rate is achieved by $X_1^1\sim\text{Ber}(1/2)$ and $X_1^2\sim\text{Ber}(1/2)$, and is found to be
\begin{IEEEeqnarray}{rCl}\label{eq:BinDF}
R_{\mathrm{DF}}  &=& \min\{R_1, \max_{p(x_1^1x_1^2)}I(X_1^1X_1^2;Y_R^1Y_R^2)\}\nonumber\\
&=&\min\{R_1, 2- h_2(\delta\star p_z)- h_2(\delta)\},
\end{IEEEeqnarray}
where $\alpha\star\beta\triangleq \alpha(1-\beta)+(1-\alpha)\beta$.

Following (\ref{eq:CFRate}), the rate achievable by the CF scheme in the parallel binary symmetric MRC-D is given by
\begin{IEEEeqnarray}{rCl}\label{eq:CFparallelRate}
R_{\mathrm{CF}} &=& \max I(X_1^1X_1^2,\hat{Y}_R|Z), \\
&&\text{s.t. } \;\; R_1 \geq I(Y_R^1Y_R^2; \hat{Y}_R|Z)\nonumber\\
&& \text{over } p(z)p(x_1^1x_1^2)p(y_R^1|z,x_1^1)p(y_R^2|x_2)p(\hat{y}_R|y_R^1y_R^2).\nonumber
\end{IEEEeqnarray}

Let us define $h_2^{-1}(q)$ as the inverse of the entropy function $h_2(p)$ for $q\geq0$. For $q<0$, we define $h_2^{-1}(q)=0$.

As we show in the next lemma, the achievable CF rate in (\ref{eq:CFparallelRate}) is maximized by transmitting independent channel inputs over the two parallel links to the relay by setting $X_1^1\sim\text{Ber}(1/2)$, $X_1^2\sim\text{Ber}(1/2)$, and by independently compressing each of the channel outputs $Y_R^1$ and $Y_R^2$ as $\hat{Y}_R^1=Y_R^1\oplus Q_1$ and $\hat{Y}_R^2=Y_R^2\oplus Q_2$, respectively, where $Q_1\sim\text{Ber}(h_2^{-1}(1-R_1/2))$ and $Q_2\sim\text{Ber}(h_2^{-1}(1-R_1/2))$. Note that for $R_1\geq 2$, the channel outputs can be compressed errorlessly. The maximum achievable CF rate is given in the following lemma.

\begin{lemma}\label{lemm:CF}
The maximum rate achievable by CF over the parallel binary symmetric MRC-D is given by
\begin{IEEEeqnarray}{rCl}\label{eq:CFParBin}
R_{\mathrm{CF}}=2\left(1-h_2\left(\delta\star h_2^{-1}\left(1-\frac{R_1}{2}\right)\right)\right).
\end{IEEEeqnarray}
\end{lemma}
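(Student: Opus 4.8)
The plan is to establish matching achievability and converse bounds for the CF rate expression in (\ref{eq:CFparallelRate}). For achievability, I would evaluate the objective and the constraint under the specific choices announced just before the lemma, namely $X_1^1,X_1^2\sim\text{Ber}(1/2)$ independent, and the test channels $\hat Y_R^1=Y_R^1\oplus Q_1$, $\hat Y_R^2=Y_R^2\oplus Q_2$ with $Q_1,Q_2\sim\text{Ber}(h_2^{-1}(1-R_1/2))$ independent of everything else. Because the two parallel branches are conditionally independent given $Z$, both the mutual information terms factor across the branches: the compression constraint becomes $R_1\ge I(Y_R^1Y_R^2;\hat Y_R^1\hat Y_R^2|Z)=I(Y_R^1;\hat Y_R^1|Z)+I(Y_R^2;\hat Y_R^2|Z)$, and each term evaluates to $1-h_2(h_2^{-1}(1-R_1/2))=R_1/2$, so the constraint is met with equality. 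The objective $I(X_1^1X_1^2;\hat Y_R^1\hat Y_R^2|Z)$ splits the same way, and since conditioning on $Z$ removes the state from the first branch, each branch contributes $1-h_2(\delta\star h_2^{-1}(1-R_1/2))$, giving exactly the right-hand side of (\ref{eq:CFParBin}).

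The substantive part is the converse: I must show that no other choice of input distribution $p(x_1^1x_1^2)$ and compression kernel $p(\hat y_R|y_R^1y_R^2)$ can beat (\ref{eq:CFParBin}). First I would argue that it suffices to restrict to $\hat Y_R$ of the product form $(\hat Y_R^1,\hat Y_R^2)$ with each $\hat Y_R^j$ depending only on $Y_R^j$; intuitively, joint compression cannot help because the two links carry independent noise and, conditioned on $Z$, the branches decouple. Then, for a fixed per-branch rate split $R_1=r_1+r_2$ with $r_j=I(Y_R^j;\hat Y_R^j|Z)$, I would upper-bound each branch's contribution $I(X_1^j;\hat Y_R^j|Z)$ to the objective. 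The key single-branch estimate is a conditional entropy–power / Mrs.\ Gerber's Lemma–type inequality: for a binary symmetric channel $Y_R^j=X_1^j\oplus(\text{noise})$ followed by compression into $\hat Y_R^j$ subject to $I(Y_R^j;\hat Y_R^j|Z)\le r_j$, the quantity $I(X_1^j;\hat Y_R^j|Z)$ is maximized by a symmetric binary-erasure/BSC test channel, yielding $1-h_2(\delta\star h_2^{-1}(1-r_j))$.

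The main obstacle I anticipate is precisely this single-branch optimization, i.e., proving that the symmetric additive-noise test channel $\hat Y_R^j=Y_R^j\oplus Q_j$ is optimal among all compression kernels satisfying the rate constraint. The natural tool is Mrs.\ Gerber's Lemma, which lower-bounds $H(X_1^j\oplus N_j \mid \hat Y_R^j)$ in terms of $H(X_1^j\mid\hat Y_R^j)$ through the convexity of $h_2(\delta\star h_2^{-1}(\cdot))$; combined with the data-processing / rate constraint this should pin down the extremal kernel and force uniform inputs. I would then verify that the aggregate objective $\sum_j\bigl(1-h_2(\delta\star h_2^{-1}(1-r_j))\bigr)$ is Schur-concave (again by convexity of $r\mapsto h_2(\delta\star h_2^{-1}(1-r))$ in the relevant range) so that the optimal rate split is the symmetric one $r_1=r_2=R_1/2$, completing the matching converse. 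The remaining steps are routine evaluations, so I would keep the write-up focused on the product-form reduction and the Mrs.\ Gerber's Lemma step.
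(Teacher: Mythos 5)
Your achievability step is exactly the paper's (same inputs, same test channels, same evaluation) and is fine. The converse, however, has a genuine gap at its very first move: the restriction to product-form compression $\hat{Y}_R=(\hat{Y}_R^1,\hat{Y}_R^2)$ with $\hat{Y}_R^j$ generated from $Y_R^j$ alone, together with the per-branch rate split $R_1=r_1+r_2$, is asserted on intuition ("the branches decouple") but never proved --- and this is precisely the hard part of the lemma, not the single-branch optimization you flag as the main obstacle. The optimization in (\ref{eq:CFparallelRate}) is over \emph{joint} input distributions $p(x_1^1x_1^2)$ and \emph{joint} kernels $p(\hat{y}_R|y_R^1y_R^2)$; when the inputs are correlated (e.g., $X_1^1=X_1^2$) the two branches are statistically coupled even conditioned on $Z$, and neither the objective nor the constraint separates into per-branch terms, so your quantities $r_j$ are not even well matched to the problem. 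Moreover, even granting product-form kernels, imposing $r_1+r_2\le R_1$ with $r_j=I(Y_R^j;\hat{Y}_R^j|Z)$ is \emph{more} restrictive than the true constraint, since $I(Y_R^1Y_R^2;\hat{Y}_R^1\hat{Y}_R^2|Z)\le r_1+r_2$ (with strict inequality possible for correlated inputs); a converse built on the smaller feasible set does not cover all admissible schemes. A secondary issue: Mrs.\ Gerber's Lemma in its standard form applies when the additive noise is independent of the conditioning variable; in your single-branch step the compression output $W$ depends on $Y_R^j=X_1^j\oplus N_j$, so the lemma can be applied in the reverse direction $X=Y\oplus N'$ only when the input is uniform. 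Using MGL to "force uniform inputs" is therefore circular; handling general inputs needs the full conditional entropy bound of \cite{Witsenhausen:IT:75} (Lemma \ref{lem:binaryF}), which holds for all input biases.

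The paper closes exactly the gap you leave open, and does so without ever proving a decoupling statement. It relaxes the pair $(Z,\hat{Y}_R)$ to an arbitrary $W$ satisfying the Markov chain $\mathbf{X}-\bar{\mathbf{Y}}_R-W$, applies the two-letter conditional entropy bound $F_{T_{XY}^{(2)}}$, and then invokes a tensorization property (Lemma \ref{lem:FIneqN}, a generalization of \cite[Theorem 2.4]{Witsenhausen:IT:75} to arbitrarily correlated, non-i.i.d.\ pairs): $F_{T_{XY}^{(2)}}(\mathbf{q}^{(2)},2s)\ge 2F_{T_{XY}}(\mathbf{q},s)$ with $\mathbf{q}$ the average of the two marginals. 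The convexity of $F$ in $(\mathbf{q},s)$ is what absorbs joint compression and input correlation in one shot; the remaining scalar maximization over $\alpha=H(\mathbf{X})/2$ is settled by convexity of $u\mapsto h_2(\delta\star h_2^{-1}(u))$ and Proposition \ref{prop:galphaIncreasing}. If you want to keep your two-stage outline, you must actually prove the decoupling claim, and any honest proof of it will essentially reproduce Lemma \ref{lem:FIneqN}; so either establish that tensorization lemma up front, or abandon the product-form reduction and follow the relaxation route.
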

\begin{proof}
See Appendix \ref{app:CF}.
\end{proof}

Now, we consider the pDCF scheme for the parallel binary symmetric MRC-D. Although we have not been able to characterize the optimal choice of $(U,\hat{Y}_R,X_1^1,X_1^2)$ in general, we provide an achievable scheme that outperforms both DF and CF schemes and meets the cut-set bound in some regimes. Let $X_1^1\sim \text{Ber}(1/2)$ and $X_1^2\sim \text{Ber}(1/2)$ and  $U=X_1^2$, i.e., the relay decodes the channel input $X_1^2$, while $Y_R^1$ is compressed using $\hat{Y}_R=Y_R^1+Q$, where $Q\sim\text{Ber}(h^{-1}_2(2-h_2(\delta)-R_1))$. The rate achievable by this scheme is given in the following lemma.
\begin{lemma}\label{lemm:pDCFRate}
A lower bound on the  achievable pDCF rate over the parallel binary symmetric MRC-D is given by
\begin{IEEEeqnarray}{lCl}
R_{\mathrm{pDCF}} \nonumber
\geq \min\{R_1,2-h_2(\delta)\!-\!h_2\left(\delta\star h_2^{-1}\left(2-h_{2}(\delta)\!-\!R_1\right)\right)\}\nonumber.
\end{IEEEeqnarray}
\end{lemma}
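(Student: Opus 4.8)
The plan is to evaluate the general pDCF achievable rate of Theorem~\ref{th:pDCF}, specialized to the MRC-D via Lemma~\ref{lem:Achi} with $R_2=0$, for the particular auxiliary choices prescribed in the statement. Setting $R_2=0$ in the expression of Theorem~\ref{th:capreg}, the pDCF scheme achieves any rate
\begin{IEEEeqnarray}{rCl}
R_{\mathrm{pDCF}}=I(U;Y_R)+I(X_1;\hat{Y}_R|UZ),
\end{IEEEeqnarray}
subject to $R_1\geq I(U;Y_R)+I(Y_R;\hat{Y}_R|UZ)$. First I would substitute $X_1=(X_1^1,X_1^2)$ with $X_1^1,X_1^2\sim\text{Ber}(1/2)$ independent, $U=X_1^2$, and $\hat{Y}_R=Y_R^1\oplus Q$ with $Q\sim\text{Ber}(q)$ and $q=h_2^{-1}(2-h_2(\delta)-R_1)$, thereby reducing the lemma to evaluating the two rate terms above together with the constraint term $I(Y_R;\hat{Y}_R|UZ)$.

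Second, I would compute the rate terms. For the relay-decoded part, since $X_1^2$ is independent of $(X_1^1,N_1,Z)$ and hence of $Y_R^1$, while $Y_R^1$ and $Y_R^2$ are independent and both $\text{Ber}(1/2)$, a short entropy calculation gives $I(U;Y_R)=I(X_1^2;Y_R^1Y_R^2)=2-(1+h_2(\delta))=1-h_2(\delta)$, i.e.\ the clean second link contributes its BSC capacity. For the compressed part, conditioning on $U=X_1^2$ and using $X_1^2\perp(X_1^1,N_1,Z,Q)$ reduces the term to $I(X_1;\hat{Y}_R|UZ)=I(X_1^1;\hat{Y}_R|Z)$; writing $\hat{Y}_R\oplus Z=X_1^1\oplus(N_1\oplus Q)$ shows that $\hat{Y}_R$ is $\text{Ber}(1/2)$ given $Z$, while $H(\hat{Y}_R|X_1^1Z)=H(N_1\oplus Q)=h_2(\delta\star q)$, the crossover $\delta\star q$ arising because $N_1\oplus Q$ is the modulo-two sum of independent $\text{Ber}(\delta)$ and $\text{Ber}(q)$ variables. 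Hence $I(X_1;\hat{Y}_R|UZ)=1-h_2(\delta\star q)$, and the achieved rate is $2-h_2(\delta)-h_2(\delta\star q)$.

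Finally, I would verify the compression constraint. Since $Q$ is independent of everything else, $I(Y_R;\hat{Y}_R|UZ)=H(\hat{Y}_R|UZ)-H(Q)=1-h_2(q)$, so the constraint reads $R_1\geq(1-h_2(\delta))+(1-h_2(q))=2-h_2(\delta)-h_2(q)$. The choice $h_2(q)=2-h_2(\delta)-R_1$ makes this hold with equality whenever $0\leq 2-h_2(\delta)-R_1\leq 1$, which fixes $q=h_2^{-1}(2-h_2(\delta)-R_1)$ and yields the claimed formula; for $R_1\geq 2-h_2(\delta)$ the convention $h_2^{-1}(\cdot)=0$ gives errorless compression and the same expression. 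Since every bit recovered at the destination must traverse the relay--destination link of capacity $R_1$, and since a rate-$R_1$ decode-and-forward fallback (a special case of pDCF with $\hat{Y}_R=\emptyset$) covers the regime $R_1<1-h_2(\delta)$ where the prescribed $U$ cannot be forwarded, the achievable rate is stated as the minimum with $R_1$. The main obstacle I anticipate is not any single mutual-information evaluation but the independence and conditioning bookkeeping — in particular justifying the reduction $I(X_1;\hat{Y}_R|UZ)=I(X_1^1;\hat{Y}_R|Z)$, correctly identifying the $\delta\star q$ convolution, and delimiting the admissible range of $R_1$ for the closed-form $q$ so that the $\min$ with $R_1$ is cleanly justified.
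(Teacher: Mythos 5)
Your proposal is correct and follows essentially the same route as the paper's own proof: specialize the capacity expression of Theorem \ref{th:capreg} with $R_2=0$ to the choice $U=X_1^2$, $\hat{Y}_R=Y_R^1\oplus Q$, evaluate the mutual information terms to obtain $2-h_2(\delta)-h_2(\delta\star q)$, and set $h_2(q)=2-h_2(\delta)-R_1$ so that the compression constraint $R_1\geq I(U;Y_R)+I(Y_R;\hat{Y}_R|UZ)$ holds with equality. If anything, you are more careful than the paper (whose appendix also swaps the superscript labels of the two subchannels, evidently a typo): it never treats the regime $R_1<1-h_2(\delta)$, where the prescribed $q$ does not exist and the constraint cannot be met, whereas your DF fallback argument is precisely what justifies the $\min$ with $R_1$ in the stated bound there.
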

\begin{proof}
See Appendix \ref{app:pDCFRate}.
\end{proof}
\begin{figure}[t!]
\centering
\includegraphics[width=0.5\textwidth]{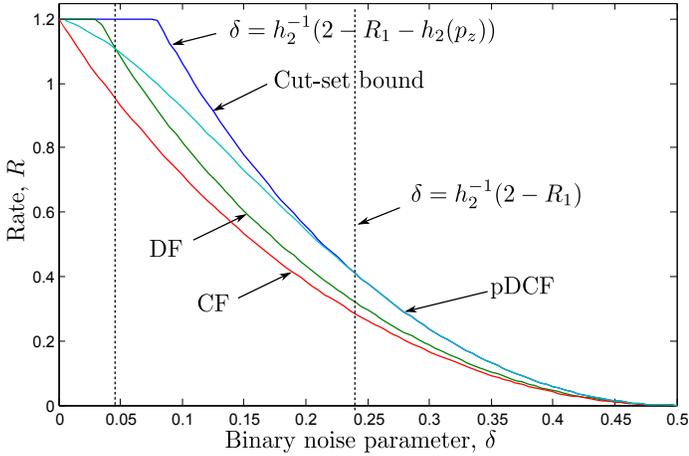}
\vspace{-1 mm} \caption{Achievable rates and the cut-set upper bound for the  parallel binary symmetric MRC-D with respect to the binary noise parameter $\delta$, for $R_1=1.2$ and $p_z=0.15$.}
\label{fig:BinaryOpt}
\end{figure}

We notice that for $p_z\leq h_2^{-1}\left(2-h_{2}(\delta)-R_1\right)$, or equivalently, $\delta\leq h_2^{-1}\left(2-h_{2}(p_z)-R_1\right)$, the proposed pDCF is outperformed by DF. In this regime, pDCF can achieve the same performance by decoding both channel inputs, reducing to DF.

Comparing the cut-set bound expression in (\ref{eq:BinCS}) with $R_{\mathrm{DF}}$ in (\ref{eq:BinDF}) and $R_{\mathrm{CF}}$ in (\ref{eq:CFParBin}),
we observe that DF achieves the cut-set bound if $R_1\leq 2-h(\delta\star p_z)-h(\delta)$ while $R_{\mathrm{CF}}$ coincides with the cut-set bound if $R_1\geq2$. On the other hand, the proposed suboptimal pDCF scheme achieves the cut-set bound if $R_1\geq 2-h_2(\delta)$, i.e., for $\delta\geq h^{-1}_2(2-R_1)$. Hence, the capacity of the parallel binary symmetric MRC-D in this regime is achieved by pDCF, while both DF and CF are suboptimal, as stated in the next lemma.
\begin{lemma}
If $R_1<2$ and $\delta\geq h^{-1}_2(2-R_1)$, pDCF achieves the capacity of the parallel binary symmetric MRC-D, while pure CF and DF are both suboptimal under these constraints. For $R_1\geq2$, both  CF and pDCF achieve the capacity.
\end{lemma}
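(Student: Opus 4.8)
The plan is to reduce both hypotheses to a single threshold on $R_1$ and then read off the capacity by sandwiching the pDCF rate between the cut-set bound from above and the pDCF lower bound of Lemma \ref{lemm:pDCFRate} from below. First I would note that, since $h_2$ is strictly increasing on $[0,1/2]$, the condition $\delta\geq h_2^{-1}(2-R_1)$ is equivalent to $h_2(\delta)\geq 2-R_1$, i.e. $R_1\geq 2-h_2(\delta)$. Together with $R_1<2$ this forces $h_2(\delta)>0$, hence $\delta>0$, and fixes the active branch of every $\min$: in (\ref{eq:BinCS}) we have $R_1\geq 2-h_2(\delta)> 2-2h_2(\delta)=2(1-h_2(\delta))$, so $R_{\mathrm{CS}}=2(1-h_2(\delta))$.

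For the optimality of pDCF I would evaluate the bound of Lemma \ref{lemm:pDCFRate} in this regime. Since $R_1\geq 2-h_2(\delta)$, the argument satisfies $2-h_2(\delta)-R_1\leq 0$, so by the convention $h_2^{-1}(q)=0$ for $q\leq 0$ we get $h_2^{-1}(2-h_2(\delta)-R_1)=0$ and $\delta\star 0=\delta$. The bound then collapses to $\min\{R_1,\,2-2h_2(\delta)\}=2(1-h_2(\delta))=R_{\mathrm{CS}}$. Writing $R_{\mathrm{pDCF}}$ for the achievable pDCF rate, the chain $R_{\mathrm{CS}}\geq\mathcal{C}\geq R_{\mathrm{pDCF}}\geq 2(1-h_2(\delta))=R_{\mathrm{CS}}$ (the first inequality because the cut-set bound upper bounds capacity, the last by Lemma \ref{lemm:pDCFRate}) forces equality throughout, so pDCF attains $\mathcal{C}=2(1-h_2(\delta))$.

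For the suboptimality of CF and DF I would use only strict monotonicity of $h_2$ on $[0,1/2)$ and the elementary fact that $a\star b>a$ whenever $a<1/2$ and $b>0$. Assuming the nondegenerate regime $\delta<1/2$ (needed for $\mathcal{C}>0$), for CF the hypothesis $0<R_1<2$ gives $\beta\triangleq h_2^{-1}(1-R_1/2)\in(0,1/2)$, whence $\delta<\delta\star\beta<1/2$ and therefore $h_2(\delta\star\beta)>h_2(\delta)$; by (\ref{eq:CFParBin}) this yields $R_{\mathrm{CF}}=2(1-h_2(\delta\star\beta))<2(1-h_2(\delta))=\mathcal{C}$. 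For DF, assuming the nondegenerate state $0<p_z<1/2$, the same inequality gives $\delta\star p_z>\delta$ with $\delta\star p_z<1/2$, so $2-h_2(\delta\star p_z)-h_2(\delta)<2-2h_2(\delta)=\mathcal{C}$; since $R_{\mathrm{DF}}$ in (\ref{eq:BinDF}) is a $\min$ whose second argument is exactly this quantity, $R_{\mathrm{DF}}\leq 2-h_2(\delta\star p_z)-h_2(\delta)<\mathcal{C}$.

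Finally, for $R_1\geq 2$ the same substitution settles both claims: $1-R_1/2\leq 0$ forces $h_2^{-1}(1-R_1/2)=0$, so (\ref{eq:CFParBin}) gives $R_{\mathrm{CF}}=2(1-h_2(\delta))$, and $2-h_2(\delta)-R_1\leq 0$ makes the pDCF bound equal $2(1-h_2(\delta))$ as well; since $R_1\geq 2\geq 2(1-h_2(\delta))$ the cut-set bound is also $2(1-h_2(\delta))$, so both schemes meet it and achieve capacity. I expect no genuine conceptual obstacle here, as the rate formulas of the preceding lemmas do the heavy lifting; the only points requiring care are the boundary behaviour of $h_2^{-1}$ at a nonpositive argument and the implicit nondegeneracy assumptions $\delta<1/2$ and $0<p_z<1/2$ that are needed to make the two suboptimality inequalities strict.
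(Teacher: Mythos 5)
Your proof is correct and follows essentially the same route as the paper: the paper also establishes this lemma by comparing the cut-set bound (\ref{eq:BinCS}) with $R_{\mathrm{DF}}$ in (\ref{eq:BinDF}), $R_{\mathrm{CF}}$ in (\ref{eq:CFParBin}), and the pDCF lower bound of Lemma \ref{lemm:pDCFRate}, observing that the pDCF bound collapses to $2(1-h_2(\delta))=R_{\mathrm{CS}}$ exactly when $R_1\geq 2-h_2(\delta)$, i.e., $\delta\geq h_2^{-1}(2-R_1)$. Your treatment is in fact slightly more careful than the paper's, since you make explicit the nondegeneracy assumptions ($\delta<1/2$, $p_z\in(0,1/2)$) needed for the strict suboptimality of CF and DF, which the paper leaves implicit.
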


The achievable rates of DF, CF and pDCF, together with the cut-set bound are shown in Figure \ref{fig:BinaryOpt} with respect to $\delta$ for $R_1=1.2$ and $p_z=0.15$. We observe that in this setup, DF outperforms CF in general, while for $\delta\leq h^{-1}_2(2-R_1-h_2(p_z))=0.0463$, DF outperforms the proposed suboptimal pDCF scheme as well. We also observe that pDCF meets the cut-set bound for $\delta\geq h^{-1}_2(2-R_1)=0.2430$, characterizing the capacity in this regime, and proving the suboptimality of both the DF and CF schemes when they are used on their own.

\subsection{Multihop Binary Symmetric Channel}\label{sec:binarySingle}

In order to gain further insights into the proposed pDCF scheme, we look into the \emph{binary symmetric} MRC-D, in which, there is only a single channel connecting the source to the relay, given by
\begin{IEEEeqnarray}{rCl}\label{eq:singleBinary}
Y_R=X_1\oplus N\oplus Z,
\end{IEEEeqnarray}
where $N\sim\text{Ber}(\delta)$ and $Z\sim\text{Ber}(p_z)$.

Similarly to Section \ref{sec:Binary}, the cut-set bound and the maximum achievable rates for DF and CF are found as
\begin{IEEEeqnarray}{rCl}
R_{\mathrm{CS}}&=& \min\{R_1, 1- h_2(\delta)\},\label{eq:CSbin}\\
R_{\mathrm{DF}}&=& \min\{R_1, 1- h_2(\delta\star p_z)\},\\
R_{\mathrm{CF}}&=& 1-h_2(\delta\star h_2^{-1}(1-R_1))),
\end{IEEEeqnarray}
where $R_{\mathrm{DF}}$ is achieved by $X_1\sim\text{Ber}(1/2)$, and $R_{\mathrm{CF}}$ can be shown  to be maximized by $X_1\sim\text{Ber}(1/2)$ and $\hat{Y}_R=Y_R\oplus Q$, where $Q\sim\text{Ber}(h_2^{-1}(1-R_1))$ similarly to Lemma \ref{lemm:CF}. Note that, for $Y_R$ independent of $Z$, i.e., $p_z=0$, DF achieves the cut-set bound while CF is suboptimal. However, CF outperforms DF whenever $p_z\geq h_2^{-1}(1-R_1)$.

 For the pDCF scheme, we consider binary $(U,X_1,\hat{Y}_R)$, with $U\sim\text{Ber}(p)$, a superposition codebook $X_1=U\oplus W$, where $W\sim\text{Ber}(q)$, and $\hat{Y}_R=Y_R\oplus Q$ with $Q\sim\text{Ber}(\alpha)$. As stated in the next lemma, the maximum achievable rate of this pDCF scheme is obtained by reducing it to either DF or CF, depending on the values of $p_z$ and $R_1$.
 \begin{lemma}
 For the binary symmetric MRC-D, pDCF with binary $(U,X_1,\hat{Y}_R)$ achieves the following rate.
\begin{IEEEeqnarray}{rCl}
R_{\mathrm{pDCF}}&=& \max\{R_{\mathrm{DF}},R_{\mathrm{CF}}\}\\
&=&\!\begin{cases}
 \min\{R_1, 1- h_2(\delta\star p_z)\}&\text{if } p_z< h_2^{-1}(1-R_1),\\
1-h_2(\delta\star h_2^{-1}(1-R_1))&\text{if
} p_z\geq h_2^{-1}(1-R_1).\nonumber
 \end{cases}
\end{IEEEeqnarray}
\end{lemma}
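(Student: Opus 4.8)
The plan is to evaluate the pDCF rate of Theorem~\ref{th:capreg} with $R_2=0$, namely $\sup\{I(U;Y_R)+I(X_1;\hat{Y}_R|UZ)\}$ subject to $R_1\geq I(U;Y_R)+I(Y_R;\hat{Y}_R|UZ)$, under the stated binary parametrization, and to show the supremum over $(p,q,\alpha)\in[0,1/2]^3$ is attained at one of two corners, collapsing pDCF to either pure DF or pure CF. First I would compute the three mutual informations using $Y_R=X_1\oplus N\oplus Z$ and the group structure of $\oplus$: conditioning on $(U,Z)$ turns the residual into the channel $W\mapsto W\oplus N\oplus Q$, so that
\begin{IEEEeqnarray}{rCl}
I(U;Y_R)&=&h_2(p\star q\star\delta\star p_z)-h_2(q\star\delta\star p_z),\nonumber\\
I(X_1;\hat{Y}_R|UZ)&=&h_2(q\star\delta\star\alpha)-h_2(\delta\star\alpha),\nonumber\\
I(Y_R;\hat{Y}_R|UZ)&=&h_2(q\star\delta\star\alpha)-h_2(\alpha).\nonumber
\end{IEEEeqnarray}

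Next I would eliminate $p$. Since $h_2(p\star q\star\delta\star p_z)$ is increasing in $p$ and enters both objective and constraint with the same sign, the quantity $A\triangleq I(U;Y_R)$ can be treated as a free variable in $[0,1-h_2(q\star\delta\star p_z)]$, raised until it meets either its own ceiling ($p=1/2$) or the constraint $A\leq R_1-h_2(q\star\delta\star\alpha)+h_2(\alpha)$. Writing the objective as $A+h_2(q\star\delta\star\alpha)-h_2(\delta\star\alpha)$, the constraint-binding branch gives the value $R_1-\big(h_2(\delta\star\alpha)-h_2(\alpha)\big)\leq R_1$, which is maximized at $\alpha=1/2$ and feasible only when $R_1\leq 1-h_2(\delta\star p_z)$ (the DF-optimal regime, returning $R_1$); the ceiling branch ($p=1/2$) leaves
\begin{IEEEeqnarray}{rCl}
J(q,\alpha)&=&1-h_2(q\star\delta\star p_z)+h_2(q\star\delta\star\alpha)-h_2(\delta\star\alpha).\nonumber
\end{IEEEeqnarray}

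The technical heart is to show $J$ is maximized over $q$ at $q\in\{0,1/2\}$. Substituting $s\triangleq q\star\delta\in[\delta,1/2]$ (so $ds/dq=1-2\delta\geq 0$) and differentiating, $\partial J/\partial s=G(\alpha)-G(p_z)$ with $G(\beta)\triangleq(1-2\beta)\log\frac{1-s\star\beta}{s\star\beta}$. The key observation is that $G(\beta)$ is a product of two nonnegative factors that are both decreasing in $\beta$ on $[0,1/2]$ (namely $1-2\beta$, and $\log\frac{1-s\star\beta}{s\star\beta}$, since $s\star\beta$ increases toward $1/2$), hence $G$ is decreasing. Therefore $\partial J/\partial s$ has a fixed sign set solely by whether $\alpha\gtrless p_z$: if $\alpha\geq p_z$ the optimum is $q=0$, collapsing $J$ to $1-h_2(\delta\star p_z)$ (pure DF); if $\alpha\leq p_z$ the optimum is $q=1/2$, collapsing $J$ to $1-h_2(\delta\star\alpha)$, whereupon the constraint forces $\alpha\geq h_2^{-1}(1-R_1)$ and minimizing $\alpha$ returns $R_{\mathrm{CF}}=1-h_2(\delta\star h_2^{-1}(1-R_1))$ (pure CF). I expect the monotonicity of $G$ to be the main obstacle, as it is precisely what forbids a strictly interior mixture of partial decoding and lossy compression from beating both pure schemes.

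Finally I would assemble the corners, $R_{\mathrm{pDCF}}=\max\{R_{\mathrm{DF}},R_{\mathrm{CF}}\}$ with $R_{\mathrm{DF}}=\min\{R_1,1-h_2(\delta\star p_z)\}$, and identify the winner. Since $h_2$ and $\beta\mapsto\delta\star\beta$ are increasing on $[0,1/2]$, the comparison $1-h_2(\delta\star p_z)\gtrless R_{\mathrm{CF}}$ reduces to $p_z\lessgtr h_2^{-1}(1-R_1)$; combined with $R_{\mathrm{CF}}\leq 1-h_2\big(h_2^{-1}(1-R_1)\big)=R_1$, this yields $R_{\mathrm{DF}}\geq R_{\mathrm{CF}}$ exactly when $p_z<h_2^{-1}(1-R_1)$ and $R_{\mathrm{CF}}\geq R_{\mathrm{DF}}$ otherwise. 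Note this threshold is the same one that governs the branch condition $\alpha\gtrless p_z$ above, so the two reductions are consistent, and together they produce the claimed piecewise expression.
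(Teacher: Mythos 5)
Your proposal is correct, but there is no proof in the paper to compare it against: this lemma is stated without proof (it is the only result in Section \ref{sec:Examples} with no proof or appendix reference), the authors apparently regarding it as analogous to the Gaussian case. Checking your argument on its merits: the three mutual information evaluations are right; eliminating $p$ by treating $A=I(U;Y_R)\in[0,1-h_2(q\star\delta\star p_z)]$ as a free variable is legitimate because $p$ enters both the objective and the constraint only through $I(U;Y_R)$; and the key monotonicity step holds — for each fixed $s=q\star\delta$, the function $G(\beta)=(1-2\beta)\log\frac{1-s\star\beta}{s\star\beta}$ is a product of two nonnegative decreasing factors on $[0,1/2]$, so $\partial J/\partial s = G(\alpha)-G(p_z)$ has a sign determined by $\alpha\gtrless p_z$ uniformly in $s$, which forces the corner $q\in\{0,1/2\}$ and hence the collapse to pure DF or pure CF. The closest analogue in the paper is Appendix \ref{app:pDCFGauss}, the proof of Lemma \ref{lemm:pDCFGauss} for the Gaussian MRC-D, which follows the same philosophy: substitute the constraint-optimal compression noise and show that the derivative of the resulting one-parameter objective has a fixed sign (governed by $\rho^2$ versus $2^{-2R_1}(1+P)$), forcing a corner solution. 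Your binary argument is a genuine addition, since the binary parametrization carries an extra degree of freedom ($p$ and $q$ rather than a single power-split $\alpha$), which your two-stage reduction (eliminate $p$, then corner in $q$) resolves. Two small points to tighten: justify the restriction of $(p,q,\alpha)$ to $[0,1/2]^3$ (a standard binary relabeling symmetry), and in the branch $\alpha\leq p_z$ when $p_z<h_2^{-1}(1-R_1)$, where the requirement $\alpha\geq h_2^{-1}(1-R_1)$ is incompatible with the branch, state explicitly that the value is then $R_1+h_2(\alpha)-h_2(\delta\star\alpha)$, which is dominated by $R_{\mathrm{DF}}$; your closed-form comparison in the final paragraph implies this, but only implicitly.
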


This result justifies the pDCF scheme proposed in Section \ref{sec:Binary} for the
  parallel binary symmetric MRC-D. Since the channel $p(y_1^2|x_2)$ is independent of the channel state $Z$, the largest rate is are achieved if the relay decodes $X_1^2$ from $Y_R^2$. However, for channel $p(y_1^1|x_1,z)$, which depends on $Z$, the relay either decodes $X_1^1$, or compress $Y_R^1$, depending on $p_z$.

\subsection{Multihop Gaussian Channel with State}\label{sec:GaussEx}
Next, we consider an AWGN multihop channel, called \emph{Gaussian} MRC-D, in which the source-relay link
is characterized by $Y_R=X_1+V$, while the destination has access to correlated state information $Z$. We assume that $V$ and $Z$ are zero-mean
jointly Gaussian random variables with a covariance matrix
\begin{IEEEeqnarray}{rCl}\label{eq:cov_noise}
\mathbf{C}_{ZV}=\left[\begin{matrix}1&\rho\\ \rho &
1\end{matrix}\right].
\end{IEEEeqnarray}
The channel input
at the source has to satisfy the power constraint $E[|X_1^n|^2]\leq
nP$. Finally, the relay and the destination are connected by a
noiseless link of rate $R_1$ (see Figure
\ref{fig:SingleGaussianMultiHop} for the channel model).

\begin{figure}[t!]
\centering
\includegraphics[width=0.5\textwidth]{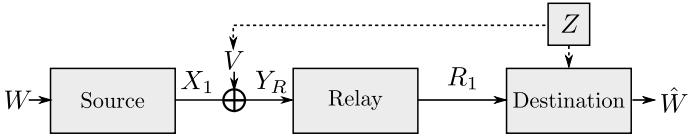}
\vspace{-3 mm} \caption{The multihop Gaussian relay channel with
source-relay channel state information available at the
destination.} \label{fig:SingleGaussianMultiHop}
\end{figure}

In this case, the cut-set bound is given by
\begin{IEEEeqnarray}{rCl}
R_{\mathrm{CS}}=\min\left\{R_1,\frac{1}{2}\log\left(1+\frac{P}{1-\rho^2}\right)\right\}.
\end{IEEEeqnarray}
It easy to
characterize the optimal DF rate, achieved by a Gaussian input, as follows:
\begin{IEEEeqnarray}{rCl}
R_{\mathrm{DF}}=\min\left\{R_1,\frac{1}{2}\log(1+P)\right\}.
\end{IEEEeqnarray}

For CF and pDCF, we consider the achievable rate when the random
variables $(X_1,U,\hat{Y}_R)$ are constrained to be jointly
Gaussian, which is a common assumption in evaluating achievable
rates, yet potentially suboptimal. For CF, we generate the
compression codebook using $\hat{Y}_R=Y_R+Q$, where
$Q\sim\mathcal{N}(0,\sigma^2_{q})$. Optimizing over $\sigma^2_q$,
the maximum achievable rate is given by
\begin{IEEEeqnarray}{rCl}
R_{\mathrm{CF}}&=&R_1-\frac{1}{2}\log\left(\frac{P+2^{2R_1}(1-\rho^2)}{P+1-\rho^2}\right).
\end{IEEEeqnarray}

For pDCF, we let $U\sim\mathcal{N}(0,\alpha P_1)$, and $X_{1} = U+T$ to
be a superposition codebook where $T$ is independent of $U$ and
distributed as $T\sim\mathcal{N}(0,\bar{\alpha} P_1)$, where
$\bar{\alpha}\triangleq 1-\alpha$. We generate a quantization
codebook using the test channel $\hat{Y}_{R}=Y_R+Q$ as in CF. Next
lemma shows that with this choice of random variables, pDCF reduces
either to pure DF or pure CF, similarly to the multihop binary model in Section \ref{sec:binarySingle}.
\begin{lemma}\label{lemm:pDCFGauss}
 The optimal achievable rate for pDCF with jointly Gaussian $(X_1,U,\hat{Y}_R)$ is given by
\begin{IEEEeqnarray}{lCl}
R_{\mathrm{pDCF}}=\max\{R_{\mathrm{DF}},R_{\mathrm{CF}}\}\\
 =\begin{cases}
\min\left\{R_1,1/2\log(1+P)\right\}&\text{if } \rho^2\leq 2^{-2R_1}(1+P),\\
R_1-\frac{1}{2}\log\left(\frac{P+2^{2R_1}(1-\rho^2)}{P+1-\rho^2}\right)&\text{if
} \rho^2>2^{-2R_1}(1+P).\nonumber
 \end{cases}
\end{IEEEeqnarray}
\end{lemma}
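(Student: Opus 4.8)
The plan is to evaluate the pDCF rate and its rate constraint, obtained by setting $R_2=0$ in Theorem \ref{th:capreg}, for the proposed jointly Gaussian ensemble, and then to solve the resulting optimization explicitly. First I would note that using full power $P_1=P$ is optimal: the term $I(U;Y_R)$ enters the objective and the constraint identically, so any unused power is best reassigned to the decoded part $U$ until the constraint binds. Using that conditioned on $Z$ the noise satisfies $V\sim\mathcal{N}(\rho Z,\,1-\rho^2)$, I would then compute the three relevant Gaussian mutual informations from $Y_R=U+T+V$ and $\hat Y_R=Y_R+Q$: namely $I(U;Y_R)=\tfrac12\log\frac{P+1}{\bar\alpha P+1}$, the forwarded-refinement term $I(X_1;\hat Y_R|UZ)=\tfrac12\log\frac{\bar\alpha P+(1-\rho^2)+\sigma_q^2}{(1-\rho^2)+\sigma_q^2}$, and the compression-load term $I(Y_R;\hat Y_R|UZ)=\tfrac12\log\frac{\bar\alpha P+(1-\rho^2)+\sigma_q^2}{\sigma_q^2}$.

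Next I would introduce the change of variables $c\triangleq\bar\alpha P\in[0,P]$ (the undecoded power), $s\triangleq\sigma_q^2$, and the shorthands $a\triangleq 1-\rho^2$ and $b\triangleq 2^{2R_1}$. The objective becomes $\tfrac12\log\frac{P+1}{c+1}+\tfrac12\log\bigl(1+\tfrac{c}{a+s}\bigr)$, which is strictly decreasing in $s$, while the constraint's right-hand side is also decreasing in $s$; hence at the optimum the rate constraint $R_1\geq I(U;Y_R)+I(Y_R;\hat Y_R|UZ)$ must hold with equality. Solving the binding constraint for $s$ gives $s=(c+a)/(\beta-1)$ with $\beta\triangleq b(c+1)/(P+1)$, valid whenever $\beta>1$. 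Substituting this back eliminates $s$ and collapses the problem to the single-variable maximization of $\tfrac12\log\frac{b(c+a)(P+1)}{ab(c+1)+c(P+1)}$ over $c\in[0,P]$.

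Finally, I would observe that, up to the constant factor $b(P+1)$, the argument to be maximized is the linear-fractional function $h(c)=\frac{c+a}{(ab+P+1)\,c+ab}$, whose derivative has the constant sign of $ab-a(ab+P+1)=(1-\rho^2)\bigl[2^{2R_1}\rho^2-(P+1)\bigr]$. Thus the maximum is attained at an endpoint: at $c=0$ (i.e.\ $\alpha=1$, the relay decodes all of $X_1$ — pure DF) when $\rho^2<2^{-2R_1}(1+P)$, and at $c=P$ (i.e.\ $\alpha=0$, the relay compresses everything — pure CF) when $\rho^2>2^{-2R_1}(1+P)$; evaluating $h$ at these endpoints recovers $\tfrac12\log(1+P)$ and $R_{\mathrm{CF}}$ exactly, establishing $R_{\mathrm{pDCF}}=\max\{R_{\mathrm{DF}},R_{\mathrm{CF}}\}$ with the stated threshold. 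I expect the main obstacle to be the feasibility bookkeeping around $\beta>1$ and the $\min\{R_1,\cdot\}$ in $R_{\mathrm{DF}}$: in the CF-optimal regime the physical bound $\rho^2\leq 1$ forces $2^{2R_1}>1+P$, so $R_{\mathrm{DF}}=\tfrac12\log(1+P)$ with no truncation; whereas in the DF-optimal regime with $R_1\leq\tfrac12\log(1+P)$ the binding solution at $c=0$ is infeasible ($s\leq 0$), and the decoding cap $I(U;Y_R)\leq R_1$ then yields rate $R_1$, precisely reproducing the $\min\{R_1,\tfrac12\log(1+P)\}$ in the DF expression.
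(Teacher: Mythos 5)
Your proposal is correct and follows essentially the same route as the paper's proof: evaluate the jointly Gaussian mutual informations, argue that the compression constraint must bind (smallest feasible $\sigma_q^2$, the paper's $\sigma_q^2=f(\alpha)$), substitute back to get a single-variable objective (your $h(c)$ is exactly the paper's $G(\alpha)$ under $c=\bar{\alpha}P$), and conclude from the sign of $(1-\rho^2)\left[2^{2R_1}\rho^2-(1+P)\right]$ that the optimum sits at an endpoint, i.e., pure DF or pure CF, with the stated threshold. If anything, your feasibility bookkeeping — noting that when $R_1\leq\frac{1}{2}\log(1+P)$ the binding solution at $c=0$ is infeasible and the cap $I(U;Y_R)\leq R_1$ produces the $\min\{R_1,\cdot\}$ truncation — is more careful than the paper's own endpoint argument, which asserts $\alpha^*=1$ in the DF regime via an inequality that does not hold there.
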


\begin{proof}
See Appendix \ref{app:pDCFGauss}.
\end{proof}
\begin{figure}[t!]
\centering
\includegraphics[width=0.5\textwidth]{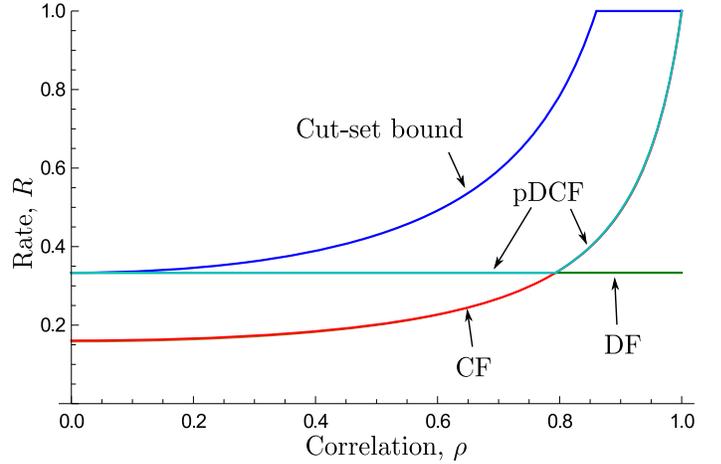}
\vspace{-3 mm} \caption{Achievable rates and the cut-set upper bound
for the multihop AWGN relay channel with source-relay channel state
information at the destination for $R_1=1$ and $P=0.3$.}
\label{fig:GaussianTwoHop} \vspace{-3 mm}
\end{figure}

In Figure \ref{fig:GaussianTwoHop} the achievable rates are compared
with the cut-set bound. It is shown that DF achieves the best rate
when the correlation coefficient $\rho$ is low, i.e., when the
destination has low quality channel state information, while CF
achieves higher rates for higher values of $\rho$. It is seen that
pDCF achieves the best of the two transmission schemes. Note also
that for $\rho=0$ DF meets the cut-set bound, while for $\rho=1$
CF meets the cut-set bound.

Although this example proves the
suboptimality of the DF scheme for the channel model under
consideration, it does not necessarily lead to the suboptimality of the CF scheme as we have constrained the auxiliary random variables to be Gaussian.

\section{Comparison with the Cut-Set Bound}\label{seq:cutset}
In the examples considered in Section \ref{sec:Examples}, we have
seen that for certain conditions, the choice of certain random
variables allows us to show that the cut-set bound and the capacity
coincide. For example, we have seen that for the parallel binary symmetric MRC-D the proposed pDCF scheme achieves the cut-set bound for $\delta\geq h^{-1}_2(2-R_1)$, or  Gaussian random variables meet the cut-set bound for $\rho=0$ or $\rho=1$ in the Gaussian MRC-D. An interesting question is whether the capacity expression in Theorem \ref{th:capreg} always coincides with the cut-set bound or not; that is, whether the cut-set bound is tight for the relay channel model under consideration.

To address this question, we consider the multihop binary channel in (\ref{eq:singleBinary}) for $Z\sim\text{Ber}(1/2)$. The  capacity $\mathcal{C}$ of this channel is given in the following lemma.
\begin{lemma}\label{lemm:CutCap}
The capacity of the binary symmetric MRC-D with $Y_R=X_1\oplus N\oplus Z$, where $N\sim\text{Ber}(\delta)$ and $Z\sim\text{Ber}(1/2)$, is achieved by CF and pDCF, and is given by
\begin{IEEEeqnarray}{rCl}
\mathcal{C}=1-h_2(\delta\star h_2^{-1}(1-R_1)).
\end{IEEEeqnarray}
\end{lemma}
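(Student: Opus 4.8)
The plan is to evaluate the capacity formula of Theorem~\ref{th:capreg} for this specific channel rather than rederive a converse from scratch. First I would specialize \eqref{eq:capreg} to the multihop setting by setting $R_2=0$. The decisive observation is that a fair state decouples the relay observation from the source input: since $Z\sim\text{Ber}(1/2)$ is independent of $(X_1,N,U)$, the sum $N\oplus Z$ is $\text{Ber}(1/2)$ and independent of $(X_1,U)$, so $Y_R=X_1\oplus N\oplus Z$ is uniform and independent of $(X_1,U)$. Hence $I(U;Y_R)=0$ for every pmf in $\mathcal{P}$, and the capacity collapses to
\begin{IEEEeqnarray}{rCl}
\mathcal{C} &=& \sup_{\mathcal{P}}\ I(X_1;\hat{Y}_R|UZ)\quad\text{s.t.}\quad R_1\geq I(Y_R;\hat{Y}_R|UZ).\nonumber
\end{IEEEeqnarray}
The inequality $\mathcal{C}\geq 1-h_2(\delta\star h_2^{-1}(1-R_1))$ then needs no new work: it is exactly the CF rate established in Section~\ref{sec:binarySingle} (take $U=\emptyset$, $X_1\sim\text{Ber}(1/2)$, $\hat{Y}_R=Y_R\oplus Q$ with $Q\sim\text{Ber}(h_2^{-1}(1-R_1))$). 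Thus the entire content is the matching converse: I must show that neither the decode layer $U$ nor any non-uniform input can beat this symmetric CF point.

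For the converse I would isolate a single master inequality, for one input $p(x_1)$ and one test channel $p(\hat{y}_R|y_R)$: writing $f(t)\triangleq h_2(\delta\star h_2^{-1}(t))$, which is convex and increasing on $[0,1]$, I claim
\begin{IEEEeqnarray}{rCl}
I(X_1;\hat{Y}_R|Z) &\leq& 1-f\big(1-I(Y_R;\hat{Y}_R|Z)\big).\nonumber
\end{IEEEeqnarray}
Granting this, the auxiliary $U$ is dispatched cleanly: applying the bound to each $U=u$ (which is just a single input $p(x_1|u)$) and averaging gives $I(X_1;\hat{Y}_R|UZ)\leq\sum_u p(u)\big[1-f(1-I(Y_R;\hat{Y}_R|U{=}u,Z))\big]$. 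Since $C\mapsto 1-f(1-C)$ is concave and nondecreasing, Jensen's inequality and then the constraint $\sum_u p(u)I(Y_R;\hat{Y}_R|U{=}u,Z)=I(Y_R;\hat{Y}_R|UZ)\leq R_1$ yield $I(X_1;\hat{Y}_R|UZ)\leq 1-f(1-R_1)=1-h_2(\delta\star h_2^{-1}(1-R_1))$, matching the achievable rate.

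The heart of the proof is therefore the master inequality. For a uniform input the backward relation $X_1=Y_R\oplus Z\oplus N$ is, conditioned on $Z$, a clean $\text{BSC}(\delta)$ whose noise $N$ is independent of $Y_R$ (this is exactly what $Z\sim\text{Ber}(1/2)$ buys) and of $\hat{Y}_R$ given $Y_R$; Mrs.\ Gerber's Lemma applied along $\hat{Y}_R-Y_R-X_1$ then gives $H(X_1|\hat{Y}_R Z)\geq f\big(H(Y_R|\hat{Y}_R Z)\big)$, and combining with $H(Y_R|\hat{Y}_R Z)\geq H(Y_R|Z)-I(Y_R;\hat{Y}_R|Z)=1-I(Y_R;\hat{Y}_R|Z)$ together with $H(X_1|Z)=1$ and the monotonicity of $f$ produces the inequality with equality. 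The obstacle is that this backward use of Mrs.\ Gerber's Lemma requires the noise to be independent of $Y_R$ given $Z$, which fails once the input is non-uniform; hence I expect the real difficulty to be establishing that the uniform input is optimal, i.e.\ that the master inequality survives for arbitrary $p(x_1)$. I would settle this through the $\mathbb{Z}_2$ symmetry of the binary symmetric channel (flipping $X_1$ leaves both $I(X_1;\hat{Y}_R|Z)$ and $I(Y_R;\hat{Y}_R|Z)$ unchanged) together with concavity of the attainable trade-off in the input distribution, so that symmetrizing any input toward $\text{Ber}(1/2)$ cannot lower the objective while keeping the compression budget fixed. Finally, the boundary case $R_1\geq1$ is immediate, since then $h_2^{-1}(1-R_1)=0$ and $I(X_1;\hat{Y}_R|UZ)\leq I(X_1;Y_R|UZ)=H(Y_R|UZ)-h_2(\delta)\leq 1-h_2(\delta)$, which again matches the claim.
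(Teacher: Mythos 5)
Your reduction via $I(U;Y_R)=0$, your CF achievability step, and even the skeleton of your converse (a per-$u$ Mrs.\ Gerber-type inequality followed by Jensen's inequality through the concave nondecreasing map $C\mapsto 1-f(1-C)$) all coincide with the paper's proof in Appendix \ref{app:CutCap}, and your backward-MGL derivation of the master inequality for \emph{uniform} inputs is sound. The genuine gap is exactly the point you flag and then wave off: the master inequality for skewed conditional inputs $p(x_1|u)$. Your proposed fix --- $\mathbb{Z}_2$ symmetry plus ``concavity of the attainable trade-off in the input distribution'' --- fails because the concavity runs in the wrong direction for the constraint. For a fixed test channel $p(\hat{y}_R|y_R)$, \emph{both} $I(X_1;\hat{Y}_R|Z)$ and $I(Y_R;\hat{Y}_R|Z)$ are concave functionals of $p(x_1)$: indeed $H(\hat{Y}_R|Y_R)$ does not depend on $p(x_1)$ at all (since $Y_R$ is uniform no matter what the input is), while $H(\hat{Y}_R|Z)$ is concave in $p(x_1)$. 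Hence replacing a skewed input by the average of itself and its flip can only \emph{increase} the compression rate $I(Y_R;\hat{Y}_R|Z)$, so the symmetrized scheme need not respect the budget $R_1$. The other reading of your mixing step --- time-sharing between a scheme and its flipped copy --- is not a symmetrization at all: the time-sharing bit is just an extra component of $U$, and conditioned on it the input is as skewed as before. So the inequality is never established precisely in the regime where its easy proof breaks down, namely when the backward ``noise'' $X_1\oplus Y_R\oplus Z=N$ is no longer independent of $Y_R$ given $Z$.

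The paper closes this case with machinery your sketch lacks. It expresses the per-$u$ conditional entropies through the Witsenhausen--Wyner conditional entropy bound $F_{T_{XY}}(q,s)$, whose decisive feature (Lemma \ref{lem:binaryF}) is that for the binary symmetric matrix the bound $h_2(\delta\star h_2^{-1}(s))$ holds \emph{for every marginal} $q$, i.e., it is insensitive to the skewness of the per-$u$ distribution; it then applies Jensen via the convexity of $h_2(\delta\star h_2^{-1}(\cdot))$ (Lemma \ref{lemm:fuConvex}), lower bounds $H(\bar{Y}|U)\geq H(X_1|U)$, and is left with a scalar maximization over $\alpha=H(X_1|U)\in[0,1]$, which Proposition \ref{prop:galphaIncreasing} shows is maximized at $\alpha=1$. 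In other words, in the paper the non-uniformity of the conditional inputs is not argued away by a symmetrization; it survives as the parameter $\alpha<1$ and is eliminated only by the final monotonicity argument. Without a counterpart to this step (or a genuine proof that uniform conditional inputs are optimal for the constrained compression problem), your converse is incomplete.
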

\begin{proof}
See Appendix \ref{app:CutCap}.
\end{proof}

From (\ref{eq:CSbin}), the cut-set bound is given by $R_{\mathrm{CS}}=1-h_2(\delta)$. It then follows that in general the capacity is below the cut-set bound. Note that for this setup, $R_{\mathrm{DF}}=0$  and pDCF reduces to CF, i.e., $R_{\mathrm{pDCF}}=R_{\mathrm{CF}}$. See Figure \ref{fig:BinaryCS} for comparison of the capacity with the cut-set bound for varying $\delta$ values.

 CF suffices to achieve the capacity of the binary symmetric MRC-D for $Z\sim\text{Ber}(1/2)$. While in general pDCF outperforms DF and CF, in certain cases these two schemes are sufficient to achieve the cut-set bound, and hence, the capacity. For the ORC-D model introduced in Section \ref{sec:Model}, the cut-set bound is given by
\begin{IEEEeqnarray}{rCl}
R_{\mathrm{CS}}=R_2+\min\{R_1,\max_{p(x_1)}I(X_1;Y_R|Z)\}.
\end{IEEEeqnarray}

Next, we present four cases for which the cut-set bound is
achievable, and hence, is the capacity:

\begin{itemize}
\item[]\hspace{-7mm} \textit{Case 1)} If $I(Z;Y_R)=0$, the setup reduces to the class of
orthogonal relay channels studied in \cite{Gamal2005OrthRelay}, for
which the capacity is known to be achieved by pDF.
\item[]\hspace{-7mm} \textit{Case 2)} If $H(Y_R|X_1Z)=0$, i.e., $Y_R$ is a deterministic function of $X_1$ and $Z$, the capacity, given by \[R_2+\min\{R_1,\max_{p(x_1)}I(X_1;Y_R|Z)\},\] is achievable by CF.
\item[]\hspace{-7mm}\textit{ Case 3)} If $\max_{p(x_1)}I(X_1;Y_R)\geq R_1$, the capacity, given by $\mathcal{C}=R_2+R_1$, is achievable by pDF.
\item[]\hspace{-7mm}\textit{ Case 4)} Let $\arg \max_{p(x_1)}I(X_1;Y_R|Z)=\bar{p}(x_1)$. If $R_1>H(\bar{Y}_R|Z)$ for $\bar{Y}_R$ induced by $\bar{p}(x_1)$, the capacity, given by $R_2+I(\bar{X}_1;\bar{Y}_R|Z)$, is achievable by CF.
\end{itemize}
\begin{proof}
See Appendix \ref{app:CutSetProof}.
\end{proof}

\begin{figure}[t!]
\centering
\includegraphics[width=0.5\textwidth]{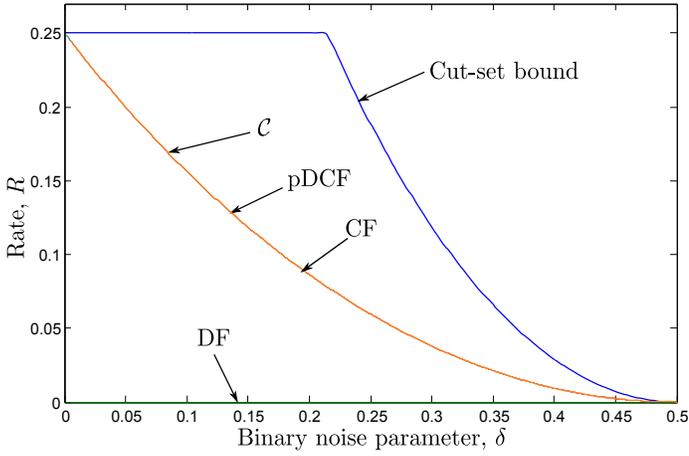}
\vspace{-1 mm} \caption{Achievable rates, capacity and cut-set upper bound for the multihop binary relay channel with respect to $\delta$ for $R_1=0.25$ and $p_z=0.5$.}
\label{fig:BinaryCS}
\end{figure}

These cases can be observed in the examples from Section \ref{sec:Examples}. For example, in the Gaussian MRC-D with $\rho=0$, $Y_R$ is independent of $Z$, and thus, DF meets the cut-set bound as stated in Case 1. Similarly, for $\rho=1$ CF meets the cut-set bound since $Y_R$ is a deterministic function of $X_R$ and $Z$, which corresponds to Case 2.

For the parallel binary symmetric MRC-D in Section \ref{sec:Binary}, pDCF achieves the cut-set bound if $\delta\geq h^{-1}_2(2-R_1)$ due to the following reasoning. Since $Y_R^1$ is independent of $X_1^1$, from Case 1, DF should achieve the cut-set bound. Once $X_1^1$ is decoded, the available rate to compress $Y_2$ is given by $R_1-I(X_1;Y_1)=R_1-1+h_2(\delta)$, and the entropy of $Y_2$ conditioned on the channel state at the destination is given by $H(Y_2|Z)=1-h_2(\delta)$. For $\delta\geq h^{-1}_2(2-R_1)$ we have $R_1-I(X_1;Y_1)\geq H(Y_2|Z)$. Therefore the relay can compress $Y_2$ losslessly, and transmit to the destination. This corresponds to Case 4. Thus, the capacity characterization in the parallel binary symmetric MRC-D is due to a combination of Case 1 and Case 4.

\section{Conclusion}
We have considered a class of orthogonal relay channels,
in which the channels connecting the source to the relay and the destination, and the relay to the destination, depend on a state sequence, known at the destination. We have
characterized the capacity of this class of relay channels, and
shown that it is achieved by the partial decode-compress-and-forward
(pDCF) scheme. This is the first three-terminal relay channel model
for which the pDCF is shown to be capacity achieving while partial
decode-and-forward (pDF) and compress-and-forward (CF) schemes are
both suboptimal in general. We have also shown that, in general, the capacity of this channel is below the cut-set bound.


\begin{appendices}
\section{Proof of Lemma \ref{lem:Achi}} \label{app:ProofLemmaAch}
In the rate expression and joint pmf in Theorem \ref{th:pDCF}, we set $X^n=(X_1^n,X_2^n)$, $Y^n=(Y_{1}^n,Y_{2}^n,Z^n)$,
$V=\emptyset$, and generate $X_R^n$ and $X_2^n$ independent of the  rest of the random
 variables with distributions $p^*(x_R)$ and $p^*(x_2)$, which maximize the mutual information terms in (\ref{eq:R0R1rates}), respectively. Under this set of distributions we have
\begin{IEEEeqnarray}{rCl}
I(X;Y\hat{Y}_R|X_RU)&=&I(X_1 X_2;Y_{1}Y_{2}\hat{Y}_RZ|X_R,U)\nonumber\\
&\stackrel{(a)}{=}&I(X_1 X_2;Y_{2}\hat{Y}_R|X_RUZ)\nonumber\\
&\stackrel{(b)}{=}&I(X_2;Y_2|Z)+I(X_1;\hat{Y}_R|UZ)\nonumber\\
&=&R_2+I(X_1;\hat{Y}_R|UZ),\nonumber\\
I(U;Y_R|X_RV)&=&I(U;Y_R|X_R)\stackrel{(c)}{=}I(U;Y_R),\nonumber\\
I(XX_R;Y)&=&I(X_1X_2X_R;Y_{1}Y_{2}Z)\nonumber\\
&\stackrel{(d)}{=}&I(X_2X_R;Y_{1}Y_{2}|Z)\nonumber\\
&\stackrel{(e)}{=}&I(X_R;Y_1)+I(X_2;Y_{2}|Z)=R_1+R_2,\nonumber\\
I(\hat{Y}_R;Y_R|XX_RUY)&=&I(\hat{Y}_R;Y_R|X_RX_1X_2UY_{1}Y_{2}Z)\nonumber\\
&\stackrel{(f)}{=}&I(\hat{Y}_R;Y_R|X_RX_1X_2UY_{2}Z)\nonumber\\
&\stackrel{(g)}{=}&I(\hat{Y}_R;Y_R|X_1UZ),\nonumber\\
I(\hat{Y}_R;Y_R|YX_RU)&=&I(\hat{Y}_R;Y_R|Y_{1}Y_{2}ZX_RU)\nonumber\\
&\stackrel{(h)}{=}&I(\hat{Y}_R;Y_R|UZ),\nonumber\\
I(X_R;Y|V)&=&I(X_R;Y_{1}Y_{2}Z)=I(X_R;Y_{1})=R_1,\nonumber
\end{IEEEeqnarray}
where $(a)$ is due to the Markov chain $(X_1X_2)-X_R-Y_{1}$;
$(b),(c),(e),(f),(g),(h)$ are due to the independence of $(U,X_1)$
and $X_R$, and $(d)$ is due to the Markov chain
$(Y_{1}Y_{2})-(X_2X_RZ)-X_1$.

Then, (\ref{eq:GenPartialDFCF}) reduces to the following rate
\begin{IEEEeqnarray}{rCl}
R=&\underset{\mathcal{P}}{\sup}&\,\min\{I(U;Y_R)+R_2+I(X_1;\hat{Y}_R|UZ),\\
&&\;\qquad R_2+R_1\nonumber-I(\hat{Y}_R;Y_R|X_1UZ)\}, \nonumber\\
&\text{s.t. }&R_1\geq I(\hat{Y}_R;Y_R|UZ).
\end{IEEEeqnarray}
Focusing on the joint distributions in $\mathcal{P}$ such that the minimum in $R$ is achieved for the first argument, i.e.,
\begin{IEEEeqnarray}{rCl}
R_1-I(\hat{Y}_R;Y_R|X_1UZ) \geq I(U;Y_R)+I(X_1;\hat{Y}_R|UZ),\nonumber
\end{IEEEeqnarray}
and  using the chain rule for the mutual information, the rate achievable by pDCF is lower bounded by
\begin{IEEEeqnarray}{rCl}
R\geq&\underset{\mathcal{P}}{\sup}\;&R_2+I(U;Y_R)+I(X_1;\hat{Y}_R|UZ)\nonumber\\
&\text{s.t. }& R_1\geq I(U;Y_R)+I(X_1Y_R;\hat{Y}_R|UZ),\label{eq:barrcond1}\\
&&R_1\geq I(\hat{Y}_R;Y_R|UZ)\label{eq:barrcond2}.
\end{IEEEeqnarray}
 From
(\ref{eq:barrcond1}), we have
\begin{IEEEeqnarray}{rCl}\label{eq:Eqcond}
R_1&\geq&I(U;Y_R)+I(X_1Y_R;\hat{Y}_R|UZ)\nonumber\\
&\stackrel{(a)}{=}&I(U;Y_R)+I(\hat{Y}_R;Y_R|UZ)\nonumber\\
&\geq&I(\hat{Y}_R;Y_R|UZ),
\end{IEEEeqnarray}
where $(a)$ is due to the Markov chain $\hat{Y}_R-(UY_R)-(X_1Z)$. Hence, (\ref{eq:barrcond1}) implies
(\ref{eq:barrcond2}), i.e., the latter condition is redundant, and $R\geq \mathcal{C}$. Therefore the capacity expression $\mathcal{C}$ in (\ref{eq:capreg}) is achievable by pDCF. This concludes the proof.

\section{Proof of Lemma \ref{lem:Conv1}}\label{app:Converse1}

Consider any sequence of $(2^{nR},n,\nu_n)$ codes such that
$\lim_{n\rightarrow\infty}\nu_n\rightarrow 0$. We need to show that
$R\leq R_{up}$.

Let us define $U_i\triangleq(Y_{R1}^{i-1}X_{1i+1}^nZ^{n\setminus
i})$ and $\hat{Y}_{Ri}\triangleq(Y_{1i+1}^{n})$. For such
$\hat{Y}_{Ri}$ and $U_i$, the following Markov chain holds
\begin{IEEEeqnarray}{rCl}\label{eq:Uchain}
\hat{Y}_{Ri}-(U_i,Y_{Ri})-(X_{1i},X_{2i},Z_{i},Y_{1i},Y_{2i},X_{Ri}).
\end{IEEEeqnarray}

From Fano's inequality, we have
\begin{IEEEeqnarray}{rCl}\label{eq:Fano}
H(W|Y_{1}^nY_{2}^nZ^n)\leq n\epsilon_n,
\end{IEEEeqnarray}
such that $\epsilon_n\rightarrow 0$ as $n\rightarrow \infty$.

First, we derive the following set of inequalities related to the capacity of the source-destination channel.
\begin{IEEEeqnarray}{rCl}\label{eq:ncutset}
nR&=&H(W)\nonumber\\
&\stackrel{(a)}{=}&I(W;Y_1^nY_2^n|Z^n)+H(W|Y_1^nY_2^nZ^n)\nonumber\\
&\stackrel{(b)}{\leq} &I(X_1^nX_2^n;Y_1^nY_2^n|Z^n)+n\epsilon_n,
\end{IEEEeqnarray}
where $(a)$ follows from the independence of $Z^n$ and $W$; and $(b)$ follows from Fano's inequality in (\ref{eq:Fano}).

We also have the following inequalities:
\begin{IEEEeqnarray}{rCl}\label{eq:R1Inequ}
I(X_2^n;Y_2^n|Z^n)\!&=&\!\sum_{i=1}^{n}H(Y_{2i}|Z^nY_{21}^{i-1})-H(Y_{2i}|Z^nY_{21}^{i-1}X_2^n)\nonumber\\
&\stackrel{(a)}{\leq}&\sum_{i=1}^{n}H(Y_{2i}|Z_i)-H(Y_{2i}|Z_iX_{2i})\nonumber\\
&=&\sum_{i=1}^{n}I(X_{2i};Y_{2i}|Z_i)\nonumber\\
&\stackrel{(b)}{=}&nI(X_{2Q'};Y_{2Q'}|Z_{Q'}Q')\nonumber\\
&\stackrel{(c)}{\leq} &nI(X_{2Q'};Y_{2Q'}|Z_{Q'})\nonumber\\
&\stackrel{(d)}{\leq}& nR_2,
\end{IEEEeqnarray}
where $(a)$ follows since conditioning reduces entropy, $(b)$ follows by defining $Q'$ as a uniformly distributed random variable over $\{1,...,n\}$ and $(X_{2Q'},Y_{2Q'},Z_{Q'})$ as a pair of random variables satisfying $\text{Pr}\{X_{2i}=x_2,Y_{2i}=y_2,Z_{i}=z\}=\text{Pr}\{X_{2Q'}=x_2,Y_{2Q'}=y_2,,Z_{Q'}=z|Q=i\}$ for $i=1,...,n$, $(c)$ follows from the Markov chain relation $Q'-X_{2Q'}-Y_{2Q'}$ and $(d)$ follows from the definition of $R_2$ in (\ref{eq:R0R1rates}). Following the same steps, we obtain
\begin{IEEEeqnarray}{rCl}\label{eq:R0Inequ}
I(X_R^n;Y_1^n|Z^n)\leq nR_1.
\end{IEEEeqnarray}
\interdisplaylinepenalty=1000
 Then, we can bound the achievable rate as,
\begin{IEEEeqnarray}{rCl}\label{eq:cond3}
 nR
 &=&I(W;Y_{1}^nY_{2}^nZ^n)+H(W|Y_{1}^nY_{2}^nZ^n)\nonumber\\
 &\stackrel{(a)}{\leq}& I(W;Y_{1}^nY_{2}^nZ^n)+n\epsilon_n\nonumber\\
 &\stackrel{(b)}{=}&I(W;Y_2^n|Z^n)+I(W;Y_1^n|Y_2^nZ^n)+n\epsilon_n\nonumber\\
&\stackrel{(c)}{\leq}&I(X_2^n;Y_2^n|Z^n)+I(W;Y_1^n|Y_2^nZ^n)+n\epsilon_n\nonumber\\
  &\stackrel{(d)}{\leq}&nR_2+H(Y_1^n|Y_2^nZ^n)-H(Y_1^n|WZ^n)+n\epsilon_n\nonumber\\
 &\stackrel{(e)}{\leq}&nR_2+H(Y_1^n|Z^n)-H(Y_1^n|WX_1^nZ^n)+n\epsilon_n\nonumber\\
 &\stackrel{(f)}{=}&nR_2+I(X_1^n;Y_1^n|Z^n)+n\epsilon_n\nonumber\\
 &\stackrel{(g)}{\leq}&nR_2+H(X_1^n)-H(X_1^n|Y_1^nZ^n)+n\epsilon_n\nonumber\\
 &=&nR_2+\sum_{i=1}^nH(X_{1i}|X_{1i+1}^n)-H(X_1^n|Y_1^nZ^n)+n\epsilon_n\nonumber\\
 &\stackrel{(h)}{\leq}&nR_2+\sum_{i=1}^n\bigg[I(Y_{R1}^{i-1}Z^{n\setminus i};Y_{Ri})\nonumber\\
 &&+H(X_{1i}|X_{1i+1}^n)\bigg]-H(X_1^n|Y_1^nZ^n)+n\epsilon_n\nonumber\\
 &=&nR_2+\sum_{i=1}^n\bigg[I(Y_{R1}^{i-1}Z^{n\setminus i}X_{1i+1}^n;Y_{Ri})\nonumber\\
 &&-I(X_{1i+1}^n;Y_{Ri}|Y_{R1}^{i-1}Z^{n\setminus i})\nonumber+H(X_{1i}|Y_{R1}^{i-1}Z^{n\setminus i}X_{1i+1}^n)\nonumber\\
 &&+I(X_{1i};Y_{R1}^{i-1}Z^{n\setminus i}|X_{1i+1}^n)\bigg]-H(X_1^n|Y_1^nZ^n)+n\epsilon_n\nonumber\\
 &\stackrel{(i)}{=}&nR_2+\sum_{i=1}^n\bigg[I(Y_{R1}^{i-1}Z^{n\setminus i}X_{1i+1}^n;Y_{Ri})\nonumber\\
 && +H(X_{1i}|Y_{R1}^{i-1}Z^{n\setminus i}X_{1i+1}^n)\bigg]-H(X_1^n|Y_1^nZ^n)+n\epsilon_n\nonumber\\
 &=&nR_2+\sum_{i=1}^n\bigg[I(U_i;Y_{Ri})+H(X_{1i}|U_i)\bigg]\nonumber\\
 &&-H(X_1^n|Y_1^nZ^n)+n\epsilon_n\nonumber\nonumber\\
&\stackrel{(j)}{\leq}& nR_2+\sum_{i=1}^n\bigg[I(U_i;Y_{Ri})+H(X_{1i}|U_i)\nonumber\\
&&-H(X_{1i}|U_iZ_i\hat{Y}_{Ri})\bigg]+n\epsilon_n\nonumber\\
&\stackrel{(k)}{=}&nR_2+\sum_{i=1}^n\bigg[I(U_i;Y_{Ri})+I(X_{1i};\hat{Y}_{Ri}|U_iZ_i)\bigg]+n\epsilon_n\nonumber,
\end{IEEEeqnarray}
\interdisplaylinepenalty=10000
where $(a)$ is due to Fano's inequality; $(b)$ is due to the chain
rule and the independence of $Z^n$ from $W$; $(c)$ is due to the
data processing inequality; $(d)$ is due to the Markov chain relation $Y_2^n-(W,Z^n)-Y_1^n$ and (\ref{eq:R1Inequ}); $(e)$ is due to the fact
that conditioning reduces entropy, and that $X_1^n$ is a deterministic function of
$W$; $(f)$ is due to the Markov chain relation $Y^n_1-X_1^n-W$; $(g)$ is due
to the independence of $Z^n$ and $X_1^n$; $(i)$ follows because
\begin{IEEEeqnarray}{rCl}
\sum_{i=1}^nI(X_{1i};Y_{R1}^{i-1}Z^{n\setminus i}|X_{1i+1}^n)
&\stackrel{(l)}{=}&\sum_{i=1}^nI(X_{1i};Y_{R1}^{i-1}|X_{1i+1}^nZ^{n\setminus i})\nonumber\\
&\stackrel{(m)}{=}&\sum_{i=1}^nI(X_{1i+1}^n;Y_{Ri}|Y_{R1}^{i-1}Z^{n\setminus
i}),\nonumber
\end{IEEEeqnarray}
where $(l)$ is due to the independence of $Z^n$ and $X_1^n$; and
$(m)$ is the conditional version of Csisz\'ar's equality
\cite{elGamal:book}. Inequality $(j)$ is due to the following bound,
\interdisplaylinepenalty=1000
\begin{IEEEeqnarray}{rCl}\label{eq:boundIxdyd}
H(X_1^n|Y_1^nZ^n)
&=&\sum_{i=1}^nH(X_{1i}|X_{1i+1}^nZ^nY_1^n)\nonumber\\
&\geq&\sum_{i=1}^nH(X_{1i}|Y_{R1}^{i-1}X_{1i+1}^nZ^nY_1^n)\nonumber\\
&\stackrel{(n)}{=}&\sum_{i=1}^nH(X_{1i}|Y_{R1}^{i-1}X_{1i+1}^nZ^nY_{1i+1}^n)\nonumber\\
&=&\sum_{i=1}^nH(X_{1i}|U_iZ_i\hat{Y}_{Ri}),
\end{IEEEeqnarray}
where $(n)$ is follows from the Markov chain relation
$X_{1i}-(Y_{R1}^{i-1}X_{1i+1}^nZ^nY_{1i+1}^n)-Y_{11}^{i}$, and
noticing that $X_{Ri}=f_{r,i}(Y_{R1}^{i-1})$. Finally, $(k)$ is due to the
fact that $Z_i$ independent of $(X_{1i},U_{i})$.
\interdisplaylinepenalty=10000

We can also obtain the following sequence of inequalities
\begin{IEEEeqnarray}{rCl}\label{eq:cond1a}
&nR_1&+nR_2\nonumber\\
&\stackrel{(a)}{\geq}& I(X_R^n;Y_{1}^n|Z^n)+I(X_{2}^n;Y_{2}^n|Z^n)\nonumber\\
&\stackrel{(b)}{\geq}&H(Y_{2}^n|Z^n)-H(Y_{2}^n|X_2^nZ^n)\nonumber\\
&&+H(Y_{1}^n|Y_{2}^nZ^n)-H(Y_{1}^n|X_R^nZ^n)\nonumber\\
&=&H(Y_{1}^nY_{2}^n|Z^n)-H(Y_{2}^n|X_2^nZ^n)-H(Y_{1}^n|X_R^nZ^n)\nonumber\\
&\stackrel{(c)}{=}&H(Y_{1}^nY_{2}^n|Z^n)-H(Y_{2}^n|X_1^nX_2^nY_R^nZ^n)\nonumber\\
&&-H(Y_{1}^n|X_R^nX_1^nX_2^nY_R^nY_{2}^nZ^n)\nonumber\\
&\stackrel{(d)}{\geq}&H(Y_{1}^nY_{2}^n|Z^n)-H(Y_{1}^nY_{2}^n|X_1^nX_2^nY_R^nZ^n)\nonumber\\
&=&I(Y_{1}^nY_{2}^n;X_1^nX_2^nY_R^n|Z^n)\nonumber\\
&=&I(X_1^nX_2^n;Y_{1}^nY_{2}^n|Z^n)+I(Y_{1}^nY_{2}^n;Y_R^n|X_1^nX_2^nZ^n)\nonumber\\
&\stackrel{(e)}{\geq}&nR+I(Y_{1}^nY_{2}^n;Y_R^n|X_1^nX_2^nZ^n)-n\epsilon_n\nonumber\\
&\stackrel{(f)}{=}& nR+I(Y_{1}^n;Y_R^n|X_1^nZ^n)-n\epsilon_n\nonumber\\
&=&nR+\sum_{i=1}^nI(Y_{1}^n;Y_{Ri}|X_1^nY_{R1}^{i-1}Z^n)-n\epsilon_n\nonumber\\
&\geq& nR+\sum_{i=1}^nI(Y_{1i+1}^{n};Y_{Ri}|X_1^nY_{R1}^{i-1}Z^n)-n\epsilon_n\nonumber\\
&\stackrel{(g)}{\geq}&nR+\sum_{i=1}^nI(Y_{1i+1}^{n};Y_{Ri}|X_{1i}^nY_{R1}^{i-1}Z^n)-n\epsilon_n\nonumber\\
&=&nR+\sum_{i=1}^nI(\hat{Y}_{Ri};Y_{Ri}|X_{1i}U_iZ_i)-n\epsilon_n\nonumber,
\end{IEEEeqnarray}
where $(a)$ follows from (\ref{eq:R1Inequ}) and (\ref{eq:R0Inequ}); $(b)$ is due to the fact that conditioning
reduces entropy; $(c)$ is due to the Markov chains
$Y_{2}^n-(X_{2}^nZ^n)-(X_1^nY_R^n)$ and
$Y_{1}^n-(X_{R}^nZ^n)-(X_1^nX_2^nY_R^nY_{2}^n)$; $(d)$ follows since conditioning reduces entropy; $(e)$ is due to the expression  in
(\ref{eq:ncutset}); $(f)$ is due to the Markov chain
$(Y_R^nY_1^n)-(X_1^nZ^n)-(X_2^nY_2^n)$ and; $(g)$ is due to the
Markov chain $(Y_{1i+1}^{n}) - (X_{1i}^nY_{R1}^{i-1}Z^n)-
X_{11}^{i-1}$.

A single letter expression can be obtained by using the usual
time-sharing random variable arguments. Let $Q$ be a time sharing
random variable uniformly distributed over $\{1,...,n\}$,
independent of all the other random variables. Also, define a set of random variables
$(X_{1Q},Y_{RQ},U_Q,\hat{Y}_{RQ},Z_Q)$ satisfying
\begin{IEEEeqnarray}{rCl}
&\text{Pr}&\{X_{1Q}=x_1,Y_{RQ}=y_R,U_Q=u,\hat{Y}_{RQ}=\hat{y}_R,Z_Q=z|Q=i\}\nonumber\\
&=&\text{Pr}\{X_{1i}=x_1,Y_{Ri}=y_R,U_i=u,\hat{Y}_{Ri}=\hat{y}_D,Z_i=z\}\\
&& \qquad  \text{for } i=1,...,n.\nonumber
 \end{IEEEeqnarray}
  Define $U\triangleq(U_Q,Q)$, $\hat{Y}_R\triangleq\hat{Y}_{RQ}$, $X_1\triangleq X_{1Q}$,
 $Y_{RQ}\triangleq Y_R$ and $Z\triangleq Z_Q$. We note that the pmf of the tuple  $(X_{1},Y_{R},U,\hat{Y}_{R},Z)$ belongs to $\mathcal{P}$ in (\ref{eq:pmf}) as follows:
\begin{IEEEeqnarray}{rCl}
&p(u,& x_{1},y_{R},z,\hat{y}_{R})\nonumber\\
&=&
p(q,u_Q,x_{1Q},y_{RQ},z_{Q},\hat{y}_{RQ})\nonumber\\
&=&p(q,u_Q,x_{1Q})p(z_{Q}y_{RQ}\hat{y}_{RQ}|q,u_Qx_{1Q})\nonumber\\
&=&p(q,u_Q,x_{1Q})p(z_{Q}|q,u_Q,x_{1Q})p(y_{RQ},\hat{y}_{RQ}|q,u_Q,x_{1Q},z_{Q})\nonumber\\
&\stackrel{(a)}{=}&p(q,u_Q,x_{1Q})p(z)p(y_{RQ}|q,u_Q,x_{1Q},z_{Q})\nonumber\\
&&\cdot p(\hat{y}_{RQ}|q,u_Q,x_{1Q},z_{Q},y_{RQ})\nonumber\\
&\stackrel{(b)}{=}&p(q,u_Q,x_{1Q})p(z)p(y_{R}|x_{1},z)p(\hat{y}_{RQ}|q,u_Q,x_{1Q},z_{Q},y_{RQ})\nonumber\\
&\stackrel{(c)}{=}&p(q,u_Q,x_{1Q})p(z)p(y_{R}|x_{1},z)p(\hat{y}_{RQ}|q,u_Q,y_{RQ})\nonumber\\
&=&p(u, x_{1})p(z)p(y_{R}|x_1,z)p(\hat{y}_{R}|u,y_{R}),\nonumber
\end{IEEEeqnarray}
where $(a)$ follows since the channel state $Z^n$ is i.i.d; and thus $p(z_{Q}|q,u_Q,x_{1Q})=p(z_Q|q)=p(z)$; $(b)$ follows since $p(y_{RQ}|q,u_Q,x_{1Q},z_{Q})=p(y_{RQ}|q,x_{1Q},z_{Q})=p(y_R|x_1,z)$; $(c)$ follows from the Markov chain in (\ref{eq:Uchain}).

Then, we get the single letter expression,
\begin{IEEEeqnarray}{rCl}
R&\leq&R_2+\frac{1}{n}\sum_{i=1}^n[I(U_i;Y_{Ri})+I(X_{1i};\hat{Y}_{Ri}|U_iZ_i)]+\epsilon_n\nonumber\\
&=&R_2+I(U_Q;Y_{RQ}|Q)+I(X_{1Q};\hat{Y}_{RQ}|U_QZ_QQ)+\epsilon_n\nonumber\\
&\leq&R_2+I(U_QQ;Y_{RQ})+I(X_{1Q};\hat{Y}_{RQ}Q|U_QZ_Q)+\epsilon_n\nonumber\\
&=&R_2+I(U;Y_{R})+I(X_{1};\hat{Y}_{R}|UZ)+\epsilon_n\nonumber,
\end{IEEEeqnarray}
and
\begin{IEEEeqnarray}{rCl}
R_1+R_2&\geq&R+\frac{1}{n}\sum_{i=1}^nI(\hat{Y}_{Ri};Y_{Ri}|X_{1i}U_iZ_i)-n\epsilon_n\nonumber\\
&=&R+I(\hat{Y}_{RQ};Y_{RQ}|X_{1Q}U_QZ_QQ)-n\epsilon_n\nonumber\\
&=&R+I(\hat{Y}_{R};Y_{R}|X_{1}UZ)-n\epsilon_n\nonumber.
\end{IEEEeqnarray}

 The cardinality of the bounds on the alphabets of $U$ and $\hat{Y}_R$ can be found using the usual techniques \cite{elGamal:book}.
This completes the proof.

\section{Proof of Lemma \ref{lem:Conv2}}\label{app:Converse2}

Now, we will show that the expression of $R_{up}$ in (\ref{eq:Conv1}) is equivalent to the expression $\mathcal{C}$ in $(\ref{eq:capreg})$. First we will show that $\mathcal{C} \leq R_{up}$. Consider the
subset of pmf's in $\mathcal{P}$ such that
\begin{IEEEeqnarray}{rCl}
&R_1&+R_2-I(\hat{Y}_{R};Y_{R}|X_{1}UZ)\\
&&\geq R_2+I(U;Y_{R})+I(X_{1};\hat{Y}_{R}|UZ),
\end{IEEEeqnarray}
holds. Then, similarly to (\ref{eq:Eqcond}) in Appendix \ref{app:ProofLemmaAch} this condition is equivalent to
\begin{IEEEeqnarray}{rCl}
R_1\geq I(U;Y_{R})+I(Y_{R};\hat{Y}_{R}|UZ).
\end{IEEEeqnarray}
Hence, we have $\mathcal{C} \leq R_{up}$.

Then, it remains to show that $\mathcal{C} \geq R_{up}$. As $R_2$ can be
extracted from the supremum, it is enough to show that, for each  $(X_1,U,Z,Y_R,\hat{Y}_R)$  tuple with a joint pmf $p_e\in\mathcal{P}$ satisfying
\begin{IEEEeqnarray}{rCl}\label{eq:CondA}
&R(p_e)&\leq I(U;Y_R)+I(X_1;\hat{Y}_R|UZ),\nonumber\\
&\text{where }& R(p_e)\triangleq R_1-I(\hat{Y}_R;Y_R|X_1UZ),
\end{IEEEeqnarray}
there exist random variables
$(X_1^*,U^*,Z,Y_R^*,\hat{Y}_R^*)$ with joint pmf $p_e^*\in\mathcal{P}$
that satisfy
\begin{IEEEeqnarray}{rCl}\label{eq:newcondAle2}
R(p_e)&=&I(U^*;Y_R)+I(X_1^*;\hat{Y}_R^*|U^*Z) \;\text{and }\nonumber\\
R(p_e)&\leq& R_1-I(\hat{Y}_R^*;Y_R|X_1^*U^*Z).
\end{IEEEeqnarray}
This argument is proven next.

 Let $B$ denote a Bernoulli random
variable with parameter $\lambda\in[0,1]$, i.e., $B=1$ with
probability $\lambda$, and $B=0$ with probability $1-\lambda$. We
define the triplets of random variables:
\begin{IEEEeqnarray}{rCl}
(U^{'},X_1^{'},\hat{Y}_R^{'})=\begin{cases}
(U,X_1,\hat{Y}_R)&\text{if }B=1,\\
(X_1,X_1,\emptyset)&\text{if }B=0,\\
\end{cases}
\end{IEEEeqnarray}
and
\begin{IEEEeqnarray}{rCl}
(U^{''},X_1^{''},\hat{Y}_R^{''})=\begin{cases}
(X_1,X_1,\emptyset)&\text{if }B=1,\\
(\emptyset,\emptyset,\emptyset)&\text{if }B=0.\\
\end{cases}
\end{IEEEeqnarray}
We first consider the case $R(p_e)> I(X_1;Y_R)$. Let
$U^*=(U^{'},B)$, $X_1^*= X_1^{'}$, $\hat{Y}_R^*=(\hat{Y}_R^{'},B)$.
For $\lambda=1$,
\begin{IEEEeqnarray}{lCl}
I(U^*;Y_R)+I(X_1^*;\hat{Y}_R^*|U^*Z)\nonumber
&=&I(U;Y_R)+I(X_1;\hat{Y}_R|UZ)\\
&>&R(p_e),
\end{IEEEeqnarray}
 and for $\lambda=0$,
 \begin{IEEEeqnarray}{rCl}
I(U^*;Y_R)+I(X_1^*;\hat{Y}_R^*|U^*Z)&=&I(X_1;Y_R)\nonumber\\
&<&R(p_e).
\end{IEEEeqnarray}
As $I(U^*;Y_R)+I(X_1^*;\hat{Y}_R^*|U^*Z)$ is a continuous function
of $\lambda$, by the intermediate value theorem, there exists a
$\lambda\in[0,1]$ such that
$I(U^*;Y_R)+I(X_1^*;\hat{Y}_R^*|U^*Z)=R(p_e)$. We denote the corresponding joint distribution by $p_e^*$.

We have
\begin{IEEEeqnarray}{rCl}
I(\hat{Y}_R^*;Y_R|X_1^*U^*Z)&=&I(\hat{Y}_R^{'};Y_R|X_1^{'}U^{'}Z B)\nonumber\\
&=&\lambda I(\hat{Y}_R;Y_R|X_1UZ)\nonumber\\
&\leq&I(\hat{Y}_R;Y_R|X_1 U Z),
\end{IEEEeqnarray}
which implies that $p_e^*$ satisfies (\ref{eq:newcondAle2}) since
\begin{IEEEeqnarray}{rCl}
R(p_e)&=&R_1-I(\hat{Y}_R;Y_R|X_1UZ)\nonumber\\
&\leq&R_1-I(\hat{Y}_R^*;Y_R|X_1^*U^*Z).
\end{IEEEeqnarray}
Next we consider the case $R(p_e)\leq I(X_R;Y_1)$. We define
$U^*=(U^{''},B)$, $X_1^*= X_1^{''}$ and
$\hat{Y}_R^*=(\hat{Y}_R^{''},B)$. Then, for $\lambda=1$,
\begin{IEEEeqnarray}{rCl}
I(U^*;Y_R)+I(X_1^*;\hat{Y}_R^*|U^*Z)&=&I(X_1;Y_R)\nonumber\\
&\geq& R(p_e),\nonumber
\end{IEEEeqnarray}
 and for $\lambda=0$,
 \begin{IEEEeqnarray}{rCl}
I(U^*;Y_R)+I(X_1^*;\hat{Y}_R^*|U^*Z)&=&0\nonumber\\
&<&R(p_e).
\end{IEEEeqnarray}

Once again, as $I(U^*;Y_R)+I(X_1^*;\hat{Y}_R^*|U^*Z)$ is a
continuous function of $\lambda$, by the intermediate value theorem,
there exists a $\lambda\in[0,1]$ such that
$I(U^*;Y_R)+I(X_1^*;\hat{Y}_R^*|U^*Z)=R(p_e)$. Again, we denote this joint distribution by $p_e^*$. On the other hand,
we have $I(\hat{Y}_R^*;Y_R|X_1^*U^*Z)=0$, which implies that
\begin{IEEEeqnarray}{rCl}
R(p_e)&=&R_1-I(\hat{Y}_R;Y_R|X_1UZ)\nonumber\\
&\leq&R_1\nonumber\\
&=&R_1-I(\hat{Y}_R^*;Y_R|X_1^*U^*Z).
\end{IEEEeqnarray}
That is, $p_e^*$ also satisfies (\ref{eq:newcondAle2}).

We have shown that for any joint pmf $p_e\in\mathcal{P}$
satisfying $(\ref{eq:CondA})$, there exist another joint pmf, $p_e^*$, that satisfies
$(\ref{eq:newcondAle2})$. For a distribution satisfying $(\ref{eq:newcondAle2})$ we
can write
\begin{IEEEeqnarray}{rCl}
R_1&>&I(U^*;Y_R)+I(X_1^*;\hat{Y}_R^*|U^*Z)+I(\hat{Y}_R^*;Y_R|X_1^*U^*Z)\nonumber\\
&=&I(U^*;Y_R)+I(Y_RX_1^*;\hat{Y}_R^*|U^*Z)\nonumber\\
&\stackrel{(a)}{=}&I(U^*;Y_R)+I(\hat{Y}_R^*;Y_R|U^*Z)\nonumber
\end{IEEEeqnarray}
where $(a)$ is due to Markov chain $X_1^*-(Y_RZU^*)-\hat{Y}_R^*$.
 This concludes the proof.

\section{Proof of Lemma \ref{lemm:CF}}\label{app:CF}

Before deriving the maximum achievable rate by CF in Lemma \ref{lemm:CF}, we provide some definitions that will be used in the proof.

Let $X$ and $Y$ be a pair of discrete random variables, where $\mathcal{X}= \{1,2,...,n\}$ and $\mathcal{Y}= \{1,2,...,m\}$, for $n,m<\infty$. Let $\mathbf{p}_Y\in \Delta_{m}$ denote the distribution of  $Y$, where $\Delta_k$ denotes the $(k-1)$-dimensional simplex of probability $k$-vectors. We define $T_{XY}$ as the $n\times m$ stochastic matrix with entries $T_{XY}(j,i)=\text{Pr}\{X=j|Y=i\}$. Note that the joint distribution $p(x,y)$ is characterized by $T_{XY}$ and $\mathbf{p}_Y$.

Next, we define the conditional entropy bound from \cite{Witsenhausen:IT:75}, which lower bounds the conditional entropy between two variables. Note the relabeling of the variables in \cite{Witsenhausen:IT:75} to fit our model.

\begin{definition}[Conditional Entropy Bound]
Let $\mathbf{p}_Y\in\Delta_{m}$ be the distribution of $Y$ and $T_{XY}$ denote the channel matrix relating $X$ and $Y$.  Then, for $\mathbf{q}\in\Delta_{m}$ and $0\leq s\leq H(Y)$, define the function
\begin{IEEEeqnarray}{rCl}\label{eq:ConditionalEntropyBound}
F_{T_{XY}}(\mathbf{q},s)&\triangleq& \inf_{\substack{p(w|y):\;X-Y-W, \\ H(Y|W)= s, \; \mathbf{p}_Y=\mathbf{q}.}} H(X|W).
\end{IEEEeqnarray}
\end{definition}
That is, $F_{T_{XY}}(\mathbf{q},s)$ is the infimum of $H(X|W)$ given a specified  distribution $\mathbf{q}$ and the value of $H(Y|W)$. Many properties of $F_{T_{XY}}(\mathbf{q},s)$ are derived in \cite{Witsenhausen:IT:75}, such as its convexity on $(\mathbf{q},s)$ \cite[Theorem 2.3]{Witsenhausen:IT:75} and its non-decreasing monotonicity in $s$ \cite[Theorem 2.5]{Witsenhausen:IT:75}.

Consider a sequence of $N$ random variables $\mathbf{Y}=(Y_1,...,Y_N)$ and
denote by $\mathbf{q}_i$ the distribution of $Y_i$, for $i=1,...,N$, by $\mathbf{q}^{(N)}$ the joint distribution of $\mathbf{Y}$ and by $\mathbf{q}\triangleq\frac{1}{N}\sum_{i=1}^{N}\mathbf{q}_i$ the average distribution.   Note that $Y_1,...,Y_N$ can have arbitrary correlation.
 Define the sequence $\mathbf{X}=(X_1,...,X_N)$, in which  $X_i$, $i=1,...,N$, is jointly distributed with each $Y_i$ through the stochastic matrix $T_{XY}$ and denote by $T_{XY}^{(N)}$ the Kronecker product of $N$ copies of the stochastic matrix $T_{XY}$.

 Then, the theorem given in \cite[Theorem 2.4]{Witsenhausen:IT:75} can be straightforwardly generalized to non i.i.d. sequences as given in the following lemma.
\begin{lemma}\label{lem:FIneqN}
For $N\in\mathds{Z}^+$, and $0\leq Ns\leq H(\mathbf{Y})$, we have
\begin{IEEEeqnarray}{rCl}
F_{T_{XY}^{(N)}}(\mathbf{q}^{(N)},Ns)\geq NF_{T_{XY}}(\mathbf{q},s),
\end{IEEEeqnarray}
where equality holds for i.i.d. $Y_i$ components following $\mathbf{q}$.
\end{lemma}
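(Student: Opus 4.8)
The plan is to adapt the single-letter argument behind \cite[Theorem 2.4]{Witsenhausen:IT:75} to the non-i.i.d. setting, using only the convexity and monotonicity of $F_{T_{XY}}$ recalled just above. First I would fix an arbitrary auxiliary variable $W$ that is feasible for the left-hand side, i.e. satisfying the Markov chain $\mathbf{X}-\mathbf{Y}-W$ with $\mathbf{Y}\sim\mathbf{q}^{(N)}$ and $H(\mathbf{Y}|W)=Ns$, and lower bound $H(\mathbf{X}|W)$; taking the infimum over all such $W$ at the very end yields the claimed inequality for $F_{T_{XY}^{(N)}}$. Expanding by the chain rule, $H(\mathbf{X}|W)=\sum_{i=1}^N H(X_i\mid X^{i-1},W)$, I would set $V_i\triangleq(W,X^{i-1})$ and first check that $X_i-Y_i-V_i$ is a Markov chain. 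This uses the product structure $p(\mathbf{x}|\mathbf{y})=\prod_j T_{XY}(x_j|y_j)$: conditioned on $Y_i$, the symbol $X_i$ is independent of $(Y^{n\setminus i},W,X^{n\setminus i})$, and in particular of $V_i$. Since $Y_i\sim\mathbf{q}_i$, the definition of the conditional entropy bound then gives, term by term, $H(X_i\mid V_i)\geq F_{T_{XY}}(\mathbf{q}_i, H(Y_i\mid V_i))$.

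Next I would combine the terms. Writing $s_i\triangleq H(Y_i\mid V_i)$, the convexity of $F_{T_{XY}}$ in the pair $(\mathbf{q},s)$ \cite[Theorem 2.3]{Witsenhausen:IT:75} and Jensen's inequality give
\[
H(\mathbf{X}|W)\;\geq\;\sum_{i=1}^N F_{T_{XY}}(\mathbf{q}_i,s_i)\;\geq\;N\,F_{T_{XY}}\!\left(\frac{1}{N}\sum_{i=1}^N\mathbf{q}_i,\ \frac{1}{N}\sum_{i=1}^N s_i\right)=N\,F_{T_{XY}}(\mathbf{q},\bar s),
\]
where $\bar s\triangleq\frac{1}{N}\sum_i s_i$ and $\mathbf{q}=\frac{1}{N}\sum_i\mathbf{q}_i$ by definition.

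The crux --- and the step I expect to be the main obstacle --- is matching the entropy constraint. The chain rule naturally produces the conditioning $H(Y_i\mid W,X^{i-1})$, whereas the hypothesis $H(\mathbf{Y}|W)=Ns$ decomposes as $\sum_i H(Y_i\mid W,Y^{i-1})$. To reconcile them I would exploit that $X^{i-1}$ is a degraded version of $Y^{i-1}$: the product structure yields the Markov chain $X^{i-1}-Y^{i-1}-(Y_i,W)$, hence $Y_i-(W,Y^{i-1})-X^{i-1}$, so that
\[
H(Y_i\mid W,X^{i-1})\;\geq\;H(Y_i\mid W,X^{i-1},Y^{i-1})\;=\;H(Y_i\mid W,Y^{i-1}).
\]
Summing gives $N\bar s=\sum_i H(Y_i\mid V_i)\geq\sum_i H(Y_i\mid W,Y^{i-1})=H(\mathbf{Y}|W)=Ns$, i.e. $\bar s\geq s$. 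The non-decreasing monotonicity of $F_{T_{XY}}$ in $s$ \cite[Theorem 2.5]{Witsenhausen:IT:75} then upgrades the previous display to $H(\mathbf{X}|W)\geq N F_{T_{XY}}(\mathbf{q},s)$, and taking the infimum over feasible $W$ proves the inequality. For the equality claim with i.i.d. $Y_i\sim\mathbf{q}$, I would argue the reverse bound by a product construction: choosing $W=(W_1,\dots,W_N)$ with the $W_i$ independent and each $(X_i,Y_i,W_i)$ (near-)optimal for the single-letter problem defining $F_{T_{XY}}(\mathbf{q},s)$ makes both $H(\mathbf{Y}|W)=Ns$ and $H(\mathbf{X}|W)=N F_{T_{XY}}(\mathbf{q},s)$, whence $F_{T_{XY}^{(N)}}(\mathbf{q}^{(N)},Ns)\leq N F_{T_{XY}}(\mathbf{q},s)$ and equality follows.
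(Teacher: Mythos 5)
Your proof is correct, and it shares the paper's overall skeleton: chain rule, term-by-term use of the conditional entropy bound, Jensen's inequality via joint convexity, and the same product construction for the equality claim. Where you genuinely depart is in the decomposition. You keep the chain-rule conditioning $V_i=(W,X^{i-1})$, verify $X_i-Y_i-V_i$, and set $s_i=H(Y_i\mid W,X^{i-1})$; since the hypothesis only fixes $\sum_i H(Y_i\mid W,Y^{i-1})=Ns$, your constraint matches the target only up to the inequality $\bar s\geq s$ (which you correctly establish through the Markov chain $X^{i-1}-Y^{i-1}-(Y_i,W)$), and you must then invoke the non-decreasing monotonicity of $F_{T_{XY}}$ in $s$ \cite[Theorem 2.5]{Witsenhausen:IT:75} to close the argument. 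The paper avoids that last step by switching the conditioning from $\mathbf{X}_1^{k-1}$ to $\mathbf{Y}_1^{k-1}$ at the outset: it bounds $H(X_k\mid \mathbf{X}_1^{k-1},W)\geq H(X_k\mid \mathbf{Y}_1^{k-1},W)$ via conditioning-reduces-entropy and the Markov chain $X_k-(\mathbf{Y}_1^{k-1},W)-\mathbf{X}_1^{k-1}$, so that its $s_k=H(Y_k\mid \mathbf{Y}_1^{k-1},W)$ sums \emph{exactly} to $Ns$ and convexity alone suffices. In short, each proof places the slack in a different spot: yours in the constraint, repaired by monotonicity; the paper's in the entropy terms, needing no repair. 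Both are rigorous; the paper's is marginally leaner, while yours applies the single-letter bound directly to the raw chain-rule terms. Two minor observations on your version: the Jensen and monotonicity steps are legitimate because the domain $\{(\mathbf{q},s):0\leq s\leq H(\mathbf{q})\}$ is convex (by concavity of entropy), so $(\mathbf{q},\bar s)$ lies in it --- worth a sentence; and your ``(near-)optimal'' hedge in the equality part even avoids assuming the single-letter infimum is attained, which the paper implicitly does.
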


\begin{proof}

Let $W,\mathbf{X},\mathbf{Y}$ be a Markov chain, such that $H(\mathbf{Y}|W)=Ns$. Then, using the standard identity we have
\begin{IEEEeqnarray}{rCl}
H(\mathbf{Y}|W)=\sum_{k=1}^NH(Y_k|\mathbf{Y}_1^{k-1},W),\\
H(\mathbf{X}|W)=\sum_{k=1}^NH(X_k|\mathbf{X}_1^{k-1},W).\label{eq:Inqcond1}
\end{IEEEeqnarray}
Letting $s_k=H(Y_k|\mathbf{Y}_1^{k-1},W)$, we have
\begin{IEEEeqnarray}{rCl}\label{eq:AverageS}
\frac{1}{N}\sum_{k=1}^{N}s_k=s.
\end{IEEEeqnarray}

Also, from the Markov chain $X_k-(\mathbf{Y}_1^{k-1},W)-\mathbf{X}_1^{k-1}$, we have
\begin{IEEEeqnarray}{rCl}\label{eq:Inqcond2}
H(X_k|\mathbf{X}_1^{k-1},W)&\geq& H(X_k|\mathbf{Y}_1^{k-1},\mathbf{X}_1^{k-1},W)\\
&=&H(X_k|\mathbf{Y}_1^{k-1},W).
\end{IEEEeqnarray}
Applying the conditional entropy bound in (\ref{eq:ConditionalEntropyBound}) we have
\begin{IEEEeqnarray}{rCl}\label{eq:Inqcond3}
H(X_k|\mathbf{Y}_1^{k-1},W)\geq F_{T_{XY}}(\mathbf{q}_k,s_k).
\end{IEEEeqnarray}
Combining (\ref{eq:Inqcond1}), (\ref{eq:Inqcond2}) and (\ref{eq:Inqcond3}) we have
\begin{IEEEeqnarray}{rCl}
H(\mathbf{X}|W)&\geq& \sum_{k=1}^NF_{T_{XY}}(\mathbf{q}_k,s_k)\nonumber\geq NF_{T_{XY}}(\mathbf{q},s),
\end{IEEEeqnarray}
where the last inequality follows from the convexity of $F_{T}(\mathbf{q},s)$ in $\mathbf{q}$ and $s$ and (\ref{eq:AverageS}).

If we let $W,\mathbf{X}$, $\mathbf{Y}$ be $N$ independent copies of the random variables $W,X,Y$, that achieve $F_{T_{XY}}(\mathbf{q},s)$, we have $H(\mathbf{Y}|W)=Ns$ and $H(\mathbf{X}|W)=F_{T_{XY}^{(N)}}(\mathbf{q}^N)=NF_{T_{XY}}(\mathbf{q},s)$. Therefore,  $F_{T_{XY}^{(N)}}(\mathbf{q}^N)\leq NF_{T_{XY}}(\mathbf{q},s)$, and the equality holds for i.i.d. components of $\mathbf{Y}$.
\end{proof}

Now, we look into the binary symmetric channel $Y=X\oplus N$ where $N\sim \text{Ber}(\delta)$. Due to the binary modulo-sum operation, we have $X=Y\oplus N$, and we can characterize the channel $T_{XY}$ of this model as
\begin{IEEEeqnarray}{rCl}\label{eq:TxyBinMat}
T_{XY}=\left[
\begin{matrix}
1-\delta&\delta\\ \delta & 1-\delta
\end{matrix}
\right].
\end{IEEEeqnarray}

When $Y$ and $X$ are related through channel $T_{XY}$ in (\ref{eq:TxyBinMat}), $F_{T_{XY}}(\mathbf{q},s)$ is characterized as follows \cite{Witsenhausen:IT:75}.
\begin{lemma}\label{lem:binaryF}
Let $Y\sim \text{Ber}(q)$, i.e., $\mathbf{q}=[q,1-q]$, and $T_{XY}$ be given as in (\ref{eq:TxyBinMat}). Then the conditional entropy bound is
\begin{IEEEeqnarray}{rCl}
F_{T_{XY}}(q,s)=h_2(\delta\star h^{-1}_2(s)), \quad \text{for } 0\leq s \leq h_2(q).\nonumber
\end{IEEEeqnarray}

\end{lemma}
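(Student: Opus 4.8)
The plan is to reduce the variational problem defining $F_{T_{XY}}$ to a single scalar convexity statement, namely Mrs.\ Gerber's Lemma. First I would fix a value $w$ of the auxiliary variable $W$ and use the Markov chain $X-Y-W$ together with the modulo-sum structure $X=Y\oplus N$, $N\sim\text{Ber}(\delta)$ independent of $(Y,W)$. Conditioned on $W=w$, the variable $Y$ has a Bernoulli distribution $\text{Ber}(q_w)$ with $q_w\triangleq\text{Pr}\{Y=1|W=w\}$, and $X$ is the output of a BSC$(\delta)$ driven by this input, so that $H(Y|W=w)=h_2(q_w)$ and $H(X|W=w)=h_2(\delta\star q_w)$. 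Using the identity $\delta\star(1-q_w)=1-\delta\star q_w$, which makes $h_2(\delta\star q_w)$ invariant under $q_w\mapsto 1-q_w$, I would define $g(t)\triangleq h_2(\delta\star h_2^{-1}(t))$ on $[0,1]$ and record the clean relation $H(X|W=w)=g\bigl(H(Y|W=w)\bigr)$, valid regardless of which branch $q_w$ lies on.

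For the converse (the lower bound $F_{T_{XY}}(q,s)\ge g(s)$) the key ingredient is the convexity of $g$ on $[0,1]$, which is exactly Mrs.\ Gerber's Lemma. Granting it, I would average over $W$ and apply Jensen's inequality: writing $t_w=h_2(q_w)$ and using $\sum_w p(w)\,t_w=H(Y|W)=s$, one obtains $H(X|W)=\sum_w p(w)\,g(t_w)\ge g\bigl(\sum_w p(w)\,t_w\bigr)=g(s)$. Since this holds for every admissible $W$, taking the infimum yields $F_{T_{XY}}(q,s)\ge g(s)$. Note that this direction does not even invoke the marginal constraint $\mathbf{p}_Y=\mathbf{q}$.

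For achievability ($F_{T_{XY}}(q,s)\le g(s)$) I would exhibit a binary $W$ for which Jensen holds with equality while meeting all three constraints simultaneously. Set $a\triangleq h_2^{-1}(s)\in[0,1/2]$ and take $q_0=a$, $q_1=1-a$ with $\text{Pr}\{W=0\}=\beta\triangleq(1-a-q)/(1-2a)$. Since $0\le s\le h_2(q)$ forces $a\le\min\{q,1-q\}$, a short check gives $\beta\in[0,1]$, so $W$ is well defined. This choice produces the required marginal $\beta a+(1-\beta)(1-a)=q$, the required conditional entropy $H(Y|W)=h_2(a)=s$, and, because both branches contribute $h_2(\delta\star a)$, the value $H(X|W)=h_2(\delta\star a)=g(s)$. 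Combining the two bounds gives $F_{T_{XY}}(q,s)=g(s)$, with the endpoints $s=0$ (where $W$ determines $Y$) and $s=h_2(q)$ (where $W$ is trivial) covered by the same formulas.

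The only non-routine step is the convexity of $g(t)=h_2(\delta\star h_2^{-1}(t))$, i.e.\ Mrs.\ Gerber's Lemma; everything else is bookkeeping. I expect this to be the main obstacle, and I would either invoke it by name or verify it by a direct second-derivative computation, which is the cleanest route given the $F_{T_{XY}}$ machinery already developed in this appendix.
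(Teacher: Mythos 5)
Your proof is correct, but it is worth noting that the paper itself never proves this statement: Lemma~\ref{lem:binaryF} is simply quoted from Witsenhausen and Wyner \cite{Witsenhausen:IT:75}, so what you have done is reconstruct the binary-case evaluation of the conditional entropy bound from scratch. Your two halves both check out. The converse is exactly the Mrs.~Gerber argument: the Markov chain $X-Y-W$ together with the modulo-sum structure gives $H(X|W=w)=h_2(\delta\star q_w)=g\bigl(H(Y|W=w)\bigr)$ (the reflection identity $\delta\star(1-q_w)=1-\delta\star q_w$ handles the branch ambiguity of $h_2^{-1}$), and Jensen's inequality with the convexity of $g$ yields $H(X|W)\geq g(s)$; the needed convexity is precisely Lemma~\ref{lemm:fuConvex}, which the paper already imports from \cite{Wyner1973AtheoremEntropyBinary}, so you may invoke it rather than redo the second-derivative computation. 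Your achievability construction — binary $W$ with conditional laws $\mathrm{Ber}(a)$ and $\mathrm{Ber}(1-a)$, $a=h_2^{-1}(s)$, and weight $\beta=(1-a-q)/(1-2a)$ — is the right one: the symmetry $h_2(\delta\star(1-a))=h_2(\delta\star a)$ makes Jensen tight while the mixture still produces the prescribed marginal $q$ and conditional entropy $s$, and the condition $s\leq h_2(q)$ is exactly what guarantees $\beta\in[0,1]$. The only point you should make explicit is the degenerate case $a=1/2$ (i.e., $s=1$), where your formula for $\beta$ is $0/0$; this occurs only when $q=1/2$, in which case both conditional laws are $\mathrm{Ber}(1/2)$ and any $\beta$ works, so the construction survives. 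Compared with the route through \cite{Witsenhausen:IT:75}, which develops general properties of $F_{T_{XY}}$ before specializing, your argument is more elementary and self-contained, at the cost of applying only to this doubly symmetric binary channel.
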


In the following, we use the properties of $F_{T_{XY}}(\mathbf{q},s)$ to derive the maximum rate achievable by CF in the  parallel binary symmetric MRC-D. From (\ref{eq:CFparallelRate}), we have
\begin{IEEEeqnarray}{rCl}
I(Y_R^1,Y_R^2;\hat{Y}_R|Z)&=&I(X_1^1\oplus N_1\oplus Z, X_1^2\oplus N_2;\hat{Y}_R|Z)\nonumber\\
&=&I(X_1^1\oplus N_1, X_1^2\oplus N_2;\hat{Y}_R|Z).\nonumber
\end{IEEEeqnarray}

Let us define $\bar{Y}_R^1\triangleq X_1^1\oplus N_1$ and $\bar{\mathbf{Y}}_R \triangleq (\bar{Y}_R^1,Y_R^2)$, and the channel input $\mathbf{X}\triangleq (X_1^1,X_1^2)$. Note that the distribution of $\bar{\mathbf{Y}}_R$, given by $\mathbf{q}^{(2)}$, determines the distribution of $\mathbf{X}$ via $T_{XY}^{(2)}$, the Kronecker product of $T_{XY}$ in (\ref{eq:TxyBinMat}). Then, we can rewrite the achievable rate for CF in (\ref{eq:CFparallelRate}) as follows
\begin{IEEEeqnarray}{rCl}\label{eq:CFrateProof}
R_{\mathrm{CF}} &=& \max_{{p(\mathbf{x})p(z)p(\bar{\mathbf{y}}_R|\mathbf{x})p(\hat{y}_R|\bar{\mathbf{y}}_R,z)}} I(\mathbf{X}, \hat{Y}_R|Z)\nonumber\\
&&\text{ s.t. } R_1 \geq I(\bar{\mathbf{Y}}_R; \hat{Y}_R|Z).
\end{IEEEeqnarray}

Next, we derive a closed form expression for $R_{\mathrm{CF}}$. First, we note that if $R_1\geq2$, we have $H(\bar{\mathbf{Y}}_R)\leq R_1$ and  $R_{\mathrm{CF}}=2(1-h(\delta))$, i.e., CF meets the cut-set bound.

For fixed $\mathbf{q}^{(2)}$, if $H(\bar{\mathbf{Y}}_R )\!\leq R_1\!\leq 2$, the constraint in (\ref{eq:CFrateProof}) is satisfied by any $\hat{Y}_R$, and can be ignored. Then, due to the Markov chain $\mathbf{X}-\bar{\mathbf{Y}}_R -\hat{Y}_RZ$, and the data processing inequality, the achievable rate is upper bounded by
\begin{IEEEeqnarray}{rCl}\label{eq:Bound1}
R_{\mathrm{CF}} &\leq& I(\mathbf{X}, \bar{\mathbf{Y}}_R )=H(\bar{\mathbf{Y}}_R )-2h(\delta)\leq R_1-2h(\delta).
\end{IEEEeqnarray}

\interdisplaylinepenalty=1000

For $R_1\leq H(\bar{\mathbf{Y}}_R)\leq 2$, the achievable rate by CF is upper bounded as follows.
\begin{IEEEeqnarray}{rCl}
R_{\mathrm{CF}} &\stackrel{(a)}{=}& \max_{{p(\mathbf{x})p(z)p(\bar{\mathbf{y}}_R|\mathbf{x})p(\hat{y}_R|\bar{\mathbf{y}}_R,z)}}  H(\mathbf{X})-H(\mathbf{X}| Z\hat{Y}_R)\nonumber\\
&&\text{s.t. } H(\bar{\mathbf{Y}}_R|Z \hat{Y}_R) \geq H(\bar{\mathbf{Y}}_R)-R_1\nonumber\\
&\stackrel{(b)}{\leq}& \max_{{p(\mathbf{x})p(\bar{\mathbf{y}}_R|\mathbf{x})p(w|\bar{\mathbf{y}}_R)}} H(\mathbf{X})-H(\mathbf{X}| W)\nonumber\\
&&\text{s.t. } H(\bar{\mathbf{Y}}_R|W) \geq H(\bar{\mathbf{Y}}_R)-R_1\nonumber\nonumber\\
&=& \max_{{p(\mathbf{x})p(\bar{\mathbf{y}}_R|\mathbf{x})}} [H(\mathbf{X})-\min_{{p(w|\bar{\mathbf{y}}_R)}}H(\mathbf{X}| W)]\nonumber\\
&&\text{s.t. } H(\bar{\mathbf{Y}}_R|W) \geq H(\bar{\mathbf{Y}}_R)-R_1\nonumber\\
&\stackrel{(c)}{=}& \max_{{p(\mathbf{x})p(\bar{\mathbf{y}}_R|\mathbf{x}),0\leq s\leq H(\bar{\mathbf{Y}}_R)}} [H(\mathbf{X})-F_{T_{XY}^{(2)}}(\mathbf{q}^{(2)},s)]\nonumber\\
&&\text{ s.t. } s\geq H(\bar{\mathbf{Y}}_R)-R_1\nonumber\\
&\stackrel{(d)}{=}& \max_{{p(\mathbf{x})p(\bar{\mathbf{y}}_R|\mathbf{x})}} [H(\mathbf{X})-F_{T_{XY}^{(2)}}(\mathbf{q}^{(2)},H(\bar{\mathbf{Y}}_R)-R_1)]\nonumber\\
&\stackrel{(e)}{\leq}& \max_{{p(\mathbf{x})p(\bar{\mathbf{y}}_R|\mathbf{x})}}  [H(\mathbf{X})-2F_{T_{XY}}(\mathbf{q}, (H(\bar{\mathbf{Y}}_R)-R_1)/2)]\nonumber,
\end{IEEEeqnarray}
where $(a)$ follows from the independence of $Z$ from $\mathbf{X}$ and $\bar{\mathbf{Y}}_R$; $(b)$ follows since optimizing over $W$  can only increase the value compared to optimizing over $(Z,\hat{Y}_R)$; $(c)$ follows from the definition of the conditional entropy bound in (\ref{eq:ConditionalEntropyBound}); $(d)$ follows from the nondecreasing monotonicity of $F_{T_{XY}^{(2)}}(\mathbf{q}^{(2)},s)$ in $s$; and $(e)$ follows from Lemma \ref{lem:FIneqN}, and $\mathbf{q}\triangleq[q,1-q]=\frac{1}{2}(\mathbf{q}_1+\mathbf{q}_2)$ is the average distribution of $\mathbf{Y}$. 

Now, we lower bound $H(\bar{\mathbf{Y}}_R)$. Since conditioning reduces entropy, we have $H(\bar{\mathbf{Y}}_R)\geq H(\bar{\mathbf{Y}}_R|N_1N_2)=H(\mathbf{X})$, and then we can lower bound $H(\bar{\mathbf{Y}}_R)$ as follows:
\begin{IEEEeqnarray}{rCl}\label{eq:HYInequality}
\max\{H(\mathbf{X}),R_1\}\leq H(\bar{\mathbf{Y}}_R)\leq  2.
\end{IEEEeqnarray}

Let $\nu \triangleq (\max\{H(\mathbf{X}),R_1\}-R_1)/2$. Then, we have
\begin{IEEEeqnarray}{rCl}
&R_{\mathrm{CF}}&\\
&\stackrel{(a)}{\leq}&
\max_{{p(\mathbf{x})p(\bar{\mathbf{y}}_R|\mathbf{x})}}  [H(\mathbf{X})-2F_{T_{XY}}(\mathbf{q}, \nu)]\nonumber,\\
&\stackrel{(b)}{=}& \max_{{p(\mathbf{x})p(\bar{\mathbf{y}}_R|\mathbf{x})}} [H(\mathbf{X})-2h_2(\delta\star h^{-1}_2(\nu))]\nonumber\\
&&\text{s.t. } 0\leq \nu \leq h_2(q) \nonumber\\
&\stackrel{(c)}{\leq}& \max_{{p(\mathbf{x})}} [H(\mathbf{X})-2h_2(\delta\star h^{-1}_2(\nu))]\nonumber\\
&&\text{s.t. } R_1\leq \max\{H(\mathbf{X}),R_1\} \leq2+R_1 \nonumber\\
&\stackrel{(d)}{=}& \max_{{p(\mathbf{x})}} [H(\mathbf{X})-2h_2(\delta\star h^{-1}_2(\max\{H(\mathbf{X}),R_1\}-R_1)/2))]\nonumber\\
&&\text{s.t. } \max\{H(\mathbf{X}),R_1\} \leq2 \nonumber\\
&\stackrel{(e)}{=}& \max_{0\leq\alpha\leq1} [2\alpha-2h_2(\delta\star h^{-1}_2((\max\{2\alpha,R_1\}-R_1)/2))]\nonumber\\
&&\text{s.t. } \max\{R_1,2\alpha\}\leq 2 \nonumber,
\end{IEEEeqnarray}
where  $(a)$ follows from (\ref{eq:HYInequality}) and $F_{T_{XY}}(\mathbf{q}, s)$ being non-decreasing in $s$; equality $(b)$ follows from the definition of $F_{T_{XY}}(q,s)$ for the binary symmetric channel;  $(c)$ follows since $h_2(q)\leq 1$, and we are enlarging the optimization domain; $(d)$ follows since there is no loss in generality by reducing the optimization set, since $\max\{H(\mathbf{X}),R_1\}\geq R_1$  and from (\ref{eq:HYInequality}), any $(\mathbf{X},\bar{\mathbf{Y}}_R)$ following $p(\mathbf{x},\bar{\mathbf{y}}_R)$ satisfy $\max\{H(\mathbf{X}),R_1\}\leq 2$; and $(e)$ follows from defining $H(\mathbf{X})\triangleq2\alpha$, for $0\leq \alpha\leq 1$.

Then, for $2\alpha\leq R_1$, we have
\begin{IEEEeqnarray}{rCl}\label{eq:RFBound1}
R_{\mathrm{CF}}&\leq&  \max_{0\leq\alpha\leq R_1/2}[ 2\alpha-2h_2(\delta)]=R_1-2h_2(\delta),
\end{IEEEeqnarray}
and for $2\alpha> R_1$, we have
\begin{IEEEeqnarray}{rCl}\label{eq:RFBound3}
R_{\mathrm{CF}}&\leq& \max_{R_1/2<\alpha\leq1}[ 2\alpha-2h_2(\delta\star h^{-1}_2(\alpha-R_1/2))].
\end{IEEEeqnarray}

Now, we solve (\ref{eq:RFBound3}). Let us define $f(u)\triangleq h_2(\delta\star h^{-1}_2(u))$ for $0\leq u\leq1$. Then, we have the following lemma from  \cite{Wyner1973AtheoremEntropyBinary}.
\begin{lemma}[\cite{Wyner1973AtheoremEntropyBinary}]\label{lemm:fuConvex}
Function $f(u)$ is convex for $0\leq u\leq1$.
 \end{lemma}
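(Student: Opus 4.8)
The plan is to establish the equivalent analytic statement that $f$ is twice differentiable on $(0,1)$ with $f''(u)\ge 0$. Since $(1-\delta)\star\beta=1-(\delta\star\beta)$ and $h_2(1-y)=h_2(y)$, the function $f$ is unchanged under $\delta\mapsto 1-\delta$, so I may assume $\delta\le 1/2$. First I would introduce the parametrization $\beta\triangleq h_2^{-1}(u)\in[0,1/2]$, so that $u=h_2(\beta)$ and $f(u)=h_2(\gamma)$ with $\gamma\triangleq\delta\star\beta$. Using $\frac{d}{du}=\frac{1}{h_2'(\beta)}\frac{d}{d\beta}$ together with $\frac{d\gamma}{d\beta}=1-2\delta\ge 0$, a direct differentiation gives $f'(u)=(1-2\delta)\,h_2'(\gamma)/h_2'(\beta)$. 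Since $h_2'(\beta)=\log_2\frac{1-\beta}{\beta}>0$ on the interior, differentiating once more shows that the sign of $f''(u)$ equals the sign of $(1-2\delta)\,h_2''(\gamma)\,h_2'(\beta)-h_2'(\gamma)\,h_2''(\beta)$.

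Substituting $h_2'(x)=\log_2\frac{1-x}{x}$ and $h_2''(x)=-\frac{1}{\ln 2}\frac{1}{x(1-x)}$ into this bracket and clearing the (positive) common factors reduces the claim $f''(u)\ge 0$ to the single scalar inequality
\[
M(\gamma)\ \ge\ (1-2\delta)\,M(\beta),\qquad M(x)\triangleq x(1-x)\log_2\tfrac{1-x}{x},
\]
for all $\beta,\delta\in[0,1/2]$ with $\gamma=\delta\star\beta$. I expect proving this inequality to be the main obstacle; the preceding reductions are routine bookkeeping.

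To handle it I would pass to the \emph{logit} variables $\tau\triangleq 1-2\beta$ and $t\triangleq 1-2\delta$, both in $[0,1]$, and use the standard identity $1-2\gamma=(1-2\beta)(1-2\delta)=t\tau$. In these variables $M(x)=\tfrac14 R(1-2x)$ with $R(\sigma)\triangleq(1-\sigma^2)\log_2\frac{1+\sigma}{1-\sigma}$, so the target inequality becomes $R(t\tau)\ge t\,R(\tau)$. Because $t\tau\le\tau$, this follows provided $\sigma\mapsto R(\sigma)/\sigma$ is nonincreasing on $(0,1)$. Writing $R(\sigma)/\sigma\propto (1-\sigma^2)\,\text{arctanh}(\sigma)/\sigma$ and differentiating, the monotonicity reduces to $\text{arctanh}(\sigma)\ge \frac{\sigma}{1+\sigma^2}$, which holds because $\text{arctanh}(\sigma)\ge\sigma\ge\frac{\sigma}{1+\sigma^2}$ on $[0,1)$. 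This chain yields $f''\ge 0$ and hence the convexity of $f$, while the boundary cases $\delta\in\{0,1/2\}$, where $f$ is linear or constant, are immediate.
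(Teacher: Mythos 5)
Your proof is correct. The paper itself offers no proof to compare against: this is the convexity lemma behind Mrs.\ Gerber's Lemma, and the paper simply cites Wyner and Ziv's 1973 paper (invoking, in the proof of Proposition~\ref{prop:galphaIncreasing}, even the strict statement $f''(u)>0$ from that reference). Your argument is thus a self-contained substitute for the citation. Its first half---the parametrization $\beta=h_2^{-1}(u)$, the formula $f'(u)=(1-2\delta)\,h_2'(\gamma)/h_2'(\beta)$ with $\gamma=\delta\star\beta$, and the reduction of $f''\ge 0$ to $M(\gamma)\ge(1-2\delta)M(\beta)$ with $M(x)=x(1-x)\log_2\frac{1-x}{x}$---retraces exactly the classical Wyner--Ziv second-derivative computation, and the bookkeeping is sound: your reduction to $\delta\le 1/2$ guarantees $h_2'(\beta)>0$ and $\gamma\le 1/2$, so the sign identification is valid. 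Where you genuinely deviate is the finish: the logit substitution $\sigma=1-2x$, the multiplicative identity $1-2\gamma=(1-2\delta)(1-2\beta)$ recasting the claim as $R(t\tau)\ge t\,R(\tau)$, and the reduction to monotonicity of $R(\sigma)/\sigma$ via the elementary bound $\operatorname{arctanh}(\sigma)\ge\sigma\ge\sigma/(1+\sigma^2)$. This is shorter and more transparent than the original argument, and as a bonus the strict inequality $\operatorname{arctanh}(\sigma)>\sigma$ for $\sigma>0$ yields $f''(u)>0$ whenever $0<\delta<1/2$, which is the stronger fact the paper quotes from Wyner--Ziv elsewhere; what the citation buys the authors instead is brevity and reliance on a standard result. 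Two cosmetic points you may wish to close: convexity on the closed interval $[0,1]$ also needs continuity of $f$ at $u\in\{0,1\}$ (immediate, since $h_2^{-1}$ and $\star$ are continuous), and the degenerate cases $t\tau=0$ in $R(t\tau)\ge t\,R(\tau)$ should be noted separately, since $R(\sigma)/\sigma$ is only considered on $(0,1)$ (also immediate, as $R(0)=0$).
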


 We define $g(\alpha)\triangleq \alpha-h_2(\delta\star h^{-1}_2(\alpha-R_1/2))$, such that $R_{\mathrm{CF}}\leq \max_{R_1/2<\alpha\leq1}2g(\alpha)$. We have that $g(\alpha)$ is concave in $\alpha$, since is a shifted version by $\alpha$, which is linear, of the composition of the concave function $-f(u)$ and the affine function $\alpha-R_1/2$.

\begin{proposition}\label{prop:galphaIncreasing}
$g(\alpha)$ is monotonically increasing for $R_1/2\leq \alpha\leq 1+R_1/2$.
\end{proposition}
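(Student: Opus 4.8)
The plan is to reduce the claimed monotonicity to a single derivative bound and then invoke the convexity of $f$ from Lemma \ref{lemm:fuConvex}. Writing $g(\alpha)=\alpha-f(\alpha-R_1/2)$, I would differentiate to get $g'(\alpha)=1-f'(\alpha-R_1/2)$. Since $\alpha\in[R_1/2,1+R_1/2]$ corresponds to $u\triangleq\alpha-R_1/2\in[0,1]$, proving that $g$ is monotonically increasing on this interval is equivalent to establishing $f'(u)\le 1$ for all $u\in[0,1]$. By Lemma \ref{lemm:fuConvex}, $f$ is convex on $[0,1]$, so $f'$ is non-decreasing there; hence it suffices to bound the single value $f'(1)=\sup_{u\in[0,1]}f'(u)$. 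Equivalently, since $g$ has already been noted to be concave, $g'$ is non-increasing, so it is enough to verify $g'(1+R_1/2)\ge 0$ at the right endpoint.

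To evaluate $f'$, I would set $v=h_2^{-1}(u)$ and apply the chain rule to $f(u)=h_2(\delta\star v)$, obtaining $f'(u)=(1-2\delta)\,h_2'(\delta\star v)/h_2'(v)$, where $h_2'(p)=\log\frac{1-p}{p}$ (the logarithm base is immaterial, as it cancels in the ratio). As $u\to 1$ we have $v\to 1/2$ and $\delta\star v\to 1/2$, so both $h_2'(\delta\star v)$ and $h_2'(v)$ vanish and $f'(1)$ is an indeterminate $0/0$ form. Resolving it by L'Hopital's rule in $v$, using $\tfrac{d}{dv}h_2'(\delta\star v)=h_2''(\delta\star v)(1-2\delta)$ and $\tfrac{d}{dv}h_2'(v)=h_2''(v)$ together with $h_2''(p)=-[p(1-p)\ln 2]^{-1}$, gives $\lim_{v\to 1/2}h_2'(\delta\star v)/h_2'(v)=(1-2\delta)$, and therefore $f'(1)=(1-2\delta)\cdot(1-2\delta)=(1-2\delta)^2$.

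Finally, since $(1-2\delta)^2\le 1$ for every $\delta\in[0,1]$, the convexity of $f$ yields $f'(u)\le f'(1)=(1-2\delta)^2\le 1$ for all $u\in[0,1]$, so that $g'(\alpha)=1-f'(\alpha-R_1/2)\ge 0$ throughout $[R_1/2,1+R_1/2]$; this establishes that $g$ is monotonically increasing on this interval. The main obstacle is the indeterminate $0/0$ limit defining $f'(1)$, which must be handled with care; as a cross-check one may instead use the Taylor expansion $h_2'(1/2-\epsilon)\approx 4\epsilon/\ln 2$ near $p=1/2$, which gives the same value $f'(1)=(1-2\delta)^2$ and confirms that the endpoint derivative $g'(1+R_1/2)=1-(1-2\delta)^2=4\delta(1-\delta)$ is nonnegative.
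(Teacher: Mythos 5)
Your proof is correct, but it follows a genuinely different route from the paper's. The paper argues via second derivatives and endpoint values: it writes $g''(\alpha)=-f''(\alpha-R_1/2)$, invokes Wyner's result that $f''(u)>0$ on $(0,1)$ to conclude that the maximum of $g$ must lie at an endpoint of $[R_1/2,\,1+R_1/2]$, and then compares $g(R_1/2)=R_1/2-h_2(\delta)$ with $g(1+R_1/2)=R_1/2$ to place the maximum at the right endpoint, from which monotonicity of the concave $g$ follows. You instead bound the first derivative directly: you compute the endpoint slope $f'(1)=(1-2\delta)^2$ (the slope of the Mrs.\ Gerber function at $u=1$, obtained correctly via the chain rule and L'H\^opital's rule), and then convexity of $f$ (Lemma \ref{lemm:fuConvex}) gives $f'(u)\le f'(1)\le 1$ on $[0,1]$, hence $g'(\alpha)\ge 1-(1-2\delta)^2=4\delta(1-\delta)\ge 0$ throughout the interval. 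The comparison is in fact favorable to your version: the paper's step asserting that the maximum of $g$ is attained either where $f''(\alpha-R_1/2)=0$ or at the boundary is not a valid criterion for a strictly concave function (an interior maximum is characterized by $g'=0$, not by the vanishing of $f''$, and the ordering of the two endpoint values alone cannot exclude an interior peak, since a concave function can rise above both endpoints and come back down while still ending higher than it started). Closing that gap requires exactly the bound $f'\le 1$ that your argument establishes. What your route costs is the explicit limit computation of $f'(1)$; what it buys is a quantitative lower bound $g'\ge 4\delta(1-\delta)$, reliance only on convexity of $f$ rather than on the strict positivity of $f''$, and an argument with no logical gap.
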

\begin{proof}
Using the chain rule for composite functions, we have
 \begin{IEEEeqnarray}{rCl}
 \frac{d^2g(\alpha)}{d\alpha^2}=-f''(\alpha-R_1/2),
 \end{IEEEeqnarray}
where $f''(u)\triangleq d^2f/du^2(u)$.

Since $g(\alpha)$ is convex, and is defined over a convex region, it follows that its unique maximum is achieved either for $f''(\alpha-R_1/2)=0$, or at the boundaries of the region. It is shown in \cite[Lemma 2]{Wyner1973AtheoremEntropyBinary} that $f''(u)>0$ for $0< u<1$. Therefore, the maximum is achieved either at $u=0$ or at $u=1$, or equivalently, for $\alpha=R_1/2$ or $\alpha=1+R_1/2$. Since $g(R_1/2)=R_1/2-h_2(\delta)$ and $g(1+R_1/2)=R_1/2$, i.e., $g(R_1/2)<g(1+R_1/2)$, it follows that $g(\alpha)$ is monotonically increasing in $\alpha$ for $R_1/2\leq \alpha\leq 1+R_1/2$.
\end{proof}

From Proposition \ref{prop:galphaIncreasing} if follows that for $R_1/2\leq\alpha\leq 1$, $g(\alpha)$ achieves its maximum at $\alpha=1$. Then, for $2\alpha> R_1$, we have
\begin{IEEEeqnarray}{rCl}\label{eq:RFBound2}
R_{\mathrm{CF}}&\leq&  2(1-h_2(\delta\star h^{-1}_2(1-R_1/2))).
\end{IEEEeqnarray}

Thus, from (\ref{eq:RFBound1}) and (\ref{eq:RFBound2}),  for $R_1\leq H(\bar{\mathbf{Y}}_R)$ we have 
\begin{IEEEeqnarray}{rCl}\label{CFparallBound}
R_{\mathrm{CF}}&\leq& 2\max\{ R_1/2-h_2(\delta),1-h_2(\delta\star h^{-1}_2(1-R_1/2))\}\nonumber\\
&=&2(1-h_2(\delta\star h^{-1}_2(1-R_1/2))),
\end{IEEEeqnarray}
where the equality follows from Proposition \ref{prop:galphaIncreasing} by noting that the first element in the maximum coincides with $g(R_1/2)=R_1/2-h_2(\delta)$, and the second one coincides with $g(1)$.

Finally, $R_{\mathrm{CF}}$ is upper bounded by the maximum over the joint distributions satisfying $H(\bar{\mathbf{Y}}_R)\leq R_1$  given in (\ref{eq:Bound1}), and the upper bound for the joint distributions satisfying $R_1\leq H(\bar{\mathbf{Y}}_R)$ given in (\ref{CFparallBound}). Since (\ref{eq:Bound1}) coincides with $g(R_1/2)$, (\ref{CFparallBound}) serves as an upper bound on $R_{\mathrm{CF}}$ when $R_1\leq H(\bar{\mathbf{Y}}_R)$.

Next, we show that the upper bound in (\ref{CFparallBound}) is achievable by considering the following variables
\begin{IEEEeqnarray}{rCl}
X_1^1&\sim&\text{Ber}(1/2),\quad X_1^2\sim\text{Ber}(1/2),\quad \hat{Y}_R=(\hat{Y}_R^1,\hat{Y}_R^2)\nonumber\\
\hat{Y}_R^1&=&Y_R^1\oplus Q_1,\; Q_1\sim\text{Ber}(h^{-1}_2(1-R_1/2)).\nonumber\\
\hat{Y}_R^2&=&Y_R^2\oplus Q_2,\; Q_2\sim\text{Ber}(h^{-1}_2(1-R_1/2)).\nonumber
\end{IEEEeqnarray}
Let $Q_i\sim\text{Ber}(\nu)$ for $i=1,2$. Then from the constraint in (\ref{eq:CFparallelRate}) we have
\begin{IEEEeqnarray}{rCl}
&I(Y_R^1&,Y_R^2;\hat{Y}_R|Z)\nonumber\\
&=&H(\hat{Y}_R|Z)-H(\hat{Y}_R|Y_R^1Y_R^2Z)\nonumber\\
&=&H(X_1^1\oplus N_1\oplus Q_1,X_1^2\oplus N_2\oplus Q_2)-H(Q_1,Q_2)\nonumber\\
&\stackrel{(a)}{=}&2-2h_2(\nu),\nonumber
\end{IEEEeqnarray}
where $(a)$ follows since $X_1^i\sim\text{Ber}(1/2)$, $i=1,2$ and from the independence of $Q_1$ and $Q_2$. We have $2h_2(\nu)\geq 2-R_1$, and thus, $\nu\geq h^{-1}_2(1-R_1/2)$.

Then, the achievable rate in (\ref{eq:CFparallelRate})  is given by
\begin{IEEEeqnarray}{rCl}
I(\mathbf{X};\hat{Y}_R|Z)&=&H(\hat{Y}_R|Z)-H(\hat{Y}_R|\mathbf{X}Z)\nonumber\\
&=&H(X_1^1\oplus N_1\oplus Q_1,X_1^2\oplus N_2\oplus Q_2)\nonumber\\
&&-H(N_1\oplus Q_1,N_2\oplus Q_2)\nonumber\\
&=&2-2h(\delta\star \nu)\nonumber\\
&\leq &2-2h_2(\delta\star h^{-1}_2(1-R_1/2)),\nonumber
\end{IEEEeqnarray}
where the last inequality follows from the bound on $\nu$. This completes the proof.

\section{Proof of Lemma \ref{lemm:pDCFRate}}\label{app:pDCFRate}

From (\ref{eq:capreg}), the achievable rate for the proposed pDCF scheme is given by
\begin{IEEEeqnarray}{rCl}
R_{\mathrm{pDCF}}&=&I(X_1^1;Y_R^1)+I(X_1^2;\hat{Y}_R|Z)\nonumber\\
&\text{s.t. }&R_1\geq I(X_1^1;Y_R^1)+I(Y_R^2;\hat{Y}_R|Z).\nonumber
\end{IEEEeqnarray}

First, we note that the constraint is always satisfied for the choice of variables:
\begin{IEEEeqnarray}{rCl}
&I(X_1^1&;Y_R^1)+I(Y_R^2;\hat{Y}_R|Z)\nonumber\\
&=&H(Y_R^1)-H(N_1)+H(X_1^2\oplus N_2\oplus Q)-H(Q)\nonumber\\
&=&1-h_2(\delta)+1-h_2(h^{-1}_2(2-h(\delta)-R_1))\nonumber\\
&=&R_1,
\end{IEEEeqnarray}
where $H(Y_R^1)=1$ since $X_1^1\sim\text{Ber}(1/2)$ and $H(X_1^2\oplus N_2\oplus Q)=1$ since $X_1^2\sim\text{Ber}(1/2)$.
Then, similarly the achievable rate is given by
\begin{IEEEeqnarray}{rCl}
R_{\mathrm{pDCF}}&=&I(X_1^1;Y_R^1)+I(X_1^2;\hat{Y}_R|Z)\nonumber\\
&=&H(Y_R^1)-H(N_1)+H(X_1^2\oplus N_2\oplus Q)-H(V\oplus Q)\nonumber\\
&=&1-h_2(\delta)+1-h_2(\delta\star h^{-1}_2(2-h(\delta)-R_1))\nonumber,
\end{IEEEeqnarray}
which completes the proof.

\section{Proof of Lemma \ref{lemm:pDCFGauss}}\label{app:pDCFGauss}
 By evaluating (\ref{eq:capreg}) with the considered Gaussian random variables, we get
\begin{IEEEeqnarray}{rCl}\label{eq:GaussR}
R&=&\frac{1}{2}\log\left(1+\frac{\alpha P}{\bar{\alpha}P+1}\right)\left(1+\frac{\bar{\alpha}P}{(1-\rho^2)+\sigma^2_q}\right) \nonumber\\
&&\text{s.t. }R_1\geq\frac{1}{2}\log\left(1+\frac{\alpha
P}{\bar{\alpha}P+1}\right)\left(1+\frac{\bar{\alpha}P+(1-\rho^2)}{\sigma^2_q}\right)\nonumber.
\end{IEEEeqnarray}
We can rewrite the constraint on $R_1$ as,
\begin{IEEEeqnarray}{rCl}
\sigma^2_q\geq
f(\alpha)\triangleq\frac{(P+1)(\bar{\alpha}P+1-\rho^2)}{2^{2R_1}(\bar{\alpha}P+1)-(P+1)}.
\end{IEEEeqnarray}
Since $R$ is increasing in $\sigma^2_q$, it is clear that the
optimal $\sigma_q^2$ is obtained by $\sigma^2_q=f(\alpha)$, where
$\alpha$ is chosen such that $f(\alpha)\geq 0$. It is easy to check
that $f(\alpha)\geq0$ for
\begin{IEEEeqnarray}{rCl}
\alpha\in\left[0,\min\left\{(1-2^{-2R_1})\left(1+\frac{1}{P}\right),1\right\}\right].
\end{IEEEeqnarray}
Now, we substitute $\sigma^2_q=f(\alpha)$ in (\ref{eq:GaussR}), and
write the achievable rate as a function of $\alpha$ as
\begin{IEEEeqnarray}{rCl} R(\alpha)=\frac{1}{2}\log G(\alpha),
\end{IEEEeqnarray}
where
\begin{IEEEeqnarray}{rCl}
G(\alpha)&\triangleq&\left(1+\frac{\alpha
P}{\bar{\alpha}P+1}\right)\left(1+\frac{\bar{\alpha}P}{(1-\rho^2)+f(\alpha)}\right)\nonumber\\
&=&\frac{2^{2R_1}(1+P)(1-\rho^2+\bar{\alpha}P)}{(1-\rho^2)2^{2R_1}(1+\bar{\alpha}P)+\bar{\alpha}P(1+P)}.
\end{IEEEeqnarray}
We take the derivative of $G(\alpha)$ with respect to $\alpha$:
\begin{IEEEeqnarray}{rCl}
G'(\alpha)\triangleq\frac{2^{2R_1} P (1+P) \left(1-\rho ^2\right)
\left(P+1-2^{2R_1} \rho ^2\right)}{\left[P
(1+P)\bar{\alpha}+2^{2R_1} (1+ \bar{\alpha}P) \left(1-\rho
^2\right)\right]^2}\nonumber.
 \end{IEEEeqnarray}
We note that if $\rho^2\geq2^{-R_1}(P+1)$, then $G'(\alpha)<0$, and
hence, $G(\alpha)$ is monotonically decreasing. Achievable rate $R$
is maximized by setting $\alpha^*=0$. When $\rho^2<2^{-R_1}(P+1)$,
we have $G'(\alpha)>0$, and hence
$\alpha^*=\min\left\{(1-2^{-R_1})\left(1+\frac{1}{P}\right),1\right\}=1$, since we have
$(1-2^{-R_1})\left(1+\frac{1}{P}\right)\geq(1+\frac{1-\rho^2}{P})>1$.

\section{Proof of Lemma \ref{lemm:CutCap}}\label{app:CutCap}

In order to characterize the capacity of the binary symmetric MRC-D, we find the optimal distribution of $(U,X_1,\hat{Y}_R)$ in Theorem \ref{th:capreg} for $Z\sim\text{Ber}(1/2)$.
First, we note that $U$ is independent of $Y_R$ since
\begin{IEEEeqnarray}{rCl}
I(U;Y_R)\leq I(X_1;Y_R)=0,
\end{IEEEeqnarray}
where  the inequality follows from the Markov chain $U-X_1-Y_R$, and the equality follows since for $Z\sim\text{Ber}(1/2)$ the channel output of the binary channel $Y_R=X_1\oplus N\oplus Z$ is independent of the channel input $X_1$ \cite{Cover:book}. Then, the capacity region in (\ref{eq:capreg}) is given by
\begin{IEEEeqnarray}{rCl}
\mathcal{C}=&\sup\,&\{I(X_1;\hat{Y}_R|UZ):R_1\geq I(Y_R;\hat{Y}_R|UZ)\},\nonumber
\end{IEEEeqnarray}
where the supremum is taken over the set of pmf's in the form
\begin{IEEEeqnarray}{rCl}
p(u,x_1)p(z)p(y_R|x_1,z)p(\hat{y}_R|y_R,u).\nonumber
\end{IEEEeqnarray}

Let us define $\bar{Y}\triangleq X_1\oplus N$. The capacity is equivalent to
\begin{IEEEeqnarray}{rCl}\label{eq:CapCutconst}
\mathcal{C}=&\sup\,&\{I(X_1;\hat{Y}_R|UZ):H(\bar{Y}|\hat{Y}UZ)\geq H(\bar{Y}|U)- R_1\},\nonumber
\end{IEEEeqnarray}
over the joint pmf's of the form
\begin{IEEEeqnarray}{rCl}
p(u,x_1)p(z)p(\bar{y}|x_1)p(\hat{y}_R|\bar{y},u,z),
\end{IEEEeqnarray}
where we have used the fact that $\bar{Y}$ is independent from $Z$.

For any joint distribution for which $0\leq H(\bar{Y}|U)\leq R_1$, the constraint in (\ref{eq:CapCutconst}) is also satisfied.  It follows from the Markov chain $X_1-\bar{Y}-\hat{Y}_R$ given $U,Z$, and the data processing inequality, that
 \begin{IEEEeqnarray}{rCl}
\mathcal{C}&\leq&\max_{p(u,x_1)}\{I(X_1;\bar{Y}|ZU):H(\bar{Y}|U)\leq R_1\}\\
&=&\max_{p(u,x_1)}\{H(\bar{Y}|U)-h_2(\delta):H(\bar{Y}|U)\leq R_1\}\nonumber\\
&\leq&R_1-h_2(\delta).\nonumber
\end{IEEEeqnarray}

We next consider the joint distributions for which $R_1\leq H(\bar{Y}|U)$. Let $p(u)=\text{Pr}[U=u]$ for $u=1,..,|\mathcal{U}|$, and we can write
\begin{IEEEeqnarray}{rCl}\label{eq:Inf1}
I(&X_1&;\hat{Y}_R|UZ)=H(X_1|U)-\sum_{u}p(u)H(X_1|\hat{Y}_RZu),
\end{IEEEeqnarray}
and
\begin{IEEEeqnarray}{rCl}\label{eq:Inf2}
I(Y_R;\hat{Y}_R|UZ)&\stackrel{(a)}{=}&I(\bar{Y};\hat{Y}_R|UZ)\nonumber\\
&\stackrel{(b)}{=}&H(\bar{Y}|U)-\sum_{u}p(u)H(\bar{Y}|\hat{Y}_RZu),
\end{IEEEeqnarray}
where $(a)$ follows from the definition of $\bar{Y}$, and $(b)$ follows from the independence of $Z$ from $\bar{Y}$ and $U$.

For each $u$, the channel input $X_1$ corresponds to a binary random variable $X_u\sim\text{Ber}(\nu_u)$, where $\nu_u\triangleq\text{Pr}[X_1=1|U=u]=p(1|u)$ for $u=1,...,|\mathcal{U}|$. The channel output for each $X_u$ is given by $\bar{Y}_u=X_u\oplus N$. We denote by $q_u\triangleq\text{Pr}[Y_u=1]=\text{Pr}[Y_R=1|U=u]$. Similarly, we define $\hat{Y}_u$ as $\hat{Y}_R$ for each $u$ value. Note that for each $u$, $X_u-\bar{Y}_u-\hat{Y}_u$ form a Markov chain.

Then, we have $H(X_1|u)=h_2(\nu_u)$ and $H(\bar{Y}|u) = h_2(\delta\star \nu_u)$.
We define $s_u\triangleq H(\bar{Y}|\hat{Y}_RZu)$, such that $0\leq s_u\leq H(\bar{Y}_u)$.
Substituting (\ref{eq:Inf1}) and (\ref{eq:Inf2}) in  (\ref{eq:CapCutconst}) we have
\begin{IEEEeqnarray}{rCl}
\mathcal{C}&=&\max_{p(u,x_1)p(\hat{y}_R|y_R,u)}[H(X_1|U)-\sum_{u}p(u)H(X_1|\hat{Y}_RZu)]\nonumber\\
&& \text{s.t. }R_1\geq H(\bar{Y}|U)-\sum_{u}p(u)H(\bar{Y}|\hat{Y}_RZu)\nonumber\\
&\stackrel{(a)}{=}&\max_{p(u,x_1), }[H(X_1|U)-\sum_{u}p(u)F_{T_{XY}}(q_u,s_u)]\nonumber\\
&& \text{s.t. }R_1\geq H(\bar{Y}|U) -\sum_{u}p(u)s_u,\,0\leq s_u\leq H(\bar{Y}_u)\nonumber\\
&\stackrel{(b)}{=}&\max_{p(u,x_1)}[H(X_1|U)-\sum_{u}p(u)h_2(\delta\star h_2^{-1}(s_u))]\nonumber\\
&&\text{s.t. }R_1\geq H(\bar{Y}|U)-\sum_{u}p(u)s_u\nonumber, \;0\leq s_u\leq H(\bar{Y}_u),\nonumber\\
&\stackrel{(c)}{\leq}&\max_{p(u,x_1)}H(X_1|U)-h_2\left(\delta\star h_2^{-1}\left(\sum_{u}p(u)s_u\right)\right)\nonumber\\
&&\text{s.t. }\sum_{u}p(u)s_u\geq H(\bar{Y}|U)-R_1,\nonumber
\end{IEEEeqnarray}
where $(a)$ follows from the definition of $F_{T_{XY}}(q,s)$ for channel $\bar{Y}_u=X_u\oplus N$, which for each $u$ has a matrix $T_{XY}$ as in (\ref{eq:TxyBinMat}), $(b)$ follows from the expression of $F_{T_{XY}}(q,s)$ for the binary channel $T_{XY}$ in Lemma \ref{lem:binaryF}, $(c)$ follows from noting that $-h_2(\delta\star h_2^{-1}(s_u))$ is concave on $s_u$ from Lemma \ref{lemm:fuConvex} and applying Jensen's inequality.  We also drop the conditions on $s_u$, which can only increase $\mathcal{C}$.

 Then, similarly to the proof of Lemma \ref{lemm:CF}, we have $H(\bar{Y}|U)\geq H(\bar{Y}|UV)=H(X_1|U)$, and we can upper bound the capacity as follows
\begin{IEEEeqnarray}{rCl}
\mathcal{C}&\leq&\max_{p(x_1,u)}\left[H(X_1|U)-h_2\left(\delta\star h_2^{-1}\left(\sum_{u}p(u)s_u\right)\right)\right]\nonumber\\
&&\text{s.t. }\sum_{u}p(u)s_u\geq \max\{H(X_1|U),R_1\}-R_1\nonumber\\
&\leq&\max_{0\leq \alpha\leq 1}\alpha-h_2(\delta\star h_2^{-1}( \max\{\alpha,R_1\}-R_1)),
\end{IEEEeqnarray}
where we have defined $\alpha\triangleq H(X_1|U)$.

The optimization problem can be solved similarly to the proof in Appendix \ref{app:CF} as follows.
If $0\leq \alpha\leq R_1$, we have $\bar{s}\geq0$ and \begin{IEEEeqnarray}{rCl}\label{eq:g_alpha_0}
\mathcal{C}\leq \max_{0\leq \alpha\leq R_1}\alpha-h_2(\delta)=R_1-h_2(\delta).
\end{IEEEeqnarray}
For $R_1\leq \alpha\leq 1$, we have
\begin{IEEEeqnarray}{rCl}\label{eq:g_alpha_R0}
\mathcal{C}\leq\max_{R_1\leq\alpha\leq 1}\alpha-h_2(\delta\star h_2^{-1}( \alpha-R_1)).
\end{IEEEeqnarray}

Then, it follows from a scaled version of Proposition \ref{prop:galphaIncreasing} that the upper bound is maximized for $\alpha=1$. Then, by noticing that (\ref{eq:g_alpha_0}) corresponds to the value of the bound in (\ref{eq:g_alpha_R0}) for $\alpha=R_1$, it follows that
 \begin{IEEEeqnarray}{rCl}
\mathcal{C}&\leq&1-h_2(\delta\star h_2^{-1}(1-R_1)).
\end{IEEEeqnarray}
This bound is achievable by CF. This completes the proof.

\section{Proof of the Cut-Set Bound Optimality Conditions}\label{app:CutSetProof}
Cases $1$ and $2$ are straightforward since under these
assumptions, the ORC-D studied here becomes a
particular case of the channel models in
\cite{ElGamal:abbas2005} and \cite{Kim2007CapacityClassRelay},
respectively.

To prove Case $3$ we use the following arguments. For any channel input distribution to the
ORC-D, we have
\begin{IEEEeqnarray}{rCl}
I(X_1;Y_R|Z)&=&H(X_1|Z)-H(X_1|Y_R,Z)\nonumber\\
&\geq&H(X_1)-H(X_1|Y_R)\\
&=&I(X_1;Y_R),\nonumber
\end{IEEEeqnarray}
where we have used the independence of $X_1$ and $Z$, and the fact that
conditioning reduces entropy. Then, the condition
$\max_{p(x_1)}I(X_1;Y_R)\geq R_1$, implies
$\max_{p(x_1)}I(X_1;Y_R|Z)\geq R_1$; and hence, the cut-set bound is
given by $R_{\mathrm{CS}}=R_2+R_1$, which is achievable by DF scheme.

In Case $4$, the cut-set bound is given by
$R_2+\min\{R_1,I(\bar{X}_1;\bar{Y}_R|Z)\}=R_2+I(\bar{X}_1;\bar{Y}_R|Z)$
since $R_1\geq H(\bar{Y}_R|Z)$. CF achieves the capacity by letting
 $X_1$ be distributed with $\bar{p}(x_1)$, and choosing
$\hat{Y}_R=\bar{Y}_R$. This choice is always possible as the CF
constraint
\begin{IEEEeqnarray}{rCl}
R_1\geq
I(\hat{Y}_R;\bar{Y}_R|Z)=H(\bar{Y}_R|Z)-H(\bar{Y}_R|Z,\hat{Y}_R)=H(\bar{Y}_R|Z),\nonumber
\end{IEEEeqnarray}
always holds. Then, the
achievable rate for CF is
$R_{\mathrm{CF}}=R_2+I(\bar{X}_1;\hat{Y}_R|Z)=R_2+I(\bar{X}_1;\bar{Y}_R|Z)$, which is the capacity.

\end{appendices}

\bibliographystyle{ieeetran}
\balance
\bibliography{ref}
\vspace{-1cm}
\begin{IEEEbiographynophoto}
{I\~naki Estella Aguerri}
I\~naki Estella Aguerri received his B.Sc. and M.Sc. degrees in Telecommunication engineering from Universitat Polit\`ecnica de Catalunya (UPC), Barcelona, Spain, in 2008 and 2011, respectively. He received his Ph.D. degree from Imperial College London, London, UK, in January 2015. In November 2014, he joined the Mathematical and Algorithmic Sciences Lab, France Research Center, Huawei Technologies Co. Ltd., Boulogne-Billancourt, France. From 2008 to 2013, he was research assistant at Centre Tecnològic de Telecomunicacions de Catalunya (CTTC) in Barcelona. His primary research interests are in the areas of information theory and communication theory, with special emphasis on joint source-channel coding and cooperative communications for wireless networks.
\end{IEEEbiographynophoto}
\vspace{-1cm}

\begin{IEEEbiographynophoto}
{Deniz Gunduz}
Deniz Gunduz [S'03-M'08-SM'13] received the B.S. degree in electrical and electronics engineering from METU, Turkey in 2002, and the M.S. and Ph.D. degrees in electrical engineering from NYU Polytechnic School of Engineering in 2004 and 2007, respectively. After his PhD he served as a postdoctoral research associate at the Department of Electrical Engineering, Princeton University, and as a consulting assistant professor at the Department of Electrical Engineering, Stanford University. He also served as a research associate at CTTC in Barcelona, Spain. Since September 2012 he is a Lecturer in the Electrical and Electronic Engineering Department of Imperial College London, UK. He also held a visiting researcher position at Princeton University from November 2009 until November 2011. 
 
Dr. Gunduz is an Associate Editor of the IEEE Transactions on Communications, and an Editor of the IEEE Journal on Selected Areas in Communications (JSAC) Series on Green Communications and Networking. He is the recipient of a Starting Grant of the European Research Council, the 2014 IEEE Communications Society Best Young Researcher Award for the Europe, Middle East, and Africa Region, and the Best Student Paper Award at the 2007 IEEE International Symposium on Information Theory (ISIT). He is serving as a co-chair of the IEEE Information Theory Society Student Committee, and the co-director of the Imperial College Probability Center. He is the General Co-chair of the 2016 IEEE Information Theory Workshop. He served as a Technical Program Chair of the Network Theory Symposium at the 2013 and 2014 IEEE Global Conference on Signal and Information Processing (GlobalSIP), and General Co-chair of the 2012 IEEE European School of Information Theory (ESIT). His research interests lie in the areas of communication theory and information theory with special emphasis on joint source-channel coding, multi-user networks, energy efficient communications and privacy in cyber-physical systems.

\end{IEEEbiographynophoto}

\newpage

\end{document}